\newtheorem{thm}{Theorem}
\newtheorem{cor}{Corollary}
\newtheorem{lem}{Lemma}
\newtheorem{claim}{Claim}
\newtheorem{assum}{Assumption}
\newtheorem{prop}{Proposition}
\newtheorem{defn}{Definition}
\newtheorem{exmp}{Example}
\newtheorem{rem}{Remark}
\newcommand{\R}{{\mathbb{R}}}
\newcommand{\G}{{\mathcal{G}}}
\newcommand{\V}{{\mathcal{V}}}
\newcommand{\Sp}{\hspace{0.05cm}}
\newcommand{\B}{\mathbf}
\newcommand{\BB}{\boldsymbol}
\newcommand{\ER}{\Xi_{\mathcal G}}
\newcommand{\PP}{\mathbb{P}}
\definecolor{PennBlue}{RGB}{001,031,091}
\definecolor{PennRed}{RGB}{153,0,0}
\definecolor{NewBlue}{RGB}{001,031,110}
\definecolor{NewRed}{RGB}{200,0,0}
\begin{document}

\title{Time-Delay Origins of  Fundamental Tradeoffs Between \\ Risk of Large Fluctuations and Network Connectivity}
\author{Christoforos Somarakis, Yaser Ghaedsharaf, Nader Motee
\thanks{The authors are with the Department of Mechanical Engineering and Mechanics, Lehigh University, Bethlehem, PA, 18015, USA.   \{\tt\small csomarak,ghaedsharaf,motee\}@lehigh.edu}.}%

\maketitle

\begin{abstract}
For the class of noisy time-delay linear consensus networks, we obtain explicit formulas for risk of large fluctuations of a scalar observable as a function of Laplacian spectrum and its eigenvectors. It is shown that there is an intrinsic tradeoff between risk and effective resistance of the underlying coupling graph of the network. The main implication is that increasing network connectivity, increases the risk of large fluctuations. For vector-valued observables, we obtain computationally tractable lower and upper bounds for joint risk measures. Then, we study behavior of risk measures for networks with specific graph topologies and show how risk scales with network size.  
\end{abstract}
\section{Introduction}
The notion of systemic risk describes fragility in interconnected networks that can result in global or cascading failures due to either relatively small disturbances at the subsystem level or larger and more malicious types of disruptions affecting the whole network. The challenges of dealing with risk spread through many areas of engineering and economy \cite{Tahbaz13,artzner97,artzner99,6248170,fouque13,rockafellar07}. Modern networked control systems bear innate  complexities in their structure and evolving dynamics. There are numerous networks, which are playing major roles in fabrics of our society, that are susceptible to exogenous stochastic uncertainties and  disturbances. Examples include platoon of autonomous vehicles in highways, synchronous power networks with integrated renewable sources, water supply networks, interconnected networks of financial institutions,  and air traffic networks. The recent crisis show specific vulnerabilities of modern networks due to weaknesses in their structures, e.g., air traffic congestion, power outages, the 2008 financial crisis, and other major disruptions. The costly and severe in magnitude outcomes of these network-related crisis motivate risk-oriented analysis and synthesis of networked control systems.

There are some recent lines of research that are close in spirit to the subject of this paper. In \cite{6248170}, the authors introduce novel concepts of resilience  for dynamic flow networks, where they study robustness of such network with respect to perturbations that reduce the flow functions on the links of the network. They also discuss a situation in which imposing finite density constraints on the links may result in cascaded failures. Other related works \cite{Bamieh12,zamp15,15acc,Siami16TACa}  utilize performance measures based on $\mathcal{H}_2$-norm to quantify quality of noise propagation in noisy linear consensus networks. The focus of these works are on linear consensus networks with no time delay. In \cite{Siami18TAC, Siamisparsification15}, the authors introduce a class of spectral systemic performance measures (which includes $\mathcal{H}_p$-norms as its special case) to synthesize networks  through growing and sparsification of links in noisy linear consensus networks in absence of time delay. In all these papers, performance measures  assess holistic and {\it macroscopic} features of the networks, such as total deviation from average or uncertainty volume of the output. In this paper, we show that exogenous uncertainties along with time delay creates unorthodox    behaviors in {\it microscopic} level that cannot be captured by the existing performance measures in literature.  

The class of time-delay linear consensus model \eqref{first-order} has been previously studied within different disciplines  \cite{PhysRevE.92.062816,PhysRevE.86.056114,kuechler92,PhysRevE.90.042135,yasernecsys16,yaserecc16}. In \cite{PhysRevE.92.062816}, the authors obtain expressions for the average size of fluctuations using modal behavior of the network. They also show how extreme fluctuations scale with network size  and argue that  distribution of those nodes with highest fluctuation takes specific forms.  The focus of \cite{PhysRevE.86.056114} is on synchronizability threshold and the scaling behavior of the width of the synchronization for linear consensus systems with uniform and multiple time delays, where their results and analysis are heavily relied on numerical simulations. Reference \cite{PhysRevE.90.042135}  proposes a method to study systems with multiple delays (scalar and multidimensional) in the presence of additive noise. Using Laplace transform, they obtain an integral formula for statistics of the state variables without providing closed form solutions for the integrals. Their results are applied to first and second order consensus networks with multiple delays and complete (all-to-all) graph topology. Extensive numerical solutions to draw conclusions are evident in all these works. In this paper, we adopt a rigorous terminology and develop tools based on notions of risk to quantify and measure the tendency of noisy time-delay linear consensus networks  towards exhibiting undesirable systemic events (e.g., large fluctuations and violating critical constraints). Our analysis is based on the fact that a time-delay linear consensus network can be decomposed into many one-dimensional systems and then analyzed using tools developed in  \cite{kuechler92,yasernecsys16,yaserecc16}. 

Systemic risk as a tool for decision making under uncertainty has been initially employed to study uncertainties in financial markets \cite{artzner97,artzner99,fouque13,rockafellar07,Rockafellar_Royset_2015}. More recently, cascading failures in interconnected financial institutes have been studied \cite{Tahbaz13,fouque13}. In \cite{sommiladnader16, somyasnader17}, the notion of value-at-risk has been applied to calculate probability of large fluctuations due to external disturbances in linear consensus networks with no time delay.  The present work is an outgrowth of \cite{somyasnader17} and \cite{sommiladnader16} in a number of ways. At first, we introduce and analyze two types of risk measures to assess possibility of large fluctuations in networks: one gauge risk in probability and the other one in expectation w.r.t. a utility (cost) function.  By exploiting functional properties of risk measures, such as (quasi-) convexity and monotonicity,  we calculate explicit formulas for them w.r.t. scalar network observables and show that these individual risk measures are spectral functions of Laplacian eigenvalues and eigenvectors.  Furthermore, it is shown that the risk measures are not monotone functions of Laplacian eigenvalues, i.e., increasing connectivity (through adding new feedback interconnections or increasing feedback gains) may increase risk of large fluctuations and sparsification (eliminating existing feedback loops) may decrease risk of large fluctuations. We do not observe this peculiar behavior when there is no time delay.  In addition to the fundamental limits reported to \cite{somyasnader17}, we characterize several Heisenberg-like inequalities and reveal intrinsic tradeoff between the risk measure and network connectivity. Our results assert that in presence of time delay more connectivity results in higher risk of large fluctuations that in turn may lead to violation of critical constraints and network fragility.  Finally, we extend our risk analysis to multiple network observables. In this case, the risk measures take vector values and one needs to calculate joint risk measures. In general, obtaining explicit closed form formulas for the joint risk measures is nearly impossible. However, one can employ numerical algorithms to compute values of such joint risk measures \cite{Genz:2009:CMN:1695822}. To tackle this challenge, we obtain tight lower and upper bounds for the joint risk measures and show that joint risk measures depend on the vector of individual risk measures, where we know how to calculate them explicitly. The big advantage of our bounds is that they are computationally tractable. Finally, we investigate behavior of risk for networks with particular graph topologies and show how the risk measures scale with network size. Several numerical examples support our theoretical findings. The proofs of all technical results,  in this work, are placed in the Appendix.

\allowdisplaybreaks

\section{Preliminaries}\label{sect: prelim}
Throughout the paper, plain and bold lowercase letters are reserved for scalar-valued and vector-valued variables, respectively. The $n$-dimensional Euclidean space with elements $\mathbf z=(\Sp z_1,\Sp \dots \Sp,z_n \Sp)^T$ is denoted by $\mathbb R^n$. The set of standard Euclidean basis for $\R^n$ is $\{\mathbf e_1,\ldots, \mathbf e_n \}$. The nonnegative orthant cone is defined by
\[\R^n_+ = \big\{ \mathbf{z} \in \R^n ~\big|~z_i \geq 0 ~~\textrm{for all}~~i=1,\ldots,n  \big\}.\]
For  $\BB x, \BB z \in \R^n$,  relation $\BB x \preceq \BB z$ holds iff $\BB z - \BB x \in \R^n_+$, and  $\BB x \prec \BB z$ iff  all elements of $\BB z - \BB x$ are strictly positive. The Euclidean vector and induced matrix norms is represented by $\| \cdot \|$, while absolute value operation for vectors is defined element-wise, i.e., $|\mathbf z|=[ \Sp |z_1|,\Sp \ldots \Sp,|z_n| \Sp ]^T$.  The $n \times 1$ vector of all ones is shown by $\mathbf 1_n$ and the $n \times n$ centering matrix is defined as $M_n := I_n - \frac{1}{n} \mathbf 1_n \mathbf 1_n^T$, where $\mathbf m_i$ is the $i$'th column of $M_n$. Representation of a vector in $\mathbb{R}^n$ with respect to linearly independent columns of matrix $Q=[\mathbf q_1~|~\dots~|~\mathbf q_n]$ is shown by $\mathbf z^Q=[\Sp z_1^Q,\Sp \dots \Sp,z_n^Q \Sp]^T$, where $z_i^Q=\mathbf q_i^T\mathbf z$.

\noindent {\it Algebraic Graph Theory:} 
 A weighted graph is the triple $\mathcal G=(\V, \mathcal{E},w)$, where $\V$ is the set of nodes, $\mathcal{E}$ is the set of links (edges),  and $w: \mathcal{E} \rightarrow \mathbb{R}_{+}$ is the weight function of the graph that assigns a real  number to every link.  
  
\begin{assum}\label{assum0}
For every pair of nodes $i,j \in \mathcal{V}$,  we assume: (i)~nonnegativity of link weights, i.e., $w(i,j) \geq 0$ 
(ii)~indirectness of links, i.e., $w(i,j)=w(j,i)$, and  (iii)~simpleness, i.e., $w(j,j)=0$. Moreover, this paper only considers connected graphs. 
\end{assum}

\noindent Let us denote $k_{ij}=w(i,j)$ for every $i,j \in \V$. The Laplacian matrix of graph $\G$ is an $n \times n$ matrix $L=[l_{ij}]$ whose elements are defined as 
\begin{equation}\label{eq: laplacian}
\displaystyle l_{ij} := \left\{\begin{array}{ccc}
-k_{ij} & \textrm{if} & i \neq j \\
 &  &  \\
k_{i1}+\ldots+k_{in}& \textrm{if} & i=j
\end{array}\right..
\end{equation} 
Laplacian matrix of a graph is  symmetric and positive semi-definite. Our connectivity assumption  implies that only the smallest Laplacian eigenvalue is equal to zero and all other ones are strictly positive, i.e., $0=\lambda_1 < \lambda_2 \leq \ldots \leq \lambda_n$. The eigenvector of $L$ corresponding to  $\lambda_k$  is denoted by $\mathbf q_k$. By letting $Q=[\Sp \mathbf q_1~|~\dots~|~\mathbf q_n \Sp]$, it follows that $L=Q \Lambda Q^T $ with $\Lambda=\mathrm{diag}[\Sp 0,\Sp \lambda_2,\Sp \ldots \Sp, \Sp\lambda_n \Sp]$. We normalize the Laplacian eigenvectors such that $Q$ becomes an orthogonal matrix, i.e., $Q^T Q = Q Q^T =I_n$ with  $\mathbf q_1=\frac{1}{\sqrt{n}} \mathbf 1_n$. 

\begin{rem}\label{rem: cntrincid}
Suppose that $E_n$ is the $n\times n$ diagonal matrix that is obtained by setting element of $(1,1)$ of $I_n$ to zero. For an $n\times n$ unitary matrix $Q$, standard graph theory results are:  $M_n=QE_nQ^T$ and $B_n^T B_n=n M_n=n Q E_n Q^T$, where $B_n$ is the $\frac{n(n-1)}{2}\times n $ complete incidence matrix.
\end{rem}

The notion of effective resistance is a widely used metric to measure how strongly a graph is connected in the context of graph theory \cite{klein1993, Mieghem:2011:GSC:1983675,Siami16TACa}. The effective resistance between every pair of nodes  $i$ and $j$ of a graph can be interpreted as the voltage difference between terminals $i$ and $j$ if we replace every link with an electrical resistance (whose value is equal to that link's weight) and let unit current source to be applied between the two nodes; we refer to  \cite{Mieghem11,klein1993} for more information. The total effective resistance is the sum of all effective resistances over all distinct pair of nodes in $\mathcal G$ and is equal to \cite{klein1993} 
\begin{equation}\label{eq: graphr}
\ER =n\sum_{i=2}^n\lambda_i^{-1},
\end{equation} where $\lambda_2,\ldots,\lambda_n$ are the non-zero Laplacian eigenvalues.

\noindent{\it Probability Theory:} 
A complete filtered probability space is the quadruple $(\Omega,\mathcal F, \{ \mathcal F_t\}_{t\geq 0}, \mathbb P)$, where $\{\mathcal F_t\}_{t\geq 0}$ is the filtration of a vector-valued  standard Brownian motion $\{\mathbf w_t\}_{t\geq 0}$, with $\mathbf w_t=\big(w_1(t),\ldots, w_n(t)\big)^T$ constructed out of  $n$ independent scalar Brownian motions. The space of all $\mathcal F$-measurable, $\mathbb R^q$-valued random variables  with finite second moment is denoted by $\mathbb L^2(\mathbb R^q)$. The stochastic process $\mathbf y=\{\mathbf y_t\}_{t\in \mathbb R}$ is $\{\mathcal {F}_{t}\}$-adapted if $\mathbf y_t$ is $\mathcal {F}_{t}$-measurable for  $ t \geq 0$. The multi-variable normal distribution is denoted by $\mathcal N(\boldsymbol \mu, \Sigma)$, where $\boldsymbol \mu \in \mathbb R^q$ is the expected value and  $\Sigma$ is the $q\times q$ positive semi-definite covariance matrix. A random variable $\mathbf y\sim \mathcal N(\boldsymbol \mu, \Sigma)$ attains a density function if and only if $\Sigma$ is positive definite. The error function is denoted as
$$\mathrm{erf}(z)=\frac{1}{\sqrt{\pi}}\int_{-z}^{z}e^{-t^2}\,dt.$$ For fixed $\varepsilon \in (0,1)$, the following quantity is introduced for the sake of convenience:
\begin{equation}\label{eq: sgaussiant} S_{\varepsilon}(\alpha)=\inf\bigg\{\delta \in \mathbb R_+~\bigg|~ \frac{1}{\sqrt{\pi}}\int_{-\delta-2\alpha}^{\delta}e^{-t^2}\,dt\geq 1-\varepsilon \bigg\} \end{equation}
and $S_{\varepsilon}(0)=\inf\big\{\delta \in \mathbb R_+~|~~\mathrm{erf}(\delta)\geq 1-\varepsilon\big\}$. 


\section{Problem Statement}\label{sect: model}

We consider the class of time-delay linear consensus networks that has found wide applications in engineering (e.g., clock synchronization in sensor networks, rendezvous in a space or time, and heading alignment in swarm robotics) and social sciences (e.g., agreement in opinion networks); we refer to \cite{84640184aa38444fb42776d48b3e313d,4118472} for more details. As a motivational application, we discuss a cooperative rendezvous problem where the control objective of a team of agents (e.g., ground vehicles or quadcopters) is to meet simultaneously at a pre-specified location known to all agents\footnote{Rendezvous in space is very similar to rendezvous in time by switching the role of time and location in the above explanation.}. In this rendezvous problem agents do not have a priori knowledge of the meeting time as it may have to be adjusted in response to unexpected emergencies or exogenous uncertainties (see \cite{84640184aa38444fb42776d48b3e313d} for a detailed discussion). Thus, all agents should agree on a rendezvous time by achieving consensus. This is usually done by each agent creating a state variable, say $x_i \in \R$, that represents its estimation of the rendezvous time. Each agent's initial estimate is set to a preferred time at which it would be able to rendezvous with other agents. The rendezvous (more generally consensus) dynamics for each agent evolves in time according to the following stochastic differential equation
\begin{equation}
d{x}_i(t) = u_i(t) dt + b \Sp d w_i(t) \label{TI-consensus-algorithm} 
\end{equation}
for all $i=1,\ldots,n$. Each agent's control input is $u_i \in \R$. The source of uncertainty is diffused in the network as additive stochastic noise,  the magnitude of which is uniformly scaled by the diffusion  $b\in \mathbb R$. The impact of uncertain environments on dynamics of agents are modeled by independent Brownian motions $w_1, \ldots, w_n$. In real-world situations, each agent experiences a time delay in accessing, computing, or sharing its own state information with itself and other neighboring agents \cite{84640184aa38444fb42776d48b3e313d}. It is assumed\footnote{This assumption has been widely used by other researchers as it allows analytical derivations of formulas. For rendezvous in time, it is a common practice in robotics labs to use identical communication modules for all agents, which results in a uniform communication time delay.  Moreover, in other related applications such as  heading alignments, rendezvous in space, and  velocity control of agents using  a Motion Capture (MoCap) system  to observe their spatial locations in indoor labs \cite{LUPASHIN201441,5569026,sipahi08}, all agents experience an identical time delay to access data through MoCap system.}  that all agents experience an identical time delay that is equal to a nonnegative number $\tau$. The control inputs are determined via a negotiation process by forming a linear consensus network over a communication graph using the following feedback law
\begin{equation}
u_i(t) ~=~\sum_{j=1}^{n} k_{ij} \big(x_j(t-\tau) - x_i(t-\tau)\big),\label{feedback-law}
\end{equation}
where $k_{ij}$ are non-negative feedback gains. Let us denote the state vector by $\mathbf x_t = \big(x_1(t),  \ldots,  x_n(t)\big)^{\rm T}$ and the vector of exogenous disturbance by $\mathbf w_t = \big(w_1(t),  \ldots,  w_n(t)\big)^{\rm T}$. The dynamics of the resulting closed-loop network  can be cast as a linear consensus network that is governed by the stochastic delay differential equation
\begin{equation}\label{first-order}
d\mathbf x_t=-L\Sp \mathbf x_{t-\tau} \Sp dt+ B \Sp d\mathbf w_t 
\end{equation}
for all $t\geq 0$, where the initial function $\mathbf x_t= \boldsymbol \phi(t)$ is statistically independent of $\mathbf w_0$ for all $t \in [-\tau, 0]$, and $B=b\, I_n$. The underlying coupling structure of the consensus  network \eqref{first-order} is a graph $\G=(\V,e, w)$ that satisfies Assumption \ref{assum0} and whose Laplacian matrix is $L$.  The underlying communication graph is considered to be time-invariant with a time-independent  Laplacian matrix as the network of agents aim to reach consensus on  a rendezvous time before each agent can perform motion planning to get to the meeting location.  Upon reaching consensus, a properly designed internal feedback control mechanism steers each agent towards the rendezvous location.  
  
\begin{assum}\label{assum1} The time delay satisfies $\tau<\frac{\pi}{2\lambda_n}$. 
\end{assum}

When there is no noise, i.e., $b=0$, it is already known \cite{Olfati04} that under Assumptions  \ref{assum0} and \ref{assum1}, states of all agents converge to average of all initial states $\frac{1}{n}\mathbf 1_n^T \boldsymbol \phi(0)$; whereas in presence of input noise,  state variables  fluctuate around the network average $\frac{1}{n}\mathbf 1_n^T \mathbf x(t)$. In order to quantify the quality of consensus (e.g., rendezvous) and its fragility features, we consider the vector of observables for network \eqref{first-order} 
\begin{equation}\label{output}
\mathbf y_t \Sp = \Sp C \mathbf x_t
\end{equation}
in which $C$ is a generic $q\times n$ output matrix and $\B y_t =\Sp (y_1(t),\ldots,y_q(t)\Sp )^T$. Some relevant examples of network observables are discussed in Section \ref{subs: statisticsofoutput}.  Assumption \ref{assum0} implies that one of the modes of network \eqref{first-order} is marginally stable with eigenvector $\mathbf 1_n$, while the rest are exponentially stable. The marginally stable mode, which corresponds to the zero eigenvalue of $L$, must be unobservable from the output \eqref{output} so that in the long run, $\mathbf y_t$ stays bounded. This can be guaranteed by imposing that $\mathbf 1_n$ belongs to the null space of $C$, i.e. 
$$\mathbf 1_n \in \ker(C).$$
According to \eqref{first-order}, when there is no input noise, the aforementioned condition implies $\B y_t \rightarrow 0$ as $t$ tends to infinity. Consequently, the presence of exogenous noise input makes the network observables to fluctuate around zero.  This implies that agents will not be able to agree upon an exact consensus state (e.g., rendezvous time). A practical resolution is to allow a tolerance interval for agents to concur. 

\begin{defn}\label{delta-consensus}
For observable \eqref{output} and some $\BB \delta \in \R^q_+$, dynamical network \eqref{first-order} can tolerate some degrees of disagreement and reach $\boldsymbol \delta$-consensus if the following stochastic event
\begin{equation}
\lim_{t\rightarrow\infty}|\B y_t| \preceq \BB \delta \label{constraint-1}
\end{equation}
 holds with high probability\footnote{ High probability means a probability larger than a predefined cut-off number close to one.}. 
\end{defn}

\noindent The notion of  $\boldsymbol \delta$-consensus means that all agents have agreement on all points in $
\big\{ \mathbf x \in \R^n ~\big|~ |C \mathbf x|\preceq \BB \delta \big\}$. We interpret this definition using our rendezvous example, where $C=M_n$. Suppose that event \eqref{constraint-1} holds with $\boldsymbol \delta=\delta \mathbf 1_n$. Then, the network of agents will achieve a $\delta$-consensus over the rendezvous time in the following sense: In steady state, the $i$'th agent is assured that by $x_i(t)+2\delta$ units of time, all other agents will arrive and meet each other in that time interval with high probability. Some alternative undesirable situations may also happen that we refer to as systemic events.
 


\begin{defn}\label{def-sys-event}
For a given vector $\BB \delta \in \R^q_+$,  network \eqref{first-order} with observable \eqref{output} will be prone to a systemic event if the probability of happening the following stochastic  event 
\begin{equation}
|y_1(t)| > \delta_1 ~\vee ~\ldots~\vee~ |y_q(t)| > \delta_q\label{constraint-2}
\end{equation}
in steady-state, is nonzero, where $\vee$ is the disjunction operator.  
\end{defn}

\noindent In the rendezvous example, when the network of agents undergo systemic event \eqref{constraint-2} with some probability greater than $\varepsilon >0$, it would be possible that the agents fail to reach $\boldsymbol \delta$-consensus and meet simultaneously during the rendezvous time interval.

The objective of this paper is to identify how time-delay makes dynamical network \eqref{first-order} vulnerable to systemic events \eqref{constraint-2} and quantify probability of such undesired stochastic events that may lead to network fragility, i.e., not being able to reach $\boldsymbol \delta$-consensus. 

\section{The State and Output Dynamics}\label{sec:stat}
 The dynamical network \eqref{first-order} satisfies all the well-posedness conditions of the classical theory of stochastic differential equations and it generates well-defined stochastic processes $\{\mathbf x_t\}_{t\geq -\tau}$ and $\{\mathbf y_t\}_{t\geq -\tau}$; we refer to \cite{evans13,Mohammed84} for more details. In this section, we discuss  statistics of these stochastic processes and derive results that will be useful in derivations of our results.

\subsection{Internal Stability} 
 The solution of the unperturbed system, i.e.,  $b=0$, can be described using the transition matrix of the system, which is the matrix solution  of  differential equation
\[ \dot{\Phi}_L(t)=-L \Phi_L(t)\]
with initial conditions $\Phi_L(t)=0$ if $t\in [-\tau,0)$ and $\Phi_L(0)=I_n$. According to \cite{Olfati04}, we have the following limit behavior as $t\rightarrow+\infty$  
\[\Phi_L(t)\longrightarrow  \frac{1}{n}\mathbf 1_n\mathbf 1_n^T\]
exponentially fast if and only if $\tau$ satisfies Assumption \ref{assum1}.   From Assumption \ref{assum0}, the transition matrix can be decomposed as 
\[\Phi_L(t)=Q \Sp \Phi(t)\Sp Q^T=Q \Sp \mathrm{diag}[\varphi_1(t), \ldots, \varphi_n(t)] \Sp Q^T,\hspace{0.3in}\] 
for $t\geq -\tau$. Namely, $\Phi(\cdot)$ is the principal solution of $\dot {\mathbf x}=-\Lambda \mathbf x(t-\tau)$ and $\Lambda$ the diagonal matrix of Laplacian eigenvalues. We remark that the diagonal elements of $\Phi(t)$ do not come in closed form for $\tau>0$, with the exception of $\varphi_1(t)\equiv 1$. Nevertheless, under Assumption \ref{assum1}, it can be shown that   
\begin{equation}\label{eq: functionf}
\int_0^{\infty}\varphi_i(t)\,dt=\tau f(\lambda_i\tau)~~\text{where}~~f(x)=\frac{1}{2x}\frac{\cos(x)}{1-\sin(x)},
\end{equation} is a function defined in $(0,\frac{\pi}{2})$ that takes values in $\mathbb R_+$. The plot of $f$ is provided in Figure \ref{fig1: functionf}. The rigorous derivation of formula \eqref{eq: functionf} is held in Appendix \ref{append: dde}.
\begin{figure}
  \vspace{0pt}
  \begin{center}
    \includegraphics[scale=0.8]{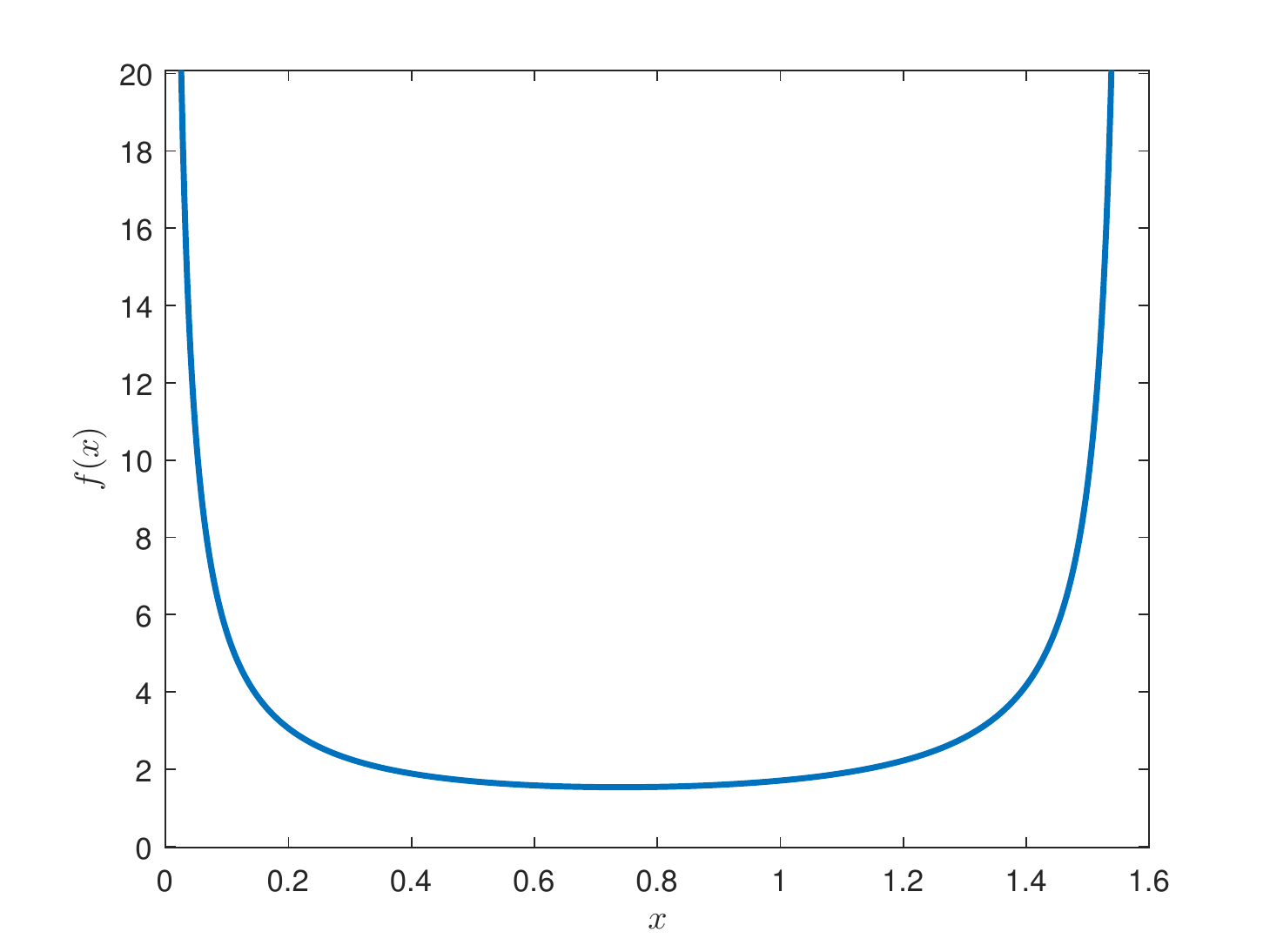}
  \end{center}
  \vspace{-10pt}
\caption{\small Plot of $f(x)=\frac{1}{2x}\frac{\cos(x)}{1-\sin(x)}$ over its domain $(0,\frac{\pi}{2})$ that guarantees internal stability of network \eqref{first-order}.}\label{fig1: functionf}
  \vspace{-10pt}
\end{figure}

\subsection{The Statistics of the Solution}
The solution of the perturbed network \eqref{first-order} is an $\{\mathcal F_t\}_{t\geq 0}$-measurable stochastic process $\{\mathbf x_t \}_{t\geq -\tau}$ given by 
\begin{equation*}
\mathbf x_t=\mathbf v_t+\int_0^{t}\Phi_L(t-s)B\,d\mathbf w_s
\end{equation*} in which  
\[\mathbf v_t=\Phi_L(t)\boldsymbol \phi(0)-L\int_{-\tau}^{0}\Phi_L(t-s-\tau)\boldsymbol\phi(s)\,ds.\] 
Moreover, it follows that $\mathbf x_t\sim \mathcal N(\BB \mu_t,\Sigma_t)$ with vector-valued mean $\BB \mu_t  =\mathbb{E}\big[\mathbf v_t\big]$ and covariance matrix
\begin{equation}\label{eq: generalaverageparameters}
\Sigma_t  = b^2 \int_0^{t}\Phi_L(s)\Phi_L^T(s)\,ds.
\end{equation}

\subsection{The Statistics of the Output}
Consequently, the statistics of the output can be quantified as
\begin{equation}\label{eq: outputdist}\mathbf y_t\sim \mathcal N(C \BB \mu_t,C \Sigma_t C^T).\end{equation} 
The following result characterizes output matrices for which $\mathbf y_t\in \mathbb L^2(\mathbb R^q),~t\geq 0$.

\begin{lem}\label{lem: boundofcov}  
$\mathbf 1_n \in \ker(C)$ if and only if 
\[ \sup_{t \geq 0}~\big\|C\Sigma_{t}C^T\big\|<\infty.\] 
\end{lem}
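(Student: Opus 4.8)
The plan is to diagonalize the covariance in the Laplacian eigenbasis and isolate the single mode responsible for unbounded growth. Since $Q$ is orthogonal and $\Phi(s)$ is diagonal, the decomposition $\Phi_L(s)=Q\,\Phi(s)\,Q^T$ gives $\Phi_L(s)\Phi_L^T(s)=Q\,\mathrm{diag}[\varphi_1^2(s),\ldots,\varphi_n^2(s)]\,Q^T$. Substituting into \eqref{eq: generalaverageparameters} and pulling $Q$ outside the integral, I would write
\[
\Sigma_t=Q\,\mathrm{diag}[\sigma_1(t),\ldots,\sigma_n(t)]\,Q^T,\qquad \sigma_i(t):=b^2\int_0^t\varphi_i^2(s)\,ds .
\]
Conjugating by $C$ then yields the rank-one expansion
\[
C\Sigma_t C^T=\sum_{i=1}^n\sigma_i(t)\,(C\mathbf q_i)(C\mathbf q_i)^T ,
\]
which exposes the role of each eigenvector $\mathbf q_i$ through the vector $C\mathbf q_i$.

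The next step is to classify the coefficients $\sigma_i(t)$. Because $\varphi_1(t)\equiv 1$, the first coefficient grows linearly, $\sigma_1(t)=b^2 t\to\infty$. For $i\geq 2$, the internal-stability result \cite{Olfati04} (valid under Assumptions \ref{assum0}--\ref{assum1}) gives $\Phi_L(t)\to\frac1n\mathbf 1_n\mathbf 1_n^T$ exponentially fast, equivalently $\varphi_i(t)\to 0$ exponentially; hence $\varphi_i^2$ is integrable and $\sigma_i(t)$ converges to a finite limit, so $\sup_{t\ge 0}\sigma_i(t)<\infty$. Recalling $\mathbf q_1=\frac{1}{\sqrt n}\mathbf 1_n$, the divergent term is precisely $\frac{b^2 t}{n}(C\mathbf 1_n)(C\mathbf 1_n)^T$, while the remaining sum $R_t:=\sum_{i=2}^n\sigma_i(t)(C\mathbf q_i)(C\mathbf q_i)^T$ is uniformly bounded in $t$.

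With this decomposition both implications follow. If $\mathbf 1_n\in\ker(C)$ then $C\mathbf 1_n=0$, the divergent term vanishes, and $C\Sigma_t C^T=R_t$ is a finite sum of uniformly bounded positive semidefinite matrices, so its norm is bounded in $t$. Conversely I would argue by contraposition: if $\mathbf 1_n\notin\ker(C)$ then $C\mathbf 1_n\neq 0$, and since $C\Sigma_t C^T\succeq 0$ with the residual $R_t\succeq 0$, the variational characterization of the spectral norm gives
\[
\big\|C\Sigma_t C^T\big\|=\lambda_{\max}\!\big(C\Sigma_t C^T\big)\ \geq\ \frac{b^2 t}{n}\,\|C\mathbf 1_n\|^2\ \longrightarrow\ \infty
\]
(assuming $b\neq 0$, so that the noise is actually present). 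Hence the supremum is finite if and only if $\mathbf 1_n\in\ker(C)$.

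The only delicate point is the integrability claim $\int_0^\infty\varphi_i^2(s)\,ds<\infty$ for $i\geq 2$: the finite first-moment formula \eqref{eq: functionf} does not by itself control the square, so I would lean on the exponential (not merely integrable) decay of the off-consensus modes supplied by the internal-stability analysis to guarantee square-integrability. The remaining ingredients---orthogonality of $Q$, positive semidefiniteness of each rank-one summand, and the linear growth $\sigma_1(t)=b^2 t$---are routine.
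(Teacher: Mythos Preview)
Your proof is correct and follows essentially the same route as the paper: diagonalize $\Sigma_t$ in the Laplacian eigenbasis, isolate the linearly growing mode $\sigma_1(t)=b^2t$ attached to $\mathbf q_1=\frac{1}{\sqrt n}\mathbf 1_n$, and use exponential decay of $\varphi_k$ for $k\ge 2$ to bound the remainder. Regarding your ``only delicate point,'' note that formula \eqref{eq: functionf} in the paper contains a typo---Lemma~\ref{lem: vmatrix} in the appendix actually computes $\int_0^\infty\varphi_i^2(t)\,dt=\tau f(\lambda_i\tau)$, so square-integrability is in fact given directly (and in any case your exponential-decay argument suffices).
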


The result of Lemma \ref{lem: boundofcov} implies that if $C\mathbf q_1=0$, then one gets $\overline{\mathbf y}=\lim_{t \rightarrow \infty} \mathbf y_t \in \mathbb L^2(\mathbb R^q)$ as both $\overline{\Sigma}=\lim_{t \rightarrow \infty} C\Sigma_{t}C^T$ and $\overline{\boldsymbol \mu}=\lim_{t \rightarrow \infty} C\boldsymbol \mu_t$ are well-defined. Moreover, $\overline{\mathbf y}\sim \mathcal N(0, \overline{\Sigma})$  if and only if $\mathbf 1_n \in \ker(C)$.

\section{Systemic Risk Measures}\label{sec: riskmeasures} 


There are several known quantifiers in the context of finance and economy to measure risk of a systemic event. In this paper, we utilize two notions of risk measures, which are relevant to the context of dynamical systems, to quantify safety margins of network \eqref{first-order} before  it experiences systemic event \eqref{constraint-2}  and reaches a $\delta$-consensus based on Definitions \ref{delta-consensus} and \ref{def-sys-event}.

\subsection{Risk of Large Fluctuations in Probability }
When network observables admit a well defined probability measure, we may be able to calculate risk of a systemic event that is expressed in terms of violation of a set of critical constraints; for example, stochastic event \eqref{constraint-2} in the rendezvous example. The value-at-risk measure, also referred to as risk in probability, determines maximum allowable fluctuation, which we also refer to it as safety margin, before network experiences a systemic event with a pre-specified  probability. In other words, this measure quantifies the smallest lower bound $\delta$ for systemic event $z(\omega)\in \mathbb L^2(\mathbb R)$ before probability of event ``$z(\omega)$ surpassing $\delta$'' exceeds a given threshold. 

\begin{defn}\label{eq: risk}
For a given probability threshold $\varepsilon\in [0,1)$, the value-at-risk or risk in probability is an operator $\mathcal R_{\varepsilon}: \mathbb L^2(\mathbb R)\rightarrow \mathbb R$ that is defined by 
\begin{equation*}
\mathcal R_{\varepsilon}(z)=\inf \Big\{\delta \in \mathbb R~\Big|~\mathbb P\big(z(\omega)>\delta\big)< \varepsilon \Big\}.
\end{equation*}  If the right hand-side has no solution, we set $\mathcal R_{\varepsilon}(z)=+\infty$.
\end{defn}
\noindent The value-at-risk measure enjoys several functional properties that will help us later on to derive fundamental limits on the best achievable risk value and tradeoffs between risk and network connectivity.  

\vspace{0.1in}
\begin{prop}\label{prop: cohrisk}
The measure $\mathcal R_{\varepsilon}: \mathbb L^2(\mathbb R)\rightarrow \mathbb R$ in Definition \ref{eq: risk} satisfies the properties:
 \begin{itemize}
\item[(i)] $\mathcal R_{\varepsilon}(z)=z$ if $z$ is independent of $\omega \in \Omega$.
\item[(ii)] $\mathcal R_{\varepsilon}(z+m)=\mathcal R_{\varepsilon}(z)+m$ for all $m\in \mathbb R$.
\item[(iii)]If  $z_1\leq z_2$, then $\mathcal R_{\varepsilon}(z_1)\leq \mathcal R_{\varepsilon}(z_2)$ 
\item[(iv)] $\mathcal R_{\varepsilon}(\alpha z)=\alpha \mathcal R_{\varepsilon}(z)$ for $\alpha>0$.
\item[(v)] $\mathcal R_{\varepsilon}(z)$ is a quasi-convex function of $z$.
\end{itemize}
\end{prop}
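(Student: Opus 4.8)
The plan is to reduce every claim to the elementary structure of the super-level set $A(z):=\{\delta\in\mathbb{R}\mid \mathbb{P}(z>\delta)<\varepsilon\}$, for which $\mathcal R_\varepsilon(z)=\inf A(z)$. The one fact that drives (i)--(iv) is that $A(z)$ is an \emph{up-set}: if $\delta\in A(z)$ and $\delta'>\delta$, then $\{z>\delta'\}\subseteq\{z>\delta\}$ forces $\mathbb{P}(z>\delta')\le\mathbb{P}(z>\delta)<\varepsilon$, so $\delta'\in A(z)$. Hence $A(z)$ is a half-line and $\mathcal R_\varepsilon(z)$ is its left endpoint, so each of (i)--(iv) becomes a matter of tracking how $A(z)$ transforms. (I take $\varepsilon\in(0,1)$; at $\varepsilon=0$ the set $A(z)$ is always empty and $\mathcal R_0\equiv+\infty$ by the stated convention, which is why (i) needs $\varepsilon>0$.)

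For (i), a deterministic $z\equiv c$ gives $\mathbb{P}(z>\delta)=1$ for $\delta<c$ and $=0$ for $\delta\ge c$; since $0<\varepsilon<1$ this yields $A(c)=[c,\infty)$ and $\mathcal R_\varepsilon(c)=c$. For (ii), the change of variable in $\mathbb{P}(z+m>\delta)=\mathbb{P}(z>\delta-m)$ shows $A(z+m)=A(z)+m$, so the infimum shifts by $m$. For (iii), $z_1\le z_2$ gives $\{z_1>\delta\}\subseteq\{z_2>\delta\}$, hence $\mathbb{P}(z_2>\delta)<\varepsilon\Rightarrow\mathbb{P}(z_1>\delta)<\varepsilon$, i.e.\ $A(z_2)\subseteq A(z_1)$ and $\mathcal R_\varepsilon(z_1)\le\mathcal R_\varepsilon(z_2)$. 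For (iv), for $\alpha>0$ the identity $\mathbb{P}(\alpha z>\delta)=\mathbb{P}(z>\delta/\alpha)$ gives $A(\alpha z)=\alpha\,A(z)$, and scaling by a positive constant commutes with the infimum. Each is a one-line verification once the up-set picture is fixed.

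The genuinely delicate item is (v). I would first record that $\mathcal R_\varepsilon(z)=\inf\{\delta\mid F_z(\delta)>1-\varepsilon\}$ is the upper $(1-\varepsilon)$-quantile and attempt quasi-convexity through its sub-level sets $\{z\mid\mathcal R_\varepsilon(z)\le c\}=\{z\mid \mathbb{P}(z>\delta)<\varepsilon \text{ for every } \delta>c\}$, which must be shown convex. The natural bound for $z=\theta z_1+(1-\theta)z_2$ is $z\le\max(z_1,z_2)$, giving $\{z>\delta\}\subseteq\{z_1>\delta\}\cup\{z_2>\delta\}$; I expect this to be the main obstacle, since the union bound only delivers $\mathbb{P}(z>\delta)<2\varepsilon$ and no purely order-theoretic refinement recovers $<\varepsilon$ --- a translation-invariant (property (ii)) quasi-convex functional would have to be convex, yet the quantile is not sub-additive on all of $\mathbb{L}^2(\mathbb{R})$. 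The way I would actually close (v) is to use the distributional structure carried by the observables: by \eqref{eq: outputdist} the relevant $z$ are Gaussian, and for $z\sim\mathcal N(\mu,\sigma^2)$ one has $\mathcal R_\varepsilon(z)=\mu+c_\varepsilon\,\sigma$ with $c_\varepsilon$ the standard-normal $(1-\varepsilon)$-quantile (the constant encoded by $S_\varepsilon$ in \eqref{eq: sgaussiant}) and $c_\varepsilon\ge0$ for $\varepsilon\le\tfrac12$. Since $\mu=\mathbb{E}[z]$ is linear in $z$ while $\sigma(z)=\big(\mathbb{E}[(z-\mathbb{E}z)^2]\big)^{1/2}$ is a seminorm, hence convex, $\mathcal R_\varepsilon$ is convex --- \emph{a fortiori} quasi-convex --- on the jointly Gaussian family in which the paper's explicit risk formulas live, and I would present (v) in this reduced form.
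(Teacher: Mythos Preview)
Your treatment of (i)--(iv) is essentially identical to the paper's: both arguments reduce each property to how the feasible set $A(z)=\{\delta:\mathbb P(z>\delta)<\varepsilon\}$ behaves under constants, translation, monotone comparison, and positive dilation. Your up-set framing just packages more tidily what the paper does by direct substitution.

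The divergence is at (v). The paper does \emph{not} restrict to Gaussians; instead it conditions on $\{x\le y\}$ versus $\{x>y\}$, writes
\[
\mathbb P(z>\delta)=\mathbb P(z>\delta\mid x\le y)\,p+\mathbb P(z>\delta\mid x>y)\,(1-p),
\]
bounds the two conditional terms by $\mathbb P(y>\delta\mid x\le y)$ and $\mathbb P(x>\delta\mid x>y)$ respectively (using $z\le\max(x,y)$ pointwise), and then invokes (iii). Your suspicion is well placed: this conditioning refines the naive union bound, but it still only delivers $\mathbb P(z>\delta)\le\mathbb P(\max(x,y)>\delta)$, i.e.\ $\mathcal R_\varepsilon(z)\le\mathcal R_\varepsilon(\max(x,y))$, which is \emph{not} $\max\big(\mathcal R_\varepsilon(x),\mathcal R_\varepsilon(y)\big)$ in general. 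Your observation that translation invariance together with quasi-convexity would force convexity is correct (shift each $z_i$ by $-\mathcal R_\varepsilon(z_i)$ and apply quasi-convexity at level zero), and since value-at-risk is famously not sub-additive on $\mathbb L^2(\mathbb R)$, part (v) as stated cannot hold in full generality. So the paper's conditioning argument carries precisely the gap you anticipated; your Gaussian reduction proves a narrower statement than the proposition claims, but it is rigorous and covers exactly the case the paper's subsequent risk formulas actually use.
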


\noindent Property (i) implies that if a $z$ includes no uncertainty, the risk measure remains invariant.  Property (ii) means that $\mathcal R_{\varepsilon}$ is not affected when a fixed constant is added to the random variable. Property (iii) explains that if a random variable $z_1$ takes smaller values than $z_2$ almost surely, then the risk of the former is smaller than the risk of the latter. Next, (iv) is a positive homogeneity property and implies that if we scale a random variable (i.e., scale the uncertainty with it) then the risk will scale accordingly. Finally, property (v) is a weak convexity property of the risk measure and indicates that the risk of a convex combination of two random variables cannot get worse than the maximum of the risk of these random variables. 

\subsection{Risk of Large Fluctuations in Expectation} 

The second type of risk measures that are relevant to the context of dynamical systems is quantification of risk through statistical expectation of a utility or cost function. In several applications in economics, game theory, and decision and control theory, we evaluate violation of critical constraints against some meaningful utility or cost functions. These utility functions reflect  perception of a decision maker towards a potential  outcome when there is uncertainty  \cite{follmer11}.
In this setting, utility function can be interpreted as a distance function to measure how far a random variable is from a predefined threshold. This interpretation is  particularly useful in the context of dynamical systems as one can employ energy or entropy based utility functions to quantify risk.  

\begin{defn}\label{eq: riskmom}
Suppose that  utility function $v: \R\rightarrow \R$ is  convex and monotonically increasing. For a given threshold $\varepsilon$, the risk in expectation is an operator $\mathcal T_{\varepsilon}:\mathbb L^2(\mathbb R)\rightarrow \mathbb R$ that is defined by 
\begin{equation*}
\mathcal T_\varepsilon(z)=\inf\Big\{\delta \in \mathbb R ~\Big|~ \mathbb{E}\big[v \big(z(\omega)-\delta \big) \big ]< v(\varepsilon)\Big\}.
\end{equation*}  If the right hand-side has no  solution, we set $\mathcal T_{\varepsilon}(z)=+\infty$.
\end{defn}

The risk in expectation quantifies possibility of large fluctuations by evaluating expected value of a predefined utility or cost function. A trivial choice for utility function is $v(z)=z$, where the value of risk is the smallest number for which inequality $\mathbb{E}[z(\omega)-\delta] \leq \varepsilon$ holds. For more general utility functions, we can still relate to this simple inequality by applying Jensen's inequality and monotonicity property of the utility function. It is straightforward to show that  $\mathcal T_\varepsilon$ is positive if and only if $\mathbb E\big[v(z)\big]> v(\varepsilon)$ (we refer to \cite{follmer11} for more details). This helps us to define the following acceptance set
\begin{equation}\label{eq: acceptance}
\mathbb A= \Big\{~z \in \mathbb L^2(\mathbb R)~\Big| ~\mathcal T_{\varepsilon}(z) < 0 ~\Big\}
\end{equation}
that contains all random variables in $\mathbb L^2(\mathbb R)$ whose expected cost $\mathbb E[v(z)]$ is less than cost threshold $v(\varepsilon)$. In the following, we discuss two important and relevant utility functions for risk analysis in dynamical systems. 

%
%

\vspace{0.1in}

\noindent {\it Quadratic Utility:} For scalar random variables, the quadratic utility function is $v(z)=z^2$ over nonnegative real numbers. The value of risk in this case represents the smallest value $\delta$ for which random variable $z$ stays in  a ball with center $\delta$ and radius $v(\varepsilon)$. Depending on the application, quadratic form of the output \eqref{output} relates to the potential or kinetic energies of the system \cite{Siami16TACa}. 

\vspace{0.1in}

\noindent {\it Exponential Utility:} For scalar random variables, the exponential utility function is $v(z)=e^{\beta z}$ for some parameter $\beta>0$. This function is monotonically increasing over all real numbers and close in spirit to the entropic risk measure \cite{follmer11}. 

\vspace{0.1in}
\begin{prop}\label{prop: riskmom} The risk in expectation $\mathcal T_{\varepsilon}:\mathbb L^2(\mathbb R)\rightarrow \mathbb R$ as in Definition \ref{eq: riskmom}  satisfies the following properties:
\begin{itemize}
\item[(i)] $\mathcal T_{\varepsilon}(z+m)=T_{\varepsilon}(z)+m$ for all $m\in \mathbb R$. 
\item[(ii)] $\mathcal T_{\varepsilon}(z)$ is a convex function of $z\in\mathbb A$.
\end{itemize}
\begin{itemize}
\item[(iii)] $z_1\leq z_2$ implies $\mathcal T_{\varepsilon}(z_1)\leq \mathcal T_{\varepsilon}(z_2)$.
\item[(iv)] For all nonnegative  $z\in \mathbb A$, it follows that 
\[ \mathcal T_{\varepsilon}(\alpha z) \geq \alpha \mathcal T_{\varepsilon}(z) +(\alpha-1)\varepsilon \]
for all $\alpha\geq 1$, and for all nonpositive $z\in \mathbb A$, we have
\[ \mathcal T_{\varepsilon}(\alpha z) \leq \mathcal \alpha \mathcal T_{\varepsilon}(z)-(\alpha-1)\varepsilon\]
for all $\alpha \in [0,1]$.  
\end{itemize}
\end{prop}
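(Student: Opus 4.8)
The plan is to work throughout with the auxiliary function $g_z(\delta) := \mathbb{E}\big[v(z-\delta)\big]$. Since $v$ is increasing, the map $\delta \mapsto v(z-\delta)$ is non-increasing, so $g_z$ is non-increasing in $\delta$; since $v$ is convex and $\delta \mapsto z-\delta$ is affine, $g_z$ is convex (expectation preserves convexity). Hence $\mathcal{T}_\varepsilon(z) = \inf\{\delta : g_z(\delta) < v(\varepsilon)\}$ is the left endpoint of a half-line, and whenever it is finite, continuity of $g_z$ (which I would justify by dominated convergence, using $z \in \mathbb{L}^2$ and local boundedness of the convex $v$) gives the boundary identity $g_z(\mathcal{T}_\varepsilon(z)) = v(\varepsilon)$. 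I will repeatedly use the elementary fact that, because $g_z$ is non-increasing, $g_z(\eta) \geq v(\varepsilon)$ forces $\mathcal{T}_\varepsilon(z) \geq \eta$, while $g_z(\eta) \leq v(\varepsilon)$ forces $\mathcal{T}_\varepsilon(z) \leq \eta$.

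Properties (i)--(iii) then follow quickly. For (i), the substitution $\delta \mapsto \delta - m$ in the defining set gives $\mathcal{T}_\varepsilon(z+m) = m + \mathcal{T}_\varepsilon(z)$ directly. For (iii), $z_1 \leq z_2$ and monotonicity of $v$ yield $g_{z_1}(\delta) \leq g_{z_2}(\delta)$ for every $\delta$, so the feasible set for $z_2$ is contained in that for $z_1$; taking infima over the larger set gives the smaller value, i.e. $\mathcal{T}_\varepsilon(z_1) \leq \mathcal{T}_\varepsilon(z_2)$. For (ii), fix $\lambda \in [0,1]$ and feasible points $\delta_i'$ for $z_i$ (so $g_{z_i}(\delta_i') < v(\varepsilon)$); convexity of $v$ applied pointwise to $\lambda(z_1-\delta_1') + (1-\lambda)(z_2-\delta_2')$, followed by taking expectations, gives $g_{\lambda z_1 + (1-\lambda)z_2}\big(\lambda\delta_1' + (1-\lambda)\delta_2'\big) < v(\varepsilon)$, so the mixed shift is feasible for the mixture; letting $\delta_i' \downarrow \mathcal{T}_\varepsilon(z_i)$ yields convexity on $\mathbb{A}$.

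The crux is (iv), which I would obtain from two tailored convex-combination identities. Write $\delta^\star = \mathcal{T}_\varepsilon(z)$, so $\delta^\star < 0$ on $\mathbb{A}$ and $g_z(\delta^\star) = v(\varepsilon)$. For nonnegative $z$ and $\alpha \geq 1$, put $\eta = \alpha\delta^\star + (\alpha-1)\varepsilon$ and use the identity $z - \delta^\star = \tfrac{1}{\alpha}(\alpha z - \eta) + \tfrac{\alpha-1}{\alpha}\varepsilon$, which is a genuine convex combination precisely because $\alpha \geq 1$. Convexity of $v$ and expectation give $v(\varepsilon) = g_z(\delta^\star) \leq \tfrac{1}{\alpha} g_{\alpha z}(\eta) + \tfrac{\alpha-1}{\alpha} v(\varepsilon)$, which rearranges to $g_{\alpha z}(\eta) \geq v(\varepsilon)$ and hence $\mathcal{T}_\varepsilon(\alpha z) \geq \eta$. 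For nonpositive $z$ and $\alpha \in [0,1]$, put $\eta = \alpha\delta^\star - (\alpha-1)\varepsilon$ and use $\alpha z - \eta = \alpha(z - \delta^\star) + (1-\alpha)(-\varepsilon)$; convexity of $v$ gives $g_{\alpha z}(\eta) \leq \alpha\, v(\varepsilon) + (1-\alpha)\,v(-\varepsilon)$, and since $v$ is increasing and $-\varepsilon \leq \varepsilon$ we have $v(-\varepsilon) \leq v(\varepsilon)$, whence $g_{\alpha z}(\eta) \leq v(\varepsilon)$ and therefore $\mathcal{T}_\varepsilon(\alpha z) \leq \eta$.

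The main obstacle --- and the step I would treat most carefully --- is (iv): identifying the exact affine shifts $\eta$ that turn the scaling $\alpha z$ into a legitimate convex combination, and checking that the sign hypotheses ($z \geq 0$ with $\alpha \geq 1$, resp. $z \leq 0$ with $\alpha \in [0,1]$) together with $z \in \mathbb{A}$ keep all arguments of $v$ inside its domain and make the comparison $v(-\varepsilon) \leq v(\varepsilon)$ point in the right direction. I would also record the continuity justification for the boundary identity $g_z(\mathcal{T}_\varepsilon(z)) = v(\varepsilon)$, since the whole argument of (iv) hinges on replacing the defining strict inequality by this equality at the optimum.
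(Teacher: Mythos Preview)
Your argument is correct. For (i)--(iii) you follow essentially the same path as the paper (substitution for translation invariance, monotonicity of $v$ for (iii), pointwise convexity of $v$ for (ii); the paper phrases (ii) via convexity of the acceptance set $\mathbb{A}$ combined with (i), but the content is the same).

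For (iv) you take a genuinely different route. The paper derives (iv) \emph{from} the already-established properties: using translation invariance (i), convexity of $\mathcal{T}_\varepsilon$ on $\mathbb{A}$ (ii), and the anchor $\mathcal{T}_\varepsilon(\varepsilon)=0$, it writes $z+(1-\alpha^{-1})\varepsilon=\alpha^{-1}(\alpha z)+(1-\alpha^{-1})\varepsilon$ and applies (ii) directly to $\mathcal{T}_\varepsilon$; the $\alpha\in[0,1]$ case is symmetric. You instead bypass (ii) and work one level down, applying convexity of $v$ itself to the identities $z-\delta^\star=\alpha^{-1}(\alpha z-\eta)+(1-\alpha^{-1})\varepsilon$ and $\alpha z-\eta=\alpha(z-\delta^\star)+(1-\alpha)(-\varepsilon)$, then invoking the boundary identity $g_z(\delta^\star)=v(\varepsilon)$. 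The paper's route is more modular---(iv) becomes a two-line corollary of (i)--(ii)---but it needs $\alpha z$ and the constant $\varepsilon$ to sit in (the closure of) $\mathbb{A}$ so that (ii) applies, which is precisely where the sign hypotheses on $z$ enter. Your route is self-contained and, as written, never uses the sign of $z$ at all (the convex-combination weights depend only on $\alpha$, and the comparison $v(-\varepsilon)\le v(\varepsilon)$ needs only $\varepsilon\ge 0$), so you in fact prove a slightly stronger statement; the price is the continuity justification for $g_z(\mathcal{T}_\varepsilon(z))=v(\varepsilon)$, which you rightly flag as the step to treat carefully.
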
  
\noindent Property (i) is the equivalent of Property (ii) in Proposition \ref{prop: cohrisk}. It explains that adding fixed numbers (with no uncertainty) does not affect the value of risk in expectation.  The convexity property of Property (ii) holds within the acceptance set \eqref{eq: acceptance} and can be generalized to any acceptance set that is convex \cite{follmer11}. Property (iii) means that risk in expectation is monotonically increasing. According to property (iv), the risk measure enjoys a weak positive homogeneity property that holds over a subset of $\mathbb A$.

%

\section{Risk Assessment of a Single Event}\label{subs: statisticsofoutput}
Suppose that the network observable is scalar
\begin{equation}\label{scalar-y}
y_t=\mathbf c^T\mathbf x_t
\end{equation}
with $\mathbf c \in \R^n$. According to \eqref{eq: outputdist}, it follows that $y_t\sim \mathcal N(\mu_t,\sigma_t^2)$ with  
\begin{eqnarray}
\mu_t \hspace{-0.2cm} &  = & \hspace{-0.2cm}\mathbb{E}\bigg[\sum_{k=1}^n c_k^Q\big[\varphi_k(t)\phi_k^Q(0)+\int_{-\tau}^{0}\varphi_k(t+s)\phi_k^Q(s)\,ds\big]\bigg] \label{eq: meanvaluescalar}\\
\sigma_t^2 \hspace{-0.2cm} &=& \hspace{-0.2cm} b^2\sum_{k=1}^n (c_k^Q)^2 \int_0^{t}\varphi_k^2(s)\,ds, \label{eq: standarddevscalar}
\end{eqnarray} 
where $\mathbf c^Q=[c_1^Q,\dots,c_k^Q]^T$  and $\boldsymbol \phi^Q(t)=[\phi_1^Q,\dots,\phi_k^Q]^T,t\in [-\tau,0]$ are the corresponding vectors $\mathbf c$ and $\boldsymbol\phi(t)$ represented in terms of the basis spanned by the columns of $Q=[\mathbf q_1 \Sp|\Sp\dots \Sp|\Sp \mathbf q_n]$, for $\mathbf q_k$ is the $k$'th Laplacian eigenvector, and $\{\varphi_k\}_{k\in \V}$ the diagonal elements of principal matrix $\Phi$.


 \begin{lem}\label{prop: uniqueness}  For every time instant $t$, functions $\mu_t=\mu_t(Q)$ and  $\sigma_t=\sigma_2(Q)$, as they are defined by \eqref{eq: meanvaluescalar} and \eqref{eq: standarddevscalar}, are eigenspace invariant, i.e., for all orthonormal matrices $Q_1$ and $Q_2$ for which $Q_1 \Lambda Q_1^T=Q_2 \Lambda Q_2^T$, we have 
\[\mu_t(Q_1)=\mu_t(Q_2)~~~\textrm{and}~~~ \sigma_t(Q_1)=\sigma_t(Q_2).\]
\end{lem}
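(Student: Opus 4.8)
The plan is to show that the two modal formulas \eqref{eq: meanvaluescalar} and \eqref{eq: standarddevscalar}, although written out in a fixed eigenbasis $Q$, are merely resummations of intrinsic matrix objects attached to $L$ alone. The single fact that drives everything is the following resummation identity: for \emph{any} scalar function $h$ defined on the spectrum of $L$,
\[ \sum_{k=1}^n h(\lambda_k)\,\mathbf q_k\mathbf q_k^T = Q\,\mathrm{diag}[h(\lambda_1),\ldots,h(\lambda_n)]\,Q^T = h(L), \]
which depends only on $L$ and not on the particular orthonormal eigenbasis $Q$. For an eigenvalue of multiplicity one this is immediate, since $\mathbf q_k\mathbf q_k^T$ is insensitive to the sign of $\mathbf q_k$; for a repeated eigenvalue $\lambda$ one groups the corresponding terms and observes that $\sum_{k:\,\lambda_k=\lambda}\mathbf q_k\mathbf q_k^T$ equals the orthogonal projector onto $\ker(L-\lambda I_n)$, a quantity fixed by $L$ and independent of the orthonormal basis chosen inside the eigenspace. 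This is exactly what the hypothesis $Q_1\Lambda Q_1^T=Q_2\Lambda Q_2^T$ supplies: both matrices are orthonormal eigenbases of the \emph{same} $L$, so they produce identical spectral projectors and hence identical $h(L)$.

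The second ingredient is that each modal transient $\varphi_k$ is a function of its eigenvalue alone. Indeed $\Phi(\cdot)=\mathrm{diag}[\varphi_1,\ldots,\varphi_n]$ is the principal solution of the \emph{decoupled} system $\dot{\mathbf x}=-\Lambda\,\mathbf x(t-\tau)$, so each $\varphi_k$ solves the scalar delay equation with parameter $\lambda_k$ under the common initial data; write $\varphi_k(t)=\varphi(t;\lambda_k)$. Hence $g_t(\lambda):=\int_0^t\varphi(s;\lambda)^2\,ds$ is a genuine scalar function of $\lambda$, and \eqref{eq: standarddevscalar} resums to
\[ \sigma_t^2 = b^2\sum_{k=1}^n (\mathbf q_k^T\mathbf c)^2\,g_t(\lambda_k) = b^2\,\mathbf c^T\Big(\sum_{k=1}^n g_t(\lambda_k)\,\mathbf q_k\mathbf q_k^T\Big)\mathbf c = b^2\,\mathbf c^T g_t(L)\,\mathbf c. \]
By the resummation identity $g_t(L)$ is $Q$-independent, whence $\sigma_t(Q_1)=\sigma_t(Q_2)$. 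Equivalently, this is just $\sigma_t^2=\mathbf c^T\Sigma_t\mathbf c$ with $\Sigma_t$ as in \eqref{eq: generalaverageparameters}, which is built from $\Phi_L$ and therefore carries no reference to $Q$.

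For the mean I would resum \eqref{eq: meanvaluescalar} the same way, now keeping $\varphi_k(t)$ rather than its squared integral. Collecting the products $c_k^Q\phi_k^Q(s)=(\mathbf q_k^T\mathbf c)(\mathbf q_k^T\boldsymbol\phi(s))=\mathbf c^T\mathbf q_k\mathbf q_k^T\boldsymbol\phi(s)$ gives
\[ \mu_t = \mathbb E\bigg[\mathbf c^T\Big(\sum_k\varphi_k(t)\mathbf q_k\mathbf q_k^T\Big)\boldsymbol\phi(0) + \int_{-\tau}^{0}\mathbf c^T\Big(\sum_k\varphi_k(t+s)\mathbf q_k\mathbf q_k^T\Big)\boldsymbol\phi(s)\,ds\bigg], \]
and each inner sum is precisely the transition matrix evaluated at the respective time, since $\Phi_L(t)=Q\Phi(t)Q^T=\sum_k\varphi_k(t)\mathbf q_k\mathbf q_k^T$. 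As $\Phi_L$ is the transition matrix of \eqref{first-order}, fixed by $L$ through its defining matrix equation with $\Phi_L(0)=I_n$ and zero initial history, it involves only $L$; pulling the deterministic transition matrices outside the expectation by linearity yields $\mu_t=\mathbf c^T\Phi_L(t)\,\mathbb E[\boldsymbol\phi(0)]+\int_{-\tau}^0\mathbf c^T\Phi_L(t+s)\,\mathbb E[\boldsymbol\phi(s)]\,ds$, manifestly independent of $Q$, so $\mu_t(Q_1)=\mu_t(Q_2)$.

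The only genuinely nontrivial point, and the one I would state with care, is the repeated-eigenvalue case of the resummation identity: that summing $\mathbf q_k\mathbf q_k^T$ over an eigenspace reproduces the orthogonal projector onto that eigenspace regardless of the orthonormal basis chosen within it. Everything else is linear algebra together with the observation that $\varphi_k$ depends on $k$ only through $\lambda_k$; the randomness of the initial history $\boldsymbol\phi$ poses no difficulty, since the expectation commutes with the deterministic linear maps $\mathbf c^T\Phi_L(\cdot)$.
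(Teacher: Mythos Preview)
Your argument is correct and rests on the same core observation as the paper's proof: the modal coefficients enter only through the eigenspace projectors $\sum_{k:\lambda_k=\lambda}\mathbf q_k\mathbf q_k^T$, which are determined by $L$ and not by the choice of orthonormal basis inside each eigenspace. The paper carries this out more computationally: it groups $\sigma_t^2=b^2\sum_k\zeta_k(t)\sum_{j=1}^{m_k}(c_j^Q)^2$ by distinct values $\zeta_k$, then for a simple eigenvalue notes that $\mathbf q_k$ is unique up to sign so $(c_k^{Q_1})^2=(c_k^{Q_2})^2$, and for a repeated one writes $Q_1^{\zeta_k}=Q_2^{\zeta_k}P$ with $P$ orthogonal and checks directly that $\mathbf c^TQ_1^{\zeta_k}(Q_1^{\zeta_k})^T\mathbf c=\mathbf c^TQ_2^{\zeta_k}PP^T(Q_2^{\zeta_k})^T\mathbf c$. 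Your spectral-calculus packaging $\sigma_t^2=b^2\mathbf c^Tg_t(L)\mathbf c$ and $\mu_t=\mathbf c^T\Phi_L(t)\,\mathbb E[\boldsymbol\phi(0)]+\int_{-\tau}^0\mathbf c^T\Phi_L(t+s)\,\mathbb E[\boldsymbol\phi(s)]\,ds$ is the same argument phrased one level up, and it has two modest advantages: it makes the $Q$-independence immediate once $h(L)$ is recognized, and it treats $\mu_t$ explicitly, whereas the paper's written proof handles only $\sigma_t$ in detail and leaves $\mu_t$ implicit.
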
 

\begin{thm}\label{thm: main0} Suppose that stochastic process $\mathbf x=\{\mathbf x_t\}_{t\in [-\tau,T]}$ is the solution of  \eqref{first-order} and the network observable is scalar as in \eqref{scalar-y}. Then, the risk of large  fluctuations in probability is given by 
\begin{equation}\label{eq: riskprob}
\mathcal R_{\varepsilon}(|y_t|)=\sqrt{2} \sigma_t S_{\varepsilon}\bigg(\frac{\mu_t}{\sqrt{2} \sigma_t}\bigg)+\mu_{t},
\end{equation}
where $S_{\varepsilon}(\cdot)$ is defined by  \eqref{eq: sgaussiant}, and the risk of output fluctuations violating a utility threshold can be calculated as follows:

\noindent (i) when $u(z)=z^2$, 
 \begin{equation}\label{eq: riskmeanquad}
\mathcal T_{\varepsilon}(|y_t|)=\begin{cases} \mu_{|y_t|}-\sqrt{\varepsilon^2-\sigma_{|y_t|}^2}&  \textrm{if}~~~ \varepsilon\geq \sigma_{|y_t|} \\
+\infty & \textrm{otherwise}
\end{cases}
\end{equation}

\noindent (ii) when $u(z)=e^{\beta z}$ for some $\beta>0$, 
\begin{equation}\label{eq: exprisk}\begin{split}
\mathcal T_{\varepsilon}(|y_t|)=\frac{\beta \sigma_t^2}{2}&+\frac{\ln\big(\kappa(\mu_t,\sigma_t)/2 \big)}{\beta}-\varepsilon
\end{split}
\end{equation}
%
in which 
\begin{eqnarray*}
\mu_{|y_t|}&=&\sigma_t\sqrt{\frac{2}{\pi}}e^{-\frac{\mu_t^2}{2\sigma_t^2}}-\mu_t~\mathrm{erf}\bigg(\frac{-\mu_t}{\sqrt{2\sigma_t^2}}\bigg)\\
\sigma_{|y_t|}^2&=&\mu_{t}^2+\sigma_t^2-\mu_{|y_t|}^2,\\
\kappa(\mu_t,\sigma_t) &=&\bigg[1-\mathrm{erf}\bigg(-\frac{\mu_t}{\sqrt{2}\sigma_t}-\frac{\beta\sigma_t}{\sqrt{2}}\bigg)\bigg]e^{\beta \mu_t}\\
& &+\bigg[1+\mathrm{erf}\bigg(-\frac{\mu_t}{\sqrt{2}\sigma_t}+\frac{\beta\sigma_t}{\sqrt{2}}\bigg)\bigg]e^{-\beta \mu_t}
\end{eqnarray*} 
and $\mu_t$ and $\sigma_t$ are defined by  \eqref{eq: meanvaluescalar}-\eqref{eq: standarddevscalar}. 
\end{thm}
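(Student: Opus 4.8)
The plan is to reduce everything to the single fact, already established in \eqref{eq: outputdist} and specialized through \eqref{eq: meanvaluescalar}--\eqref{eq: standarddevscalar}, that $y_t\sim\mathcal N(\mu_t,\sigma_t^2)$; each of the three formulas is then just the corresponding risk functional evaluated against this Gaussian law, so no further dynamical information is needed. For the risk in probability I would first note that $\{\mathbb P(|y_t|>\delta)<\varepsilon\}$ and $\{\mathbb P(-\delta\le y_t\le\delta)\ge 1-\varepsilon\}$ determine the same infimum, since the thresholded function is continuous and strictly increasing in $\delta$. Standardizing $y_t$ and writing this probability through the error function gives
\[
\mathbb P(-\delta \le y_t \le \delta)=\frac12\Big[\mathrm{erf}\Big(\tfrac{\delta-\mu_t}{\sqrt2\,\sigma_t}\Big)-\mathrm{erf}\Big(\tfrac{-\delta-\mu_t}{\sqrt2\,\sigma_t}\Big)\Big]=\frac{1}{\sqrt\pi}\int_{-u-2\alpha}^{u}e^{-t^2}\,dt,
\]
where $u=(\delta-\mu_t)/(\sqrt2\,\sigma_t)$ and $\alpha=\mu_t/(\sqrt2\,\sigma_t)$; the second equality is the change of variable that turns the lower limit into $-u-2\alpha$, which is exactly the integrand defining $S_\varepsilon(\alpha)$ in \eqref{eq: sgaussiant}. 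Since $\delta=\sqrt2\,\sigma_t\,u+\mu_t$ is an increasing affine map of $u$, the infimum over $\delta$ equals $\sqrt2\,\sigma_t\,S_\varepsilon(\alpha)+\mu_t$, which is \eqref{eq: riskprob}.

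For the quadratic utility the relevant object is the folded normal $|y_t|$. I would record its first two moments, $\mathbb E|y_t|=\mu_{|y_t|}$ and $\mathbb E|y_t|^2=\mathbb E y_t^2=\mu_t^2+\sigma_t^2$, the first obtained by splitting the integral at the origin and integrating $x\,e^{-(x-\mu_t)^2/2\sigma_t^2}$ explicitly, which produces the stated $\mu_{|y_t|}$ together with $\sigma_{|y_t|}^2=\mu_t^2+\sigma_t^2-\mu_{|y_t|}^2$. Completing the square then gives $\mathbb E[(|y_t|-\delta)^2]=\sigma_{|y_t|}^2+(\mu_{|y_t|}-\delta)^2$, so the constraint $\mathbb E[(|y_t|-\delta)^2]<\varepsilon^2$ reads $(\mu_{|y_t|}-\delta)^2<\varepsilon^2-\sigma_{|y_t|}^2$. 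This is solvable iff $\varepsilon\ge\sigma_{|y_t|}$, in which case the admissible $\delta$ form an interval with infimum $\mu_{|y_t|}-\sqrt{\varepsilon^2-\sigma_{|y_t|}^2}$, matching \eqref{eq: riskmeanquad}; otherwise the feasible set is empty and $\mathcal T_\varepsilon=+\infty$.

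For the exponential utility I observe that $\mathbb E[e^{\beta(|y_t|-\delta)}]=e^{-\beta\delta}\,\mathbb E[e^{\beta|y_t|}]$, so the constraint $\mathbb E[e^{\beta(|y_t|-\delta)}]<e^{\beta\varepsilon}$ is, by strict monotonicity of the logarithm and $\beta>0$, equivalent to $\delta>\frac1\beta\ln\mathbb E[e^{\beta|y_t|}]-\varepsilon$; hence $\mathcal T_\varepsilon(|y_t|)=\frac1\beta\ln\mathbb E[e^{\beta|y_t|}]-\varepsilon$. It then remains to evaluate the Laplace transform of the folded normal. Splitting $\mathbb E[e^{\beta|y_t|}]=\int_0^\infty e^{\beta x}p_t(x)\,dx+\int_{-\infty}^0 e^{-\beta x}p_t(x)\,dx$, with $p_t$ the $\mathcal N(\mu_t,\sigma_t^2)$ density, and completing the square in each integral yields two terms of the form $e^{\pm\beta\mu_t}e^{\beta^2\sigma_t^2/2}$ times a half error-function factor; these assemble into $\frac12 e^{\beta^2\sigma_t^2/2}\kappa(\mu_t,\sigma_t)$, and taking the logarithm gives \eqref{eq: exprisk}.

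The conceptual content is light, since each formula is a one-dimensional Gaussian integral read off against a risk functional, so the main effort will be bookkeeping: tracking signs and arguments of $\mathrm{erf}$ through the change of variables so that the lower limit $-u-2\alpha$ of $S_\varepsilon$ appears exactly, and through the two completions of squares so that the precise combination defining $\kappa(\mu_t,\sigma_t)$ emerges. A secondary point needing care is the strict-versus-non-strict inequalities at the various infima (the boundary case $\varepsilon=\sigma_{|y_t|}$, the restriction $\delta\in\mathbb R_+$ in \eqref{eq: sgaussiant}, and the $<\varepsilon$ versus $\ge 1-\varepsilon$ discrepancy in the probability case), all resolved by the continuity and strict monotonicity in $\delta$ of the functions being thresholded, which guarantee that the relevant infima coincide.
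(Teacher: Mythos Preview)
Your proposal is correct and follows essentially the same approach as the paper's own proof: reduce to the fact that $y_t\sim\mathcal N(\mu_t,\sigma_t^2)$, then for $\mathcal R_\varepsilon$ perform the same affine change of variable to land on the defining integral of $S_\varepsilon$, and for $\mathcal T_\varepsilon$ use the folded normal moments (quadratic case) or split and complete the square in the Laplace transform (exponential case). If anything, you are more careful than the paper about the strict-versus-non-strict inequalities at the infima and the folded-normal bookkeeping.
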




In case of no time delay, i.e., $\tau=0$, the eigensolutions of the nominal system take simple forms $\varphi_k(t)=e^{-\lambda_k t}$ for $k=2,\ldots,n$. This allows us to derive a more explicit form for the variance: 
$$\sigma_t^2=b^2\sum_{k=2}^n~(c_k^Q)^2~\frac{1-e^{-2\lambda_k t}}{2\lambda_k}.$$  When $\tau>0$ and $t<\infty$, the risk expressions in Theorem \ref{thm: main0} do not admit more tractable representations  in general. However, towards the end of this section, we show  that in steady state (as $t \rightarrow \infty$) one can obtain more explicit formulas for systemic risk measures; we also refer to \cite{sommiladnader16} for more details.

The steady state behavior of systemic risk measures can be investigated by letting $t$ tend to infinity. According to Lemma \ref{lem: boundofcov}, if output matrix $C$ is orthogonal to the vector of all ones, then $\overline{y}=\lim_{t \rightarrow \infty} y_t$ will belong to $\mathbb L^2(\mathbb R)$, i.e. with finite variance, denoted as $\overline{\sigma}$. If $\mathbf c^T \mathbbm{1}_n =0$, then $\mu_{t}$ vanishes exponentially fast. In steady state, the transient effect of initial conditions disappears and other origins of large volatility and fluctuation in network observables reveal themselves, e.g., communication  topology, time delay, and the diffusion coefficient $b$. The steady-state risk analysis uncovers how exogenous noise propagates throughout the network, is amplified by various factors, and results in rapid and unpredictable variations in network observables.     


\begin{thm}\label{thm: main1}
Suppose that stochastic process $\mathbf x=\{\mathbf x_t\}_{t \geq -\tau}$ is the solution of  \eqref{first-order} and the network observable is scalar as in \eqref{scalar-y} and the output matrix satisfies $\mathbf c^T  \mathbbm{1}_n=0$. Then, the risk of large  fluctuations in probability in steady-state is given by 
\begin{equation}\label{eq: riskprob}
\mathcal R_{\varepsilon}(|\overline{y}|)=\sqrt{2} \Sp S_{\varepsilon}(0) \Sp \overline{\sigma} 
\end{equation} 
in which 
 \begin{equation}\label{variance}
\overline{\sigma} ^2 \Sp = \Sp \frac{1}{2} \Sp b^2 \Sp \sum_{k=2}^n  \Sp \frac{\cos(\lambda_k\tau)}{\lambda_k  \big(1-\sin(\lambda_k\tau)\big)}\Sp (c_k^Q)^2
\end{equation} 
where $\lambda_k$ is the $k$'th non-zero Laplacian eigenvalue. The risk of output fluctuations violating a utility threshold in steady-state w.r.t.:

\noindent (i) quadratic utility $v(x)=x^2$ is given by 
\begin{equation}\label{eq: riskmeanquadss}
\mathcal T_{\varepsilon}(|\overline{y}|)=\begin{cases}\sqrt{\frac{2}{\pi}}\overline{\sigma}  -\sqrt{\varepsilon^2-\big(1-\frac{2}{\pi}\big)\overline{\sigma}^2} & \textrm{if}~~ \varepsilon\geq \overline{\sigma}  \sqrt{1-\frac{2}{\pi}}\\
+\infty & \text{otherwise}\end{cases}
\end{equation}

\noindent (ii) exponential utility $v(x)=e^{\beta x}$ for some $\beta>0$ can be expressed by 
 \begin{equation}\label{eq: expriskss}
 \begin{split}
\mathcal T_{\varepsilon}(|\overline{y}|)=\frac{\beta \overline{\sigma}^2}{2}&+\frac{\ln\big(1+\mathrm{erf}(\frac{\beta\overline{\sigma}}{2})\big)}{\beta}-\varepsilon.
\end{split}
\end{equation}
\end{thm}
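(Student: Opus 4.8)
The plan is to read Theorem~\ref{thm: main1} off Theorem~\ref{thm: main0} by passing to the steady state under the extra hypothesis $\mathbf c^T\mathbf 1_n=0$. The first step is to translate this hypothesis into the eigenbasis: since $\mathbf q_1=\tfrac{1}{\sqrt n}\mathbf 1_n$, the condition $\mathbf c^T\mathbf 1_n=0$ is exactly $c_1^Q=\mathbf q_1^T\mathbf c=0$. By Lemma~\ref{lem: boundofcov} this makes $\overline y=\lim_{t\to\infty}y_t$ an element of $\mathbb L^2(\mathbb R)$, and by the discussion following that lemma $\overline y\sim\mathcal N(0,\overline\sigma^2)$ with $\overline\sigma^2$ finite and positive. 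I would therefore argue that the limiting output is a bona fide zero-mean Gaussian, so that each of the three risk formulas of Theorem~\ref{thm: main0} may be applied \emph{directly} to the law $\mathcal N(0,\overline\sigma^2)$, with $\mu_t$ replaced by $0$ and $\sigma_t$ by $\overline\sigma$; this avoids having to commute the risk operators $\mathcal R_\varepsilon,\mathcal T_\varepsilon$ with the limit.

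Next I would establish the two ingredients $\overline\mu=0$ and the closed form \eqref{variance}. For the mean, Assumption~\ref{assum1} guarantees that every surviving mode $\varphi_k$, $k\ge 2$, decays exponentially, so each term of \eqref{eq: meanvaluescalar} vanishes as $t\to\infty$ (the $k=1$ term is already absent because $c_1^Q=0$), giving $\overline\mu=0$. For the variance I would pass to the limit in \eqref{eq: standarddevscalar}, discard the vanishing $k=1$ contribution, and invoke the integral identity \eqref{eq: functionf} from Appendix~\ref{append: dde} to evaluate $\int_0^\infty\varphi_k^2(s)\,ds=\tfrac{1}{2\lambda_k}\,\tfrac{\cos(\lambda_k\tau)}{1-\sin(\lambda_k\tau)}$ for each $k\ge 2$. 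Substituting into $\overline\sigma^2=b^2\sum_{k\ge 2}(c_k^Q)^2\int_0^\infty\varphi_k^2(s)\,ds$ produces \eqref{variance}. This is the only place where genuine analysis enters, and it rests entirely on the delay-differential computation summarised by \eqref{eq: functionf}; everything downstream is algebraic.

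With $\overline y\sim\mathcal N(0,\overline\sigma^2)$ in hand, the three stated formulas follow by setting $\mu_t=0$ in Theorem~\ref{thm: main0} and simplifying. In the risk-in-probability formula the argument of $S_\varepsilon$ becomes $0$ and the additive $\mu_t$ disappears, leaving $\mathcal R_\varepsilon(|\overline y|)=\sqrt 2\,S_\varepsilon(0)\,\overline\sigma$. For the quadratic utility, $\mathrm{erf}(0)=0$ collapses the folded-normal moments to $\mu_{|\overline y|}=\overline\sigma\sqrt{2/\pi}$ and $\sigma_{|\overline y|}^2=(1-\tfrac{2}{\pi})\overline\sigma^2$; inserting these into \eqref{eq: riskmeanquad} gives \eqref{eq: riskmeanquadss}, including the feasibility threshold $\varepsilon\ge\overline\sigma\sqrt{1-2/\pi}$ inherited from the condition $\varepsilon\ge\sigma_{|\overline y|}$. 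For the exponential utility, the key simplification is that at $\mu_t=0$ the two bracketed error-function terms of $\kappa(\mu_t,\sigma_t)$ merge by oddness of $\mathrm{erf}$ into a single factor of the form $1+\mathrm{erf}(\,\cdot\,)$, so that \eqref{eq: exprisk} collapses to \eqref{eq: expriskss}.

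I expect the only real obstacle to be the evaluation of $\int_0^\infty\varphi_k^2(s)\,ds$, that is, controlling the mode functions $\varphi_k$ of the delay system, which do not admit a closed form for $\tau>0$; once \eqref{eq: functionf} is granted the remainder of the argument is bookkeeping. A secondary point requiring care is the legitimacy of evaluating the risk operators on the limiting law $\mathcal N(0,\overline\sigma^2)$ rather than on $y_t$ itself, but the positive definiteness of $\overline\Sigma$ guaranteed through Lemma~\ref{lem: boundofcov} makes $\overline y$ a genuine square-integrable Gaussian, so the direct substitution into Theorem~\ref{thm: main0} is valid.
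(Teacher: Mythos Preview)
Your proposal is correct and follows essentially the same route as the paper's own proof: translate $\mathbf c^T\mathbf 1_n=0$ into $c_1^Q=0$, let $\mu_t\to 0$ and $\sigma_t\to\overline\sigma$, evaluate $\overline\sigma^2$ via the delay-integral identity of Lemma~\ref{lem: vmatrix} (equivalently \eqref{eq: functionf}), and then specialize the three formulas of Theorem~\ref{thm: main0} at zero mean. Your write-up is in fact more explicit than the paper's about the folded-normal and $\kappa$-simplifications, but the underlying argument is identical.
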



The expression for variance \eqref{variance} can be equivalently written in the following compact form\footnote{The Moore-Penrose pseudo-inverse of matrix $L$ is denoted by $L ^{\dagger}$. For a matrix $X \in \mathbb{R}^{n \times n}$, the matrix-valued functions of matrices $\cos (X)$ and $\sin(X)$ are defined as $\cos(X) = \sum_{k=0}^{\infty}\frac{(-1)^k}{(2k)!}X^{2k}$ and $\sin(X) = \sum_{k=0}^{\infty}\frac{(-1)^k }{(2k+1)!}X^{2k+1}$.}  
\begin{equation}\label{trace-risk}
\overline{\sigma} ^2 \Sp = \Sp \frac{1}{2} \Sp b^2 \Sp \mathrm{Tr} \Big[L_o \Sp L^{\dagger} \cos(\tau L)\big(M_{n}-\sin(\tau L)\big)^{\dagger}\Big]
\end{equation} 
where $L_o = \mathbf c\mathbf c^T$. In the following, we discuss some of the important and useful examples of scalar network observables.


\noindent {\it (a) Deviation of a single state from the average:} Suppose that we are interested in observing state of agent $i$. The deviation from average of $x_i$ can be obtained by choosing the $i$'th column of the centering matrix $M_n$ as  the output matrix, i.e., $\mathbf c=\mathbf m_i$. A direction calculation reveals that  
\[ c_k^Q = \left\{\begin{array}{ccc}
0 & \textrm{if} & k=1 \\
\mathbf m_i^T \mathbf q_k & \textrm{if} & k=2,\ldots,n
\end{array}\right..\]


\noindent {\it (b) Pairwise deviation:} When one is interested in characterizing volatility of the disagreement between two specific fixed agents, say agents with labels $i$ and $j$, we choose output matrix $\mathbf c=\mathbf e_i- \mathbf e_j $. In this case, 
\[ 
c_k^Q = \left\{\begin{array}{ccc}
0 & \textrm{if} & k=1 \\
(\mathbf e_i-\mathbf e_j)^T\mathbf q_k & \textrm{if} & k=2,\ldots,n 
\end{array}\right..
\]
%

\noindent {\it (c) Deviation from average of immediate neighbors:} Comparing state of each agent with average of its own adjacent neighbors is another meaningful observable for network \eqref{first-order}.  In this case, the components of the output matrix are defined as
\[ 
c_k^Q = \left\{\begin{array}{ccc}
0 & \textrm{if} & k=1 \\
\displaystyle \mathbf q_k^T \left(\mathbf e_i-\frac{1}{n_i}\sum_{j \sim i}\mathbf e_j\right)  & \textrm{if} & k=2,\ldots,n 
\end{array}\right.,
\]
where $\{ j \Sp|\Sp j \sim i\}$ is equal to set $\{j \Sp|\Sp \{i,j\}\in \mathcal{E}\}$ and $n_i$ is its cardinality.

In  all these scenarios, the value of systemic risk measure $\mathcal T_{\varepsilon}(|y_t|)$ and $\mathcal R_{\varepsilon}(|y_t|)$ can be calculated using Theorem \ref{thm: main0} with statistical parameters defined in \eqref{eq: meanvaluescalar} and \eqref{eq: standarddevscalar}. Moreover, Theorem \ref{thm: main1} shows that value of the systemic risk measure depend on Laplacian spectrum. 


\begin{rem}
 When the output matrix is $\mathbf c=\frac{1}{n}\mathbbm 1_n$, the network observable becomes the average of all states. It follows that $c_1^Q=\frac{1}{\sqrt{n}}$ and $c_k^Q=0$ for every $k>1$. In this case, the mean and variance of the scalar observable reduces to
\begin{equation}\label{eq: avparam} \mu_t=\frac{1}{\sqrt{n}}\bigg[\phi_1^Q(0)+\int_{-\tau}^{0}\phi_{1}^Q(s)\,ds \bigg]~~~\textrm{and}~~~\sigma_t^2=\frac{b^2}{n^2}t.
 \end{equation}
Using these formulas, one can compute the value of transient systemic risk measures $\mathcal T_{\varepsilon}(|y_t|)$ and $\mathcal R_{\varepsilon}(|y_t|)$ according to Theorem \ref{thm: main0}. Since network \eqref{first-order} has a marginally stable mode corresponding to $\lambda_1=0$, the variance $\sigma_{t}^2$ diverges as $t$ grows (see also Lemma \ref{lem: boundofcov}). This implies that all states  fluctuate around their average and eventually diverge. This phenomenon is so-called ``flocking to default''  in finance and economy literature; we refer to  \cite[Ch. 17]{fouque13} for a discussion. Furthermore, one also observes that the stastistical  parameters in \eqref{eq: avparam} are independent of the network topology and  time delay.
\end{rem}

We conclude this section by investigating effects of time delay on systemic risk measures.  

\begin{thm}\label{thm: monotonicity}
Suppose that the conditions of Theorem \ref{thm: main1} are satisfied. Then, systemic risk measures  \eqref{eq: riskprob}, \eqref{eq: riskmeanquadss}, \eqref{eq: expriskss} are strictly increasing functions of time delay.
\end{thm}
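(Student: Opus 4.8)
The plan is to exploit the structural observation that each of the three steady-state risk measures \eqref{eq: riskprob}, \eqref{eq: riskmeanquadss}, \eqref{eq: expriskss} depends on the delay $\tau$ \emph{only} through the steady-state standard deviation $\overline{\sigma}$ given in \eqref{variance}. I would therefore factor the argument into two independent pieces: (a) show that $\tau \mapsto \overline{\sigma}$ is strictly increasing on the admissible range permitted by Assumption \ref{assum1}, and (b) show that each of the three risk expressions, viewed as a function of the single variable $\overline{\sigma}$, is strictly increasing. The claimed monotonicity in $\tau$ then follows immediately by composition.

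For step (a), I would write each summand of \eqref{variance} as $\lambda_k^{-1} g(\lambda_k \tau)$ with $g(x) = \cos(x)/(1-\sin(x))$, and compute
\[ g'(x) = \frac{-\sin x\,(1-\sin x) + \cos^2 x}{(1-\sin x)^2} = \frac{1-\sin x}{(1-\sin x)^2} = \frac{1}{1-\sin x}, \]
which is strictly positive for $x \in (0,\pi/2)$. Since Assumption \ref{assum1} forces $\lambda_k \tau \le \lambda_n \tau < \pi/2$ for every $k$, differentiating \eqref{variance} termwise gives
\[ \frac{d}{d\tau}\,\overline{\sigma}^2 = \frac{b^2}{2}\sum_{k=2}^n \frac{(c_k^Q)^2}{1-\sin(\lambda_k\tau)} > 0, \]
where strictness uses that $\mathbf c^T \mathbf 1_n = 0$ forces $c_1^Q = 0$, so a nonzero $\mathbf c$ leaves at least one nonzero $c_k^Q$ with $k \ge 2$. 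Hence $\overline{\sigma}^2$, and therefore $\overline{\sigma}$, is strictly increasing in $\tau$ (including at $\tau=0$ by continuous extension of $g$).

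For step (b), I would differentiate each measure in $\overline{\sigma}$. The probability measure \eqref{eq: riskprob} is linear in $\overline{\sigma}$ with positive slope $\sqrt 2\, S_{\varepsilon}(0) > 0$, so the claim is immediate. For the exponential utility \eqref{eq: expriskss}, using $\frac{d}{ds}\mathrm{erf}(\beta s/2) = (\beta/\sqrt\pi)\,e^{-\beta^2 s^2/4}$ yields
\[ \frac{d}{d\overline{\sigma}}\,\mathcal T_\varepsilon = \beta\overline{\sigma} + \frac{e^{-\beta^2\overline{\sigma}^2/4}}{\sqrt\pi\,\big(1+\mathrm{erf}(\beta\overline{\sigma}/2)\big)} > 0, \]
both summands being positive. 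For the quadratic utility \eqref{eq: riskmeanquadss}, on the finite branch $\varepsilon \ge \overline{\sigma}\sqrt{1-2/\pi}$ one finds
\[ \frac{d}{d\overline{\sigma}}\,\mathcal T_\varepsilon = \sqrt{\tfrac{2}{\pi}} + \frac{(1-2/\pi)\,\overline{\sigma}}{\sqrt{\varepsilon^2 - (1-2/\pi)\overline{\sigma}^2}} > 0. \]

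The computation is essentially a chain of sign checks, so the only genuinely delicate point is the quadratic case, where the measure switches to $+\infty$ once $\overline{\sigma}$ exceeds the threshold $\varepsilon/\sqrt{1-2/\pi}$. I would verify that as $\tau$ grows and $\overline{\sigma}$ crosses this threshold, monotonicity survives the jump: the finite branch is strictly increasing and its values stay bounded above by the one-sided limit $\varepsilon\sqrt{(2/\pi)/(1-2/\pi)}$ as the threshold is approached, while the complementary branch is identically $+\infty$. Hence the extended-real-valued map $\tau \mapsto \mathcal T_\varepsilon(|\overline y|)$ increases strictly through the transition, and all three assertions follow from the clean identity $g'(x)=1/(1-\sin x)$ together with these elementary positivity estimates.
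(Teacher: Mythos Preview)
Your proposal is correct and follows essentially the same approach as the paper: compute $\frac{\partial}{\partial\tau}\overline{\sigma}^2 = \frac{b^2}{2}\sum_{k\ge 2}(c_k^Q)^2/(1-\sin(\lambda_k\tau))>0$ and then invoke the fact that each risk measure is an increasing function of $\overline{\sigma}$. The paper's own proof is in fact terser than yours, simply asserting that the risk measures are increasing in $\overline{\sigma}$ without writing out the derivatives or treating the $+\infty$ branch of \eqref{eq: riskmeanquadss}, so your version is if anything more complete.
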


\begin{rem}
The class of systemic risk measures in this paper exhibit some idiosyncratic behavior in presence of time delay. It can be easily shown that the variance \eqref{variance} is a convex function of eigenvalues. Systemic risks are in turn increasing functions of variance, which is not monotonically decreasing with respect to connectivity. This imposes counter-intuitive challenges in design of optimal networks with respect to this class of risk measures. For instance, increasing connectivity may deteriorate systemic risk of network \eqref{first-order} and sparsification may improve it. Figure \ref{fig1: functionf} depicts behavior of each spectral term in expression of variance \eqref{variance}.  The unorthodox  behavior of risk and its interplay with connectivity and performance in time-delay  networks is highly counterintuitive and requires careful investigations. In Section \ref{sec:tradeoffs}, we characterize an intrinsic tradeoff that explains this behavior.   
\end{rem}

\section{Joint Risk Assessment of Multiple Events}

The extension of risk for network \eqref{first-order} with vector valued output observables   
\begin{equation}\label{observables}
\mathbf y_t=C \mathbf x_t = \left[ \Sp \mathbf c_1^T\mathbf x_t \Sp, \Sp \ldots \Sp,\Sp    \mathbf c_q^T\mathbf x_t \Sp \right]^T 
\end{equation}
is a problem with different approaches, each of which highlights the problem from a different perspective. In this section we introduce the most important multi-value risk measures for the steady state distribution of 
 $|\mathbf y_t|=\big[ \Sp |y_1|,\Sp \dots \Sp,|y_q| \Sp \big]^T$  with  $y_k=\mathbf c_k^T\mathbf x_t$.
 We recall from Section \ref{sec:stat} that in steady-state $$\overline{\mathbf y}\sim \mathcal N({\mathbf 0}, \overline{\Sigma})$$  if and only if $\mathbbm 1_n$ is in the kernel of output matrix $C$.

When different types of observations are used for risk analysis in a dynamical network, systemic risk measure becomes a vector, where every element of that vector corresponds to the value of risk of an event that has a similar nature to the corresponding observation. For example, suppose that our goal is to evaluate risk using vector of observables $\mathbf y_t=[ \Sp y_1(t),y_2(t),y_3(t) \Sp]^T$, where  stochastic variable $y_1$ measures deviation of a single state from the average of all states, $y_2$ shows pairwise deviation between two given agents, and $y_3$ gives deviation of a single state from average of its immediate neighbors. 

\section{Joint Risk Assessment of Multiple Events}

The extension of risk for network \eqref{first-order} with vector valued output observables   
\begin{equation}\label{observables}
\mathbf y_t=C \mathbf x_t = \left[ \Sp \mathbf c_1^T\mathbf x_t \Sp, \Sp \ldots \Sp,\Sp    \mathbf c_q^T\mathbf x_t \Sp \right]^T 
\end{equation}
is a problem with different approaches, each of which highlights the problem from a different perspective. In this section we introduce the most important multi-value risk measures for the steady state distribution of 
 $|\mathbf y_t|=\big[ \Sp |y_1|,\Sp \dots \Sp,|y_q| \Sp \big]^T$  with  $y_k=\mathbf c_k^T\mathbf x_t$.
 We recall from Section \ref{sec:stat} that in steady-state $$\overline{\mathbf y}\sim \mathcal N({\mathbf 0}, \overline{\Sigma})$$  if and only if $\mathbbm 1_n$ is in the kernel of output matrix $C$.

When different types of observations are used for risk analysis in a dynamical network, systemic risk measure becomes a vector, where every element of that vector corresponds to the value of risk of an event that has a similar nature to the corresponding observation. For example, suppose that our goal is to evaluate risk using vector of observables $\mathbf y_t=[ \Sp y_1(t),y_2(t),y_3(t) \Sp]^T$, where  stochastic variable $y_1$ measures deviation of a single state from the average of all states, $y_2$ shows pairwise deviation between two given agents, and $y_3$ gives deviation of a single state from average of its immediate neighbors.

\subsection{Risk of Large Fluctuations in Probability}

The joint risk in probability measure is a vector-valued operator  $\boldsymbol{\mathcal R}_{\varepsilon}: \mathbb L^2(\mathbb R^q)\rightarrow \mathbb R^q$ that is defined by \begin{equation}\label{eq: riskflucmmonec}
\boldsymbol{\mathcal R}_{\varepsilon}(|\overline{\mathbf y}|)  =  \inf\Big\{\boldsymbol \delta \in \R_{+}^q~\Big|~\mathbb P\big(\Sp|\overline{\mathbf y} | \preceq \boldsymbol \delta \Sp \big)\geq 1-\varepsilon\Big\} 
\end{equation}  
in which infimum operates on elements of $\boldsymbol \delta$.  
Obtaining a closed-form expression for the joint risk in probability is particularly difficult. To see this, we observe that \eqref{eq: riskflucmmonec} is equivalent to a joint chance constraint optimization problem \cite{doi:10.1287/opre.1090.0712}.The solution set is not a singleton. In fact, the set of all optimal vectors $\boldsymbol \delta$ constitute a Pareto set. Moreover, problems of this kind involve the quantification of joint dependent events   are typically difficult to solve.   One way to compute solution of such  multi-objective optimization problems is via Monte Carlo sampling \cite{ruszcynsky02,doi:10.1137/050622328}. Robust Optimization techniques propose analytic approximations to this problem by decomposing the joint constraint problem into a problem with individual chance constraints \cite{doi:10.1287/opre.1090.0712}. By applying the latter approach to \eqref{eq: riskflucmmonec}, we obtain the next result. Let us define the vector of individual risk measures by 
\[ \mathfrak{R}_{\varepsilon}(|\overline{\mathbf y}|) = \left[\Sp {\mathcal R}_{\varepsilon}(|\overline{y}_1|),~\ldots~, {\mathcal R}_{\varepsilon}(|\overline{y}_q|) \Sp \right]^T. \]  

\begin{thm}\label{thm: doubleinequality2}
For dynamical network \eqref{first-order} with vector of observables \eqref{observables},  the joint vector-valued systemic risk measure \eqref{eq: riskflucmmonec} satisfies the following inclusion 
\begin{equation}
\boldsymbol{\mathcal R}_{\varepsilon}(|\overline{\mathbf y}|)  ~\subseteq~ \mathbb W_{\mathfrak{R}_\varepsilon (|\overline{\mathbf y}|)} 
\end{equation}
where 
\begin{equation}\label{epsilon-q}
\hspace{-0.21cm}\mathbb W_{\mathfrak{R}_\varepsilon (|\overline{\mathbf y}|)}=\left\{ \left. \left[\begin{array}{c}
\delta_1 \\
\vdots \\
\delta_q
\end{array}\right]  ~\right|\begin{array}{c} 
\vspace{0.06cm}
\mathcal R_{\varepsilon}(|\overline{y}_i|) \leq \delta_i \leq \frac{S_{\varepsilon_i}(0)}{ S_{\varepsilon}(0)} \mathcal R_{\varepsilon}(|\overline{y}_i|) \\
\hspace{-0.2cm}\textrm{for all}~\varepsilon_i \in (0,1) ~\textrm{that satisfy:}  \\
\hspace{-0cm}\varepsilon_1+\dots+\varepsilon_q = \varepsilon 
\end{array}  \right\}. 
\end{equation}
%
\end{thm}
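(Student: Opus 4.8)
The plan is to establish the two coordinate-wise inequalities that define $\mathbb W_{\mathfrak{R}_\varepsilon(|\overline{\mathbf y}|)}$ separately, after first rewriting the right-hand bound in a transparent form. Using the steady-state single-observable formula of Theorem~\ref{thm: main1}, namely $\mathcal R_\varepsilon(|\overline{y}_i|)=\sqrt 2\,S_\varepsilon(0)\,\overline{\sigma}_i$ with $\overline{\sigma}_i^2$ the steady-state variance of $\overline{y}_i=\mathbf c_i^T\overline{\mathbf x}$, the ratio appearing in the upper bound collapses to
\[
\frac{S_{\varepsilon_i}(0)}{S_\varepsilon(0)}\,\mathcal R_\varepsilon(|\overline{y}_i|)=\sqrt 2\,S_{\varepsilon_i}(0)\,\overline{\sigma}_i=\mathcal R_{\varepsilon_i}(|\overline{y}_i|).
\]
Hence membership of $\boldsymbol\delta$ in $\mathbb W_{\mathfrak{R}_\varepsilon(|\overline{\mathbf y}|)}$ is equivalent to the existence of a splitting $\varepsilon_1+\dots+\varepsilon_q=\varepsilon$, $\varepsilon_i\in(0,1)$, for which $\mathcal R_\varepsilon(|\overline{y}_i|)\le\delta_i\le\mathcal R_{\varepsilon_i}(|\overline{y}_i|)$ for every $i$. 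It therefore suffices to take an arbitrary element $\boldsymbol\delta$ of the Pareto set $\boldsymbol{\mathcal R}_\varepsilon(|\overline{\mathbf y}|)$ and exhibit such a splitting.

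For the lower bound I would use that the joint event is contained in each marginal event. Feasibility gives $\mathbb P(|\overline{\mathbf y}|\preceq\boldsymbol\delta)\ge 1-\varepsilon$, and since $\{|\overline{\mathbf y}|\preceq\boldsymbol\delta\}\subseteq\{|\overline{y}_i|\le\delta_i\}$ we obtain $\mathbb P(|\overline{y}_i|>\delta_i)\le\varepsilon$ for each $i$. Because $\overline{\mathbf y}\sim\mathcal N(\mathbf 0,\overline\Sigma)$, every $|\overline{y}_i|$ is folded-normal with a continuous, strictly decreasing tail, so $\mathbb P(|\overline{y}_i|>\delta_i)\le\varepsilon$ forces $\delta_i\ge\mathcal R_\varepsilon(|\overline{y}_i|)$ through Definition~\ref{eq: risk}. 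This left inequality holds for \emph{every} feasible $\boldsymbol\delta$, and in particular guarantees $\delta_i\ge\sqrt 2\,S_\varepsilon(0)\,\overline{\sigma}_i>0$.

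For the upper bound I would run the Boole/Bonferroni decomposition underlying the robust-optimization reformulation cited before the theorem. First, Pareto-minimality activates the chance constraint: if $\mathbb P(|\overline{\mathbf y}|\preceq\boldsymbol\delta)>1-\varepsilon$ strictly, continuity of the Gaussian law would let one decrease some $\delta_i$ and remain feasible, contradicting minimality; hence $\mathbb P(|\overline{\mathbf y}|\preceq\boldsymbol\delta)=1-\varepsilon$. Next set $\eta_i:=\mathbb P(|\overline{y}_i|>\delta_i)$; by strict monotonicity and continuity of the folded-normal tail this is the unique level with $\mathcal R_{\eta_i}(|\overline{y}_i|)=\delta_i$, and $\eta_i\in(0,1)$ since $0<\delta_i<\infty$. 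Boole's inequality then yields
\[
\sum_{i=1}^q\eta_i\ \ge\ \mathbb P\Big(\bigcup_{i=1}^q\{\,|\overline{y}_i|>\delta_i\,\}\Big)\ =\ 1-\mathbb P(|\overline{\mathbf y}|\preceq\boldsymbol\delta)\ =\ \varepsilon.
\]
I can therefore select $\varepsilon_i\in(0,\eta_i]$ with $\sum_i\varepsilon_i=\varepsilon$, keeping each $\varepsilon_i$ strictly positive. Monotonicity of $\varepsilon\mapsto S_\varepsilon(0)$ (non-increasing) then gives $\mathcal R_{\varepsilon_i}(|\overline{y}_i|)\ge\mathcal R_{\eta_i}(|\overline{y}_i|)=\delta_i$, which is the right inequality for this splitting. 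Combined with the lower bound, this places $\boldsymbol\delta$ in $\mathbb W_{\mathfrak{R}_\varepsilon(|\overline{\mathbf y}|)}$, proving the inclusion.

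The step I expect to be the main obstacle is the bookkeeping around the union bound: Boole's inequality only delivers $\sum_i\eta_i\ge\varepsilon$, and that inequality itself relies on activeness of the constraint, so the argument must first justify tightness from Pareto-minimality and then down-shift the levels to an admissible splitting $\sum_i\varepsilon_i=\varepsilon$ with all $\varepsilon_i>0$ before invoking monotonicity in the correct direction. A secondary subtlety is the identity $\mathcal R_{\eta_i}(|\overline{y}_i|)=\delta_i$ together with $\eta_i\in(0,1)$, both of which rest on continuity and strict monotonicity of the zero-mean Gaussian marginal tails supplied by $\overline{\mathbf y}\sim\mathcal N(\mathbf 0,\overline\Sigma)$; any degenerate marginal with $\overline{\sigma}_i=0$ would have to be excluded or treated separately.
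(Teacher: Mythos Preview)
Your proof is correct. The lower bound is exactly the paper's argument: both of you use the containment $\{|\overline{\mathbf y}|\preceq\boldsymbol\delta\}\subseteq\{|\overline y_i|\le\delta_i\}$ (the paper phrases this via the Fr\'echet inequality $\max_k\mathbb P(|\overline y_k|>\delta_k)\le\mathbb P(\bigvee_k|\overline y_k|>\delta_k)$) to conclude $\delta_i\ge\mathcal R_\varepsilon(|\overline y_i|)$ for every feasible $\boldsymbol\delta$.

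For the upper bound you and the paper run the Bonferroni/Boole argument in opposite directions. The paper fixes an arbitrary splitting $\varepsilon_1+\dots+\varepsilon_q=\varepsilon$, shows via Bonferroni that $\boldsymbol\delta^\ast=(\mathcal R_{\varepsilon_1}(|\overline y_1|),\dots,\mathcal R_{\varepsilon_q}(|\overline y_q|))$ is feasible for the joint constraint, and then asserts that the Pareto set lies coordinate-wise below $\boldsymbol\delta^\ast$. You instead start from a Pareto-minimal $\boldsymbol\delta$, use continuity of the Gaussian law to make the joint constraint active, set $\eta_i=\mathbb P(|\overline y_i|>\delta_i)$, apply Boole to get $\sum_i\eta_i\ge\varepsilon$, and then downshift to a splitting $\varepsilon_i\le\eta_i$ with $\sum_i\varepsilon_i=\varepsilon$, invoking monotonicity of $\varepsilon\mapsto\mathcal R_\varepsilon$ to obtain $\delta_i\le\mathcal R_{\varepsilon_i}(|\overline y_i|)$. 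Your route is more careful about the Pareto-set structure: the paper's step ``$\boldsymbol\delta^\ast$ feasible $\Rightarrow$ every Pareto point $\preceq\boldsymbol\delta^\ast$'' does not hold for generic multi-objective problems, whereas your construction produces, for each Pareto point separately, the splitting witnessing membership in $\mathbb W_{\mathfrak R_\varepsilon}$. Both arguments rest on the same inequality, so the difference is organizational rather than conceptual, but yours is the one that actually establishes the inclusion as stated. Your caveat about degenerate marginals $\overline\sigma_i=0$ is well placed; the paper tacitly assumes nondegeneracy.
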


\begin{figure}\center
\includegraphics[scale=0.8]{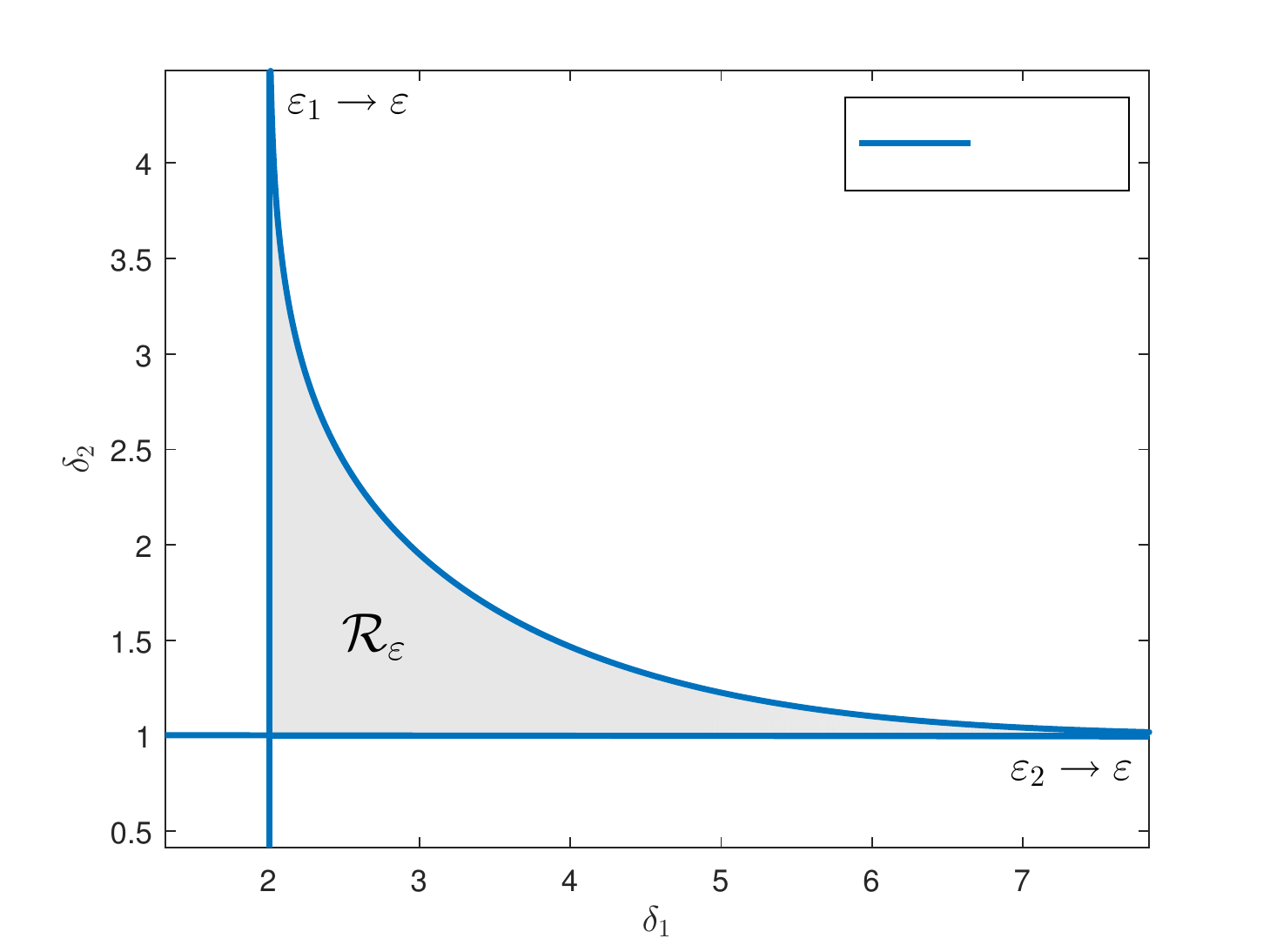}
\caption{The geometry of $\mathbb W_{\mathfrak{R}_{\varepsilon}(|\overline{\mathbf y}|)}$ is depicted in two dimensions. 
Since $\varepsilon_1+\varepsilon_2=\varepsilon$, if $\varepsilon_1 \rightarrow \varepsilon^{-}$, then 
$\varepsilon_2 \rightarrow 0^+$, and vice versa. This shows that $\mathbb W_{\mathfrak{R}_{\varepsilon}(|\overline{\mathbf y}|)}$ is bounded. 
}\label{fig: illustration}
\end{figure}

Figure \ref{fig: illustration} illustrates geometric shape of  $\mathbb W_{\mathfrak{R}_{\varepsilon}(|\overline{\mathbf y}|)}$ for $q=2$. The set of solutions that constitute $\boldsymbol{\mathcal R}_{\varepsilon}(|\overline{\mathbf y}|)$ lies within a bounded set that is characterized by $\mathfrak{R}_{\varepsilon}(|\overline{\mathbf y}|)$.

When all output observables show evolution of similar stochastic events, a single safety margin can be employed to assess systemic risk throughout the network. For instance, all observables are similar or homogenous if every $y_k$, for $k=1,\ldots,q$, measures either deviation of a single state from the average of all states, pairwise difference between two agents, or deviation of a single state from average of its immediate neighbors.  In this case,  systemic risk measure becomes scalar-valued according to 
\begin{equation}\label{eq: riskv}
\mathcal R_{\varepsilon}(|\overline{\mathbf y}|)=\inf\Big\{\delta \in \R_+~\Big|~\mathbb P\big(\Sp|\overline{\mathbf y}| \preceq \delta \mathbbm{1}_q \Sp \big)\geq 1-\varepsilon\Big\}.
\end{equation}
The next result is a direct consequence of Theorem \ref{thm: doubleinequality2}.
\begin{cor}
The scalar-valued joint systemic risk measure \eqref{eq: riskv} for dynamical network \eqref{first-order} with observables \eqref{observables} is bounded by 
\begin{equation}
\max_{1 \leq i \leq q} \Sp \mathcal{R}_{\varepsilon}(|\overline{y}_i|) ~\leq~ {\mathcal R}_{\varepsilon}(|\overline{\mathbf y}|)  ~\leq~  \Sp  \min_{1 \leq i \leq q} \Sp \upsilon_i \Sp \mathcal{R}_{\varepsilon}(|\overline{y}_i|) \label{ineq_vec_risk_1}
\end{equation}
where $\upsilon_i=S_{\varepsilon}(0)^{-1} S_{\varepsilon_i}(0)$ for all positive $\varepsilon_i$'s  that satisfy \eqref{epsilon-q}. 
\end{cor}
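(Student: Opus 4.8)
The plan is to read the scalar-valued measure \eqref{eq: riskv} as the restriction of the vector-valued problem of Theorem \ref{thm: doubleinequality2} to the diagonal ray $\boldsymbol\delta=\delta\mathbbm 1_q$, and then recover the two bounds from the logic behind the set $\mathbb W_{\mathfrak R_\varepsilon(|\overline{\mathbf y}|)}$. Throughout I would invoke the steady-state identity $\mathcal R_\varepsilon(|\overline y_i|)=\sqrt 2\,S_\varepsilon(0)\,\overline\sigma_i$ from Theorem \ref{thm: main1}, which gives $\mathcal R_{\varepsilon_i}(|\overline y_i|)=\upsilon_i\,\mathcal R_\varepsilon(|\overline y_i|)$ with $\upsilon_i=S_\varepsilon(0)^{-1}S_{\varepsilon_i}(0)$. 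Thus the factors $\upsilon_i$ appearing in \eqref{ineq_vec_risk_1} are exactly the ratios of the per-component individual risks evaluated at the split level $\varepsilon_i$ and at $\varepsilon$; recognizing this identification is what makes the corollary a direct consequence of the vector-valued inclusion.

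For the lower bound I would argue by necessity. Since $\{|\overline{\mathbf y}|\preceq\delta\mathbbm 1_q\}\subseteq\{|\overline y_i|\le\delta\}$ for each $i$, monotonicity of $\mathbb P$ gives $\mathbb P(|\overline{\mathbf y}|\preceq\delta\mathbbm 1_q)\le\mathbb P(|\overline y_i|\le\delta)$. Hence any $\delta$ feasible for the joint constraint in \eqref{eq: riskv}, i.e.\ with $\mathbb P(|\overline{\mathbf y}|\preceq\delta\mathbbm 1_q)\ge 1-\varepsilon$, also satisfies $\mathbb P(|\overline y_i|\le\delta)\ge 1-\varepsilon$ for every $i$; since each $\overline y_i$ is Gaussian (continuous), Definition \ref{eq: risk} then forces $\delta\ge\mathcal R_\varepsilon(|\overline y_i|)$ for every $i$. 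Taking the infimum over feasible $\delta$ yields $\mathcal R_\varepsilon(|\overline{\mathbf y}|)\ge\max_{1\le i\le q}\mathcal R_\varepsilon(|\overline y_i|)$, which is the left inequality of \eqref{ineq_vec_risk_1}.

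For the upper bound I would use the individual-chance-constraint (Bonferroni) decomposition that underlies Theorem \ref{thm: doubleinequality2}. Fix any split $\varepsilon_1+\dots+\varepsilon_q=\varepsilon$ with $\varepsilon_i\in(0,1)$ and set $\delta=\max_{1\le i\le q}\upsilon_i\,\mathcal R_\varepsilon(|\overline y_i|)=\max_{1\le i\le q}\mathcal R_{\varepsilon_i}(|\overline y_i|)$. Then $|\overline y_i|>\delta$ implies $|\overline y_i|>\mathcal R_{\varepsilon_i}(|\overline y_i|)$, whence $\mathbb P(|\overline y_i|>\delta)\le\varepsilon_i$, and the union bound gives $\mathbb P(|\overline{\mathbf y}|\not\preceq\delta\mathbbm 1_q)\le\sum_i\mathbb P(|\overline y_i|>\delta)\le\sum_i\varepsilon_i=\varepsilon$. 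So this $\delta$ is feasible for \eqref{eq: riskv}, giving $\mathcal R_\varepsilon(|\overline{\mathbf y}|)\le\max_i\upsilon_i\,\mathcal R_\varepsilon(|\overline y_i|)$ for every admissible split. The tightest such bound is obtained by equalizing the per-component thresholds $\mathcal R_{\varepsilon_i}(|\overline y_i|)$ subject to $\varepsilon_1+\dots+\varepsilon_q=\varepsilon$, at which point $\min_i$ and $\max_i$ of $\upsilon_i\,\mathcal R_\varepsilon(|\overline y_i|)$ coincide, collapsing to the common value displayed on the right of \eqref{ineq_vec_risk_1}.

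The main obstacle I anticipate is bookkeeping rather than conceptual: one must verify that the balanced split realizing the tightest upper bound is admissible, i.e.\ that the $\varepsilon_i$ equalizing $\upsilon_i\,\mathcal R_\varepsilon(|\overline y_i|)$ are positive and sum to $\varepsilon$. This follows from the monotonicity and continuity of the map $\varepsilon_i\mapsto S_{\varepsilon_i}(0)$ defined in \eqref{eq: sgaussiant}, together with its divergence as $\varepsilon_i\to 0^+$ and vanishing as $\varepsilon_i\to 1^-$, which let the common threshold be tuned to meet the budget $\sum_i\varepsilon_i=\varepsilon$. A secondary point to make explicit is that the containment $\boldsymbol{\mathcal R}_\varepsilon(|\overline{\mathbf y}|)\subseteq\mathbb W_{\mathfrak R_\varepsilon(|\overline{\mathbf y}|)}$ of Theorem \ref{thm: doubleinequality2} is exactly the diagonal projection used here, so that no independence assumption on the components of $\overline{\mathbf y}$ is required and both inequalities hold for the general correlated steady-state law $\overline{\mathbf y}\sim\mathcal N(\mathbf 0,\overline\Sigma)$.
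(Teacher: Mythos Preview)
Your approach is correct and essentially the same as the paper's: the corollary is stated as a direct consequence of Theorem~\ref{thm: doubleinequality2}, and you reproduce exactly that logic---Fr\'echet for the lower bound and Bonferroni for the upper bound---restricted to the diagonal ray $\boldsymbol\delta=\delta\mathbbm 1_q$. Your additional discussion of the equalizing split is a sensible way to reconcile the $\min_i$ appearing on the right of \eqref{ineq_vec_risk_1} with the $\max_i$ that Bonferroni naturally produces; the paper does not spell this out.
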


\begin{textblock*}{\textwidth}(11.3cm,-14.2cm) \small
$\partial \mathbb W_{\mathfrak{R}_{\varepsilon}}$
\end{textblock*}


\subsection{Risk of Large Fluctuations in Expectation}

Suppose that function $v: \R^q \rightarrow \R$ is  convex and monotonically increasing on the cone of positive orthant, i.e., if $\BB x \preceq \BB y$, then $v(\BB x) \leq u(\BB y)$. The joint risk in expectation  is an operator  $\boldsymbol{\mathcal T}_{\varepsilon}: \mathbb L^2(\mathbb R^q)\rightarrow \mathbb R^q$ that is defined by     
\begin{equation}\label{eq: riskmomv}
\boldsymbol{\mathcal T}_\varepsilon(|\overline{\mathbf y}|)=\inf\Big\{\boldsymbol \delta \in \mathbb R^q ~\Big|~ \mathbb{E}\big[u(|\overline{\mathbf y}|-\boldsymbol \delta) \big] \leq  v(\varepsilon)\Big\} ,
\end{equation} 
where  $\boldsymbol \delta=[\Sp \delta_1,\dots,\delta_q \Sp]^T$ is the vector of safety margins.

\begin{thm}\label{thm: multobjvolrisk}
Let us assume that $\mathbbm{1}_n \in \ker(C)$ for dynamical network \eqref{first-order} with observables \eqref{observables}. The constraint $\mathbb{E}\big[v(|\overline{\mathbf y}|-\boldsymbol \delta) \big] \leq  v(\varepsilon)$ in \eqref{eq: riskmomv} with $v(\BB z)=\BB z^T \BB z$, is feasible  if and only if $r > 0$ in which
$$r=\varepsilon^2-\bigg(1-\frac{2}{\pi}\bigg)\sum_{i=1}^q \Sp \overline{\sigma}_i^2.$$ 
Moreover, the joint risk in expectation assumes a set value that is given by 
\begin{equation*}
\boldsymbol{\mathcal T}_{\varepsilon}(|\overline{\mathbf y}|) =\left\{ \Sp \sqrt{\frac{2}{\pi}}\Sp \overline{\BB \sigma}+\mathbf z~\bigg|~\BB z\preceq \BB 0:~\|\mathbf z\|=\sqrt{r} \Sp \right\} \subset \R^q,
\end{equation*}
where $\overline{\BB \sigma}=[\Sp \overline{\sigma}_1,\ldots, \overline{\sigma}_q \Sp]^T$.
\end{thm}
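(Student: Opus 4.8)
The plan is to collapse the $q$-dimensional constraint into a single Euclidean ball in the safety-margin vector $\boldsymbol\delta$ by completing the square, and then to identify the set-valued infimum in \eqref{eq: riskmomv} with the Pareto (i.e. $\preceq$-minimal) frontier of that ball. First I would invoke the steady-state statistics of Section \ref{sec:stat}: since $\mathbf 1_n\in\ker(C)$, Lemma \ref{lem: boundofcov} gives $\overline{\mathbf y}\sim\mathcal N(\mathbf 0,\overline\Sigma)$, so each marginal is a centered Gaussian, $\overline y_i\sim\mathcal N(0,\overline\sigma_i^2)$ with $\overline\sigma_i^2=\overline\Sigma_{ii}$. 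Because the utility $v(\mathbf z)=\mathbf z^T\mathbf z=\sum_i z_i^2$ is separable and expectation is linear, the constraint decouples as
\[
\mathbb E\big[v(|\overline{\mathbf y}|-\boldsymbol\delta)\big]=\sum_{i=1}^q\mathbb E\big[(|\overline y_i|-\delta_i)^2\big],
\]
so only the diagonal of $\overline\Sigma$ enters and the off-diagonal correlations are irrelevant.

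Next I would expand each scalar term using the half-normal moments of a centered Gaussian, $\mathbb E[\overline y_i^2]=\overline\sigma_i^2$ and $\mathbb E[\,|\overline y_i|\,]=\sqrt{2/\pi}\,\overline\sigma_i$; the latter is exactly the mean computed for the scalar quadratic case in Theorem \ref{thm: main1}(i). This gives $\mathbb E[(|\overline y_i|-\delta_i)^2]=\delta_i^2-2\sqrt{2/\pi}\,\overline\sigma_i\,\delta_i+\overline\sigma_i^2$, and completing the square coordinatewise converts the requirement $\sum_i\mathbb E[(|\overline y_i|-\delta_i)^2]\le v(\varepsilon)=\varepsilon^2$ into
\[
\Big\|\boldsymbol\delta-\sqrt{\tfrac{2}{\pi}}\,\overline{\boldsymbol\sigma}\Big\|^2\le\varepsilon^2-\Big(1-\tfrac{2}{\pi}\Big)\sum_{i=1}^q\overline\sigma_i^2=r.
\]
Hence the feasible set is the closed ball $B$ of radius $\sqrt r$ centered at $\mathbf c:=\sqrt{2/\pi}\,\overline{\boldsymbol\sigma}$; it is nonempty with nonempty interior (equivalently, the constraint is strictly satisfiable, matching the strict inequality of Definition \ref{eq: riskmom}) exactly when $r>0$, which yields the feasibility claim.

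The remaining and principal step is to evaluate the componentwise infimum in \eqref{eq: riskmomv}, read as the set of $\preceq$-minimal elements of $B$. I would argue both inclusions. For $\supseteq$, take $\boldsymbol\delta=\mathbf c+\mathbf z$ with $\mathbf z\preceq\mathbf 0$ and $\|\mathbf z\|=\sqrt r$; any feasible $\boldsymbol\delta'=\mathbf c+\mathbf z'\preceq\boldsymbol\delta$ forces $\mathbf z'\preceq\mathbf z\preceq\mathbf 0$, hence $(z_i')^2\ge z_i^2$ for every $i$ and $\|\mathbf z'\|^2\ge\|\mathbf z\|^2=r$; together with $\|\mathbf z'\|^2\le r$ this forces $\mathbf z'=\mathbf z$, so $\boldsymbol\delta$ is minimal. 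For $\subseteq$, an interior point of $B$ cannot be minimal (all coordinates may be decreased slightly), and a boundary point $\mathbf c+\mathbf z$ with some $z_j>0$ cannot be minimal either, since $\boldsymbol\delta-t\mathbf e_j$ stays in $B$ for small $t>0$ while strictly lowering coordinate $j$; therefore $\mathbf z\preceq\mathbf 0$ and $\|\mathbf z\|=\sqrt r$. This establishes
\[
\boldsymbol{\mathcal T}_\varepsilon(|\overline{\mathbf y}|)=\Big\{\,\sqrt{\tfrac{2}{\pi}}\,\overline{\boldsymbol\sigma}+\mathbf z~\Big|~\mathbf z\preceq\mathbf 0,\ \|\mathbf z\|=\sqrt r\,\Big\}.
\]

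I expect the main obstacle to be precisely this geometric identification of the vector-valued infimum, namely arguing rigorously that the $\preceq$-minimal subset of a ball is the lower spherical cap $\{\mathbf z\preceq\mathbf 0\}$, and checking consistency with the scalar formula \eqref{eq: riskmeanquadss}, where the cap collapses to the single left endpoint $\mathbf c-\sqrt r$. By comparison, the normality reduction, the half-normal moment evaluation, and completing the square are routine once the steady-state distribution of Section \ref{sec:stat} is in hand.
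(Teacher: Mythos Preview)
Your proposal is correct and follows essentially the same route as the paper: reduce to the half-normal moments, complete the square to rewrite the constraint as a Euclidean ball centered at $\sqrt{2/\pi}\,\overline{\boldsymbol\sigma}$ with radius $\sqrt{r}$, and then identify the $\preceq$-minimal subset with the lower spherical cap. Your two-inclusion argument for the Pareto frontier is in fact more careful than the paper's, which simply asserts that taking the infimum over $\boldsymbol\delta$ lands on the appropriate portion of the sphere without supplying the minimality verification you give.
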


\begin{thm}\label{thm: multobjvolrisk2}
Suppose that assumptions of Theorem \ref{thm: multobjvolrisk} hold. For all safety margins $\varepsilon_1,\ldots,\varepsilon_q >0$ that satisfy 
\[\varepsilon_1^2 + \ldots + \varepsilon_q^2 = \varepsilon^2 \] 
we have that
\[ \big[\mathcal{T}_{\varepsilon_1}(|\overline{y}_1|) \Sp, \Sp \ldots \Sp , \Sp \mathcal{T}_{\varepsilon_q}(|\overline{y}_q|)  \big]^T  ~\in~  \BB{\mathcal T}_{\varepsilon}(|\overline{\mathbf y}|). \]
\end{thm}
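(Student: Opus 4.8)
The plan is to verify the claimed membership directly by exhibiting the candidate point as an element of the set-valued joint risk characterized in Theorem \ref{thm: multobjvolrisk}. The only inputs needed are the closed-form scalar risk \eqref{eq: riskmeanquadss} and the explicit description of $\boldsymbol{\mathcal T}_\varepsilon(|\overline{\mathbf y}|)$ in Theorem \ref{thm: multobjvolrisk}. Since $\mathbbm 1_n \in \ker(C)$, each observable $\overline{y}_i$ is zero-mean Gaussian with variance $\overline{\sigma}_i^2$, so for a split $\varepsilon_i$ (with finite risk requiring $\varepsilon_i \geq \overline{\sigma}_i \sqrt{1 - 2/\pi}$) formula \eqref{eq: riskmeanquadss} gives $\mathcal T_{\varepsilon_i}(|\overline{y}_i|) = \sqrt{2/\pi}\,\overline{\sigma}_i - \sqrt{\varepsilon_i^2 - (1 - 2/\pi)\overline{\sigma}_i^2}$.

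First I would set $\mathbf p = [\mathcal T_{\varepsilon_1}(|\overline{y}_1|), \ldots, \mathcal T_{\varepsilon_q}(|\overline{y}_q|)]^T$ and define the displacement $\mathbf z := \mathbf p - \sqrt{2/\pi}\,\overline{\BB\sigma}$, whose $i$-th entry is $z_i = -\sqrt{\varepsilon_i^2 - (1 - 2/\pi)\overline{\sigma}_i^2}$. The membership test supplied by Theorem \ref{thm: multobjvolrisk} then reduces to two elementary checks on $\mathbf z$, namely componentwise nonpositivity $\mathbf z \preceq \BB 0$ and the norm equality $\|\mathbf z\| = \sqrt r$.

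The first check is immediate, since each $z_i$ is minus a (real, by the feasibility restriction) square root, hence $z_i \leq 0$. For the second check I would square and sum, using that squaring removes the outer radical in each $z_i$: $\|\mathbf z\|^2 = \sum_{i=1}^q \big(\varepsilon_i^2 - (1 - 2/\pi)\overline{\sigma}_i^2\big)$. Here the imposed constraint $\varepsilon_1^2 + \cdots + \varepsilon_q^2 = \varepsilon^2$ collapses the first group of terms to $\varepsilon^2$, leaving $\|\mathbf z\|^2 = \varepsilon^2 - (1 - 2/\pi)\sum_i \overline{\sigma}_i^2 = r$, which is precisely the radius appearing in Theorem \ref{thm: multobjvolrisk}. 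With both conditions met, $\mathbf p$ lies in $\boldsymbol{\mathcal T}_\varepsilon(|\overline{\mathbf y}|)$, as claimed.

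There is no genuine analytic obstacle here; the result is a bookkeeping consequence of the two preceding theorems, and the single point worth flagging is the implicit feasibility requirement. One should restrict attention to splits for which $\varepsilon_i \geq \overline{\sigma}_i \sqrt{1 - 2/\pi}$ holds for every $i$, so that each component of $\mathbf p$ is finite and each $z_i$ is a real number; under that restriction the norm identity automatically yields $r = \|\mathbf z\|^2 \geq 0$, consistent with the feasibility condition $r > 0$ of Theorem \ref{thm: multobjvolrisk} (the boundary case $r = 0$ degenerating to the singleton $\mathbf p = \sqrt{2/\pi}\,\overline{\BB\sigma}$).
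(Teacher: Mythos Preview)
Your proof is correct and follows essentially the same route as the paper's: both use the scalar formula \eqref{eq: riskmeanquadss} to write $\delta_i-\sqrt{2/\pi}\,\overline\sigma_i$ as a negative square root, then square and sum using $\sum_i\varepsilon_i^2=\varepsilon^2$ to recover $r$. You are slightly more careful than the paper in explicitly checking the sign condition $\mathbf z\preceq\mathbf 0$ and in flagging the implicit feasibility restriction $\varepsilon_i\geq\overline\sigma_i\sqrt{1-2/\pi}$ needed for each $\mathcal T_{\varepsilon_i}(|\overline y_i|)$ to be finite.
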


When all observed stochastic events are homogeneous, the joint risk in expectation \eqref{eq: riskmomv} reduces  to a scalar as follows   
\begin{equation}\label{eq: riskmomv2}
\mathcal T_\varepsilon(|\overline{\mathbf y}|)=\inf\Big\{\delta \in \mathbb R ~\Big|~ \mathbb{E}\big[v(|\overline{\mathbf y}|- \delta \mathbbm 1_q)\big] \leq v(\varepsilon)\Big\}.
\end{equation} 
The risk measure $\mathcal T_\varepsilon(|\overline{\mathbf y}|)$ is the smallest number $\delta$ that can be added to every element of  vector $|\overline{\mathbf y}|$ such that the cost of the resulting vector stays below the predetermined safety threshold.   


\begin{cor}Let us assume that $\mathbbm{1}_n \in \ker(C)$ for dynamical network \eqref{first-order} with observable \eqref{observables}. Let $\varepsilon^2>\sum_{j} \overline{\sigma}_j^2-\frac{2}{\pi}\big(\sum_j \overline{\sigma}_j \big)^2$, for $\overline{\sigma}_j$ the standard deviation of $\overline{y}_j$. Then, the risk in expectation  as in \eqref{eq: riskmomv2} with quadratic utility function satisfies
\begin{equation*}
 \mathcal T_{\varepsilon}\big(|\overline{\mathbf y}|\big)=\sqrt{\frac{2}{\pi}}\sum_{j=1}^q \overline{\sigma}_{j}-\sqrt{\varepsilon^2-\sum_{j=1}^{q} \overline{\sigma}^2_j+\frac{2}{\pi}\bigg(\sum_{j=1}^{q} \overline{\sigma}_j \bigg)^2}.
 \end{equation*}  

\end{cor}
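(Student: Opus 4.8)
The plan is to specialize the scalar reduction \eqref{eq: riskmomv2} to the steady-state Gaussian output and reduce the defining infimum to the solution of a single quadratic inequality in $\delta$. First I would record the distributional input: since $\mathbbm 1_n \in \ker(C)$, Lemma \ref{lem: boundofcov} guarantees that $C\Sigma_t C^T$ stays bounded, so the steady-state output is well defined and, as established in Section \ref{sec:stat}, $\overline{\mathbf y}\sim \mathcal N(\mathbf 0,\overline\Sigma)$. In particular each coordinate $\overline y_j$ is a centered Gaussian with variance $\overline\sigma_j^2=\overline\Sigma_{jj}$, so $|\overline y_j|$ follows a half-normal law whose first two moments are $\mathbb E[\,|\overline y_j|\,]=\sqrt{2/\pi}\,\overline\sigma_j$ and $\mathbb E[\,|\overline y_j|^2\,]=\overline\sigma_j^2$. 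These two identities, obtained once by integrating against the centered Gaussian density, are the only analytic facts the argument needs.

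The key observation is that the quadratic utility $v(\B z)=\B z^T\B z=\sum_j z_j^2$ is separable across coordinates, so that $\mathbb E\big[v(|\overline{\mathbf y}|-\delta\mathbbm 1_q)\big]=\sum_{j=1}^q \mathbb E\big[(|\overline y_j|-\delta)^2\big]$ depends only on the marginal statistics $\overline\sigma_j$ and \emph{not} on the off-diagonal (correlation) entries of $\overline\Sigma$. It is worth isolating this point explicitly, since it is exactly what lets the final value be expressed through the individual standard deviations $\overline\sigma_j$ alone (each computable by Theorem \ref{thm: main1}). Substituting the two half-normal moments, the constraint $\mathbb E[v(|\overline{\mathbf y}|-\delta\mathbbm 1_q)]\le v(\varepsilon)=\varepsilon^2$ collapses into a scalar quadratic inequality in $\delta$ with linear coefficient proportional to $-2\sqrt{2/\pi}\,\sum_j\overline\sigma_j$ and constant term $\sum_j\overline\sigma_j^2-\varepsilon^2$.

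From here the computation is routine. Completing the square (equivalently, the quadratic formula) shows that the admissible $\delta$ form a closed interval $[\delta_-,\delta_+]$ with vertex at $\sqrt{2/\pi}\,\sum_j\overline\sigma_j$, and since $\mathcal T_\varepsilon(|\overline{\mathbf y}|)$ is by \eqref{eq: riskmomv2} the infimum of the feasible set, it equals the smaller root $\delta_-$, which is precisely the claimed expression $\sqrt{2/\pi}\sum_j\overline\sigma_j-\sqrt{\varepsilon^2-\sum_j\overline\sigma_j^2+\tfrac{2}{\pi}(\sum_j\overline\sigma_j)^2}$. The interval is nonempty precisely when the discriminant is nonnegative, i.e. when $\varepsilon^2\ge \sum_j\overline\sigma_j^2-\tfrac{2}{\pi}\big(\sum_j\overline\sigma_j\big)^2$; the strict hypothesis in the statement then guarantees a real, finite square root and hence a genuine (non-$+\infty$) risk value.

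I expect the delicate points to be bookkeeping rather than conceptual. The main one is the branch selection: one must confirm that the infimum is attained at $\delta_-$ and not $\delta_+$, which follows because the quadratic opens upward and the feasible set is the interval between its two roots, so its smallest element is the left endpoint. A secondary check is that the feasibility/discriminant condition derived from the quadratic coincides verbatim with the hypothesis $\varepsilon^2>\sum_j\overline\sigma_j^2-\tfrac{2}{\pi}(\sum_j\overline\sigma_j)^2$, which both validates the stated range and rules out the $+\infty$ branch of \eqref{eq: riskmomv2}. Everything else reduces to the half-normal moment evaluation and elementary algebra.
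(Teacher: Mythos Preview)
Your approach---specialize the constraint $\mathbb E[v(|\overline{\mathbf y}|-\delta\mathbbm 1_q)]\le \varepsilon^2$ to scalar $\delta$, insert the half-normal first and second moments, and solve the resulting one-variable quadratic---is exactly the route the paper's proof of Theorem~\ref{thm: multobjvolrisk} takes when restricted to the diagonal $\boldsymbol\delta=\delta\mathbbm 1_q$, so conceptually you are aligned with the paper.

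There is, however, a genuine algebraic slip in your sketch. When you expand $\sum_{j=1}^q\mathbb E\big[(|\overline y_j|-\delta)^2\big]$, the term $\delta^2$ appears $q$ times, so the quadratic inequality is
\[
q\,\delta^2 \;-\; 2\sqrt{\tfrac{2}{\pi}}\Big(\textstyle\sum_j\overline\sigma_j\Big)\,\delta \;+\; \textstyle\sum_j\overline\sigma_j^2-\varepsilon^2 \;\le\; 0,
\]
with leading coefficient $q$, not $1$. Consequently the vertex sits at $\tfrac{1}{q}\sqrt{2/\pi}\sum_j\overline\sigma_j$, the discriminant condition reads $\varepsilon^2\ge \sum_j\overline\sigma_j^2-\tfrac{2}{q\pi}\big(\sum_j\overline\sigma_j\big)^2$, and the smaller root is
\[
\delta_-\;=\;\frac{1}{q}\,\sqrt{\tfrac{2}{\pi}}\sum_{j}\overline\sigma_j\;-\;\frac{1}{q}\sqrt{\,q\varepsilon^2-q\sum_{j}\overline\sigma_j^2+\tfrac{2}{\pi}\Big(\textstyle\sum_j\overline\sigma_j\Big)^{\!2}\,}.
\]
Your write-up implicitly treats the leading coefficient as $1$ (for instance when you locate the vertex at $\sqrt{2/\pi}\sum_j\overline\sigma_j$ and when you match the discriminant to the stated hypothesis), which is only correct for $q=1$; the printed corollary appears to carry the same omission. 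All the remaining ingredients in your outline---separability of the quadratic utility across coordinates, the resulting irrelevance of the off-diagonal entries of $\overline\Sigma$, and the identification of the infimum with the smaller root of the feasible interval---are sound.
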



For vector-valued risk measure in expectation w.r.t.  exponential utility function, we may use utility function 
\[v(\BB z)=e^{\beta (z_1+\ldots+z_q)}.\]
The definition of risk measure in this case is given by 
\begin{equation}\label{eq: riskvectorexponential}
\mathcal T_\varepsilon(|\mathbf{\overline y}|)=\inf\Big\{ \boldsymbol \delta\in \mathbb R^q ~\Big|~  \mathbb E \big[e^{\beta \sum_{i}(|\overline{y}_i|-\delta_i)} \big] \leq e^{\beta \varepsilon}\Big\}.
\end{equation}  Following a simple continuity argument, it can be shown that the vector of risk  $\mathcal T_\varepsilon(|\mathbf{\overline y}|)=\boldsymbol \delta=(\delta_1,\dots,\delta_q)$ satisfies
\begin{equation}\label{eq: riskvectorexponential2}
\sum_{i=1}^q \delta_i=\frac{1}{\beta}\ln\left( \mathbb E\Big[e^{\beta \big(\sum_{i}(|\overline{y}_i|-\frac{1}{q}\varepsilon \big)}\Big]\right)
\end{equation} 
The calculation of the right hand side of \eqref{eq: riskvectorexponential2} becomes significantly harder as it involves  calculation of functions of joint random variables, which is beyond the scope of this paper. 

\begin{rem} The vector-valued risk metrics, enjoy similar functional properties with the scalar risk metrics of \S \ref{sec: riskmeasures}. Indeed, Propositions \ref{prop: cohrisk} and \ref{prop: riskmom} can be appropriately modified and extended to vector-valued risk, along the lines of \cite{Jouini2004}. The details are omitted due to space limitation.
\end{rem}

\section{Fundamental Limits and Tradeoffs}\label{sec:tradeoffs}
The presence of time-delay induces fundamental limits and tradeoffs on the least achievable risk values for the class of dynamical network \eqref{first-order} in steady-state. We characterize some Heisenberg-like inequalities and reveal inherent interplay between the systemic risk measure and a measure of network connectivity. 


The first limit occurs due to stability condition of Assumption \ref{assum1} that can be quantified as  the following lower bound on the total effective resistance of the coupling graph of the network 
\begin{equation}\label{eff-limit}
 \ER > \frac{2 n(n-1)\tau}{\pi}.
\end{equation}
This implies that  in  time-delay network \eqref{first-order} the feedback structure (represented by the weighted graph $\G$) may not have an overall strong couplings beyond the above limit, e.g., feedback loops may not have very large gains and/or dense topology especially when all gains are equal (which corresponds to unweighted $\G$).

\begin{figure}\center
\includegraphics[scale=0.8]{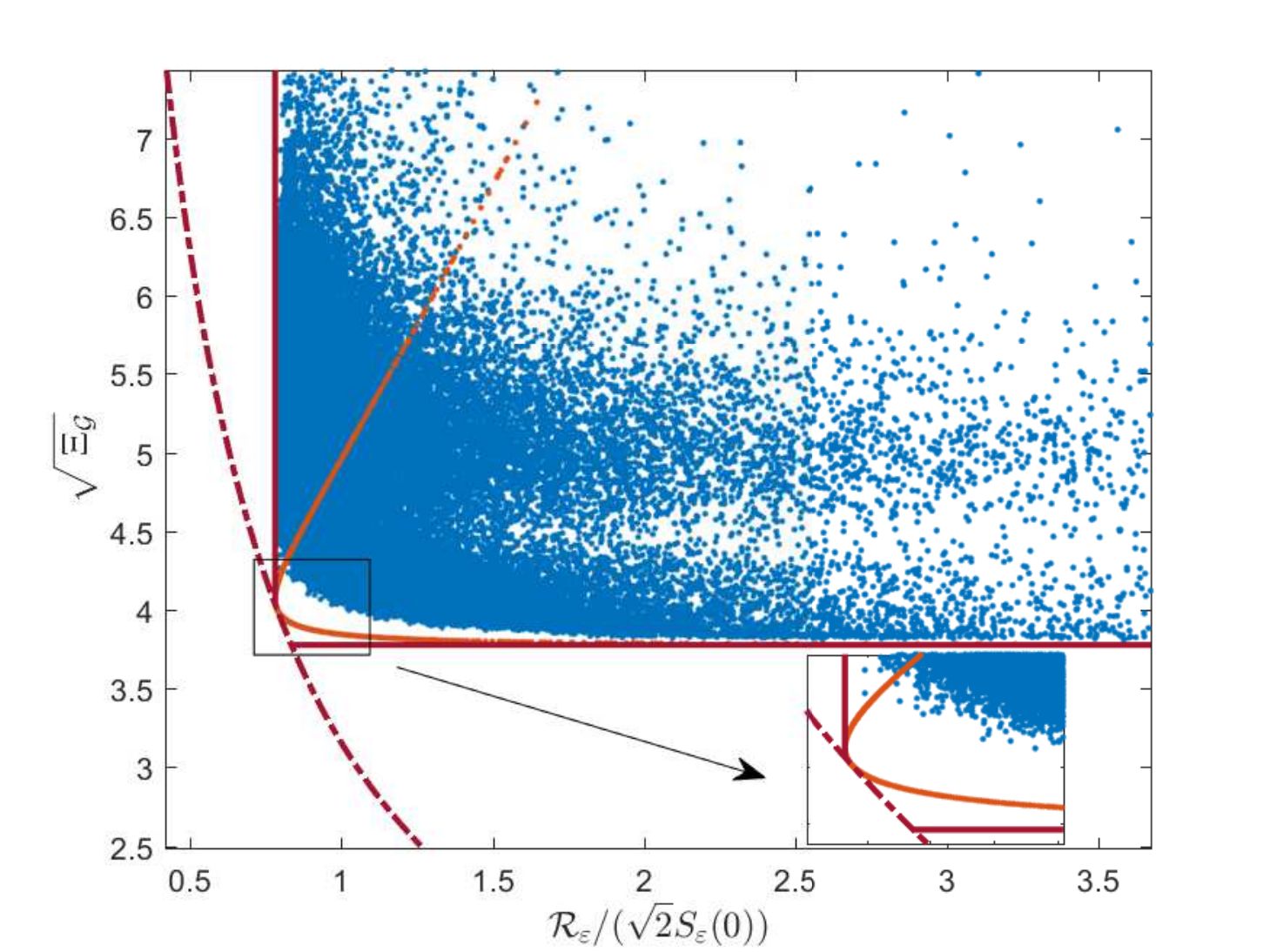}
\caption{Fundamental and trade-off bound curves for the systemic risk. Blue dots are randomly generated graphs over $n=8$ agents, for fixed $\tau=0.4$ and output vector $\mathbf c$. The red vertical and horizontal lines are the hard limits, and the dashed line is the trade-off principle. The orange curve illustrates a particular family of graphs that achieve the trade-off curve, asserting that all three bounds are sharp.}\label{fig: trade-off}
\end{figure}

\begin{thm}\label{thm: tradeoff}
Let observable of dynamical network \eqref{first-order} be scalar as in \eqref{scalar-y}, i.e., $q=1$. Then, there is a hard limit on the least achievable value for  risk of large fluctuations  as follows 
\begin{equation}\label{limit-risk}
\mathcal R_{\varepsilon}(|\overline{y}|) \Sp \geq \Sp   \kappa_* \Sp \sqrt{\tau} 
\end{equation}
where
\begin{equation*}\label{eq: fl}
\kappa_*= \|\mathbf c\| \Sp \frac{ b \Sp S_{\varepsilon}(0)} {\sqrt{1-\sin(z^+)}}
\end{equation*} 
and $z^+$ is the positive solution of equation $\cos(z)=z$.  Furthermore, the least achievable value for risk of high volatility cannot be improved beyond some lower bound that is characterized by:
\begin{itemize}
\item[(a)] for quadratic utility function: \begin{equation*}
\mathcal T_\varepsilon (|\overline y|) \geq \sqrt{\frac{2}{\pi}}\sigma_*-\sqrt{\varepsilon^2-\bigg(1-\frac{2}{\pi}\bigg)\sigma_*^2}
\end{equation*} 
\item[(b)] for exponential utility function: \begin{equation*}\begin{split}
\mathcal T_{\varepsilon}(|\overline y|)\geq \frac{\beta \sigma_*^2}{2}&+\frac{\ln\big(1+\mathrm{erf}(\frac{\beta\sigma_*}{2})\big)}{\beta}-\varepsilon
\end{split}
\end{equation*}  
\end{itemize}
where
\begin{equation}\label{eq: fl}
\begin{split} 
\sigma_*=b \Sp \|\mathbf c\| \Sp \sqrt{\frac{\tau }{2\big(1-\sin(z^+)\big)}}.
\end{split}
\end{equation} 
\end{thm}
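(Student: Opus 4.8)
The plan is to reduce all three bounds to a single lower bound on the steady-state standard deviation $\overline\sigma$, and then to propagate this bound through the closed-form risk expressions of Theorem~\ref{thm: main1}. The engine of the argument is the observation that each of the three steady-state risk measures --- $\mathcal R_\varepsilon(|\overline y|)$ and $\mathcal T_\varepsilon(|\overline y|)$ for quadratic and exponential utility --- is a strictly increasing function of $\overline\sigma$. For the probability risk this is immediate since $\mathcal R_\varepsilon(|\overline y|)=\sqrt 2\,S_\varepsilon(0)\,\overline\sigma$ is linear in $\overline\sigma$; for the two utility-based risks I would differentiate \eqref{eq: riskmeanquadss} and \eqref{eq: expriskss} with respect to $\overline\sigma$ and check positivity (in the quadratic case both terms of the derivative are manifestly positive on the feasible set $\varepsilon\ge\overline\sigma\sqrt{1-2/\pi}$; in the exponential case the derivative equals $\beta\overline\sigma+\tfrac{1}{\sqrt\pi}e^{-\beta^2\overline\sigma^2/4}/(1+\mathrm{erf}(\beta\overline\sigma/2))>0$). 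Consequently, once I show $\overline\sigma\ge\sigma_*$ with $\sigma_*$ as in \eqref{eq: fl}, all three stated inequalities follow by substituting $\overline\sigma=\sigma_*$.

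The heart of the proof is therefore the bound $\overline\sigma\ge\sigma_*$. First I would recast the variance \eqref{variance} in terms of the stability function $f$ of \eqref{eq: functionf}. Using $\tfrac{\cos x}{1-\sin x}=2x\,f(x)$, the per-mode coefficient becomes $\tfrac{\cos(\lambda_k\tau)}{\lambda_k(1-\sin(\lambda_k\tau))}=2\tau\,f(\lambda_k\tau)$, so that
\[
\overline\sigma^2=b^2\tau\sum_{k=2}^n f(\lambda_k\tau)\,(c_k^Q)^2.
\]
Since Assumption~\ref{assum1} forces every argument $\lambda_k\tau$ into the domain $(0,\tfrac\pi2)$ of $f$, each factor is bounded below by $\min_{x\in(0,\pi/2)}f(x)$. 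To locate this minimum I would simplify $g(x):=\tfrac{\cos x}{1-\sin x}=\tfrac{1+\sin x}{\cos x}=\sec x+\tan x$, whence $g'(x)=\sec x\,g(x)$, and the stationarity condition $f'(x)=0$ --- equivalent to $x\,g'(x)=g(x)$ --- collapses after cancelling $g(x)>0$ to $x\,\sec x=1$, i.e.\ $\cos x=x$. This has a unique root $z^+$ on $(0,\tfrac\pi2)$ (because $x-\cos x$ is strictly increasing from $-1$ to $\tfrac\pi2$), and since $f(x)\to+\infty$ at both endpoints, $z^+$ is the global minimizer. Evaluating with $\cos z^+=z^+$ gives $f(z^+)=\tfrac{1}{2(1-\sin z^+)}$.

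The remaining step is to discharge the eigenvector weights: because $\mathbf c^T\mathbf 1_n=0$ and $\mathbf q_1=\tfrac{1}{\sqrt n}\mathbf 1_n$, I have $c_1^Q=0$, and orthogonality of $Q$ gives $\sum_{k=2}^n(c_k^Q)^2=\|\mathbf c^Q\|^2=\|\mathbf c\|^2$. Bounding each $f(\lambda_k\tau)\ge f(z^+)$ then yields
\[
\overline\sigma^2\ \ge\ b^2\tau\,f(z^+)\,\|\mathbf c\|^2\ =\ \frac{b^2\tau\,\|\mathbf c\|^2}{2(1-\sin z^+)}\ =\ \sigma_*^2,
\]
which is exactly the claimed bound; substituting into $\mathcal R_\varepsilon(|\overline y|)=\sqrt2\,S_\varepsilon(0)\overline\sigma$ reproduces $\kappa_*\sqrt\tau$ after simplification, and the utility bounds follow from the monotonicity established above. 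The one genuinely delicate point is the minimization of $f$: the payoff --- that the minimizer is pinned down by the transcendental identity $\cos z=z$ --- is not visible in the raw expression \eqref{eq: functionf} and hinges on the algebraic reduction $\tfrac{\cos x}{1-\sin x}=\sec x+\tan x$, which exposes the logarithmic-derivative structure $g'/g=\sec x$; everything else is substitution or a routine sign check.
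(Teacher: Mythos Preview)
Your proposal is correct and follows essentially the same route as the paper's proof: bound $\overline\sigma^2$ below by minimizing $f$ over $(0,\pi/2)$, use orthogonality of $Q$ together with $c_1^Q=0$ to collapse the weights to $\|\mathbf c\|^2$, and then push the bound through the monotone risk formulas. You actually supply more detail than the paper does---in particular the derivation of the stationarity condition $\cos z=z$ via the identity $\tfrac{\cos x}{1-\sin x}=\sec x+\tan x$, and the explicit monotonicity checks for the two utility risks---whereas the paper simply asserts that the infimum of $f$ occurs at $z^+$ and that the risk bounds ``are then directly derived.''
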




The lower limit can be achieved by graphs with Laplacian spectrum that satisfy $\lambda_i\tau=z^+$ for all $i>1$, i.e., a complete graph over $n$  nodes with coupling weights $k_{ij}\approx \frac{0.7391}{n\tau}$. Inequalities \eqref{eff-limit} and \eqref{limit-risk}  represent {\it hard limits} as their lower bounds are independent of the coupling structure  of the network; by fixing $\varepsilon$ they only depend on the size of the network and time-delay.  When there is no time-delay, i.e., $\tau=0$, hard limits disappear. Because in no time delay case, one may simultaneously improve risk and connectivity by arbitrarily increasing the feedback gains. This shows that time-delay is the main reason  for emergence of these fundamental limits on risk and connectivity. 

\begin{thm}\label{thm: tradeoff2}
Let observable of dynamical network \eqref{first-order} be scalar, i.e., $q=1$. There is an inherent fundamental tradeoff between least achievable values for risk of large fluctuations and network interconnectivity that can be quantified as follows
\begin{equation}\label{tradeoff}
\mathcal R_{\varepsilon}(|\overline{y}|) \Sp \cdot \Sp  \sqrt{\ER} \Sp >  \Sp \vartheta_* \Sp \sqrt{n\tau}
\end{equation} 
where $\ER$ is the total effective resistance of the coupling graph of the network and 
\begin{equation*}\label{eq: fll}
\vartheta_* \Sp = \Sp \|\mathbf c\| \Sp b  \Sp S_{\varepsilon}(0) \Sp \sqrt{p^\dag} 
\end{equation*} 
with $p^\dag=\min\{p_2^*, \ldots, p_n^*\}$ in which positive numbers $p_k^*$ is the minimum of $$p_k(x)=\bigg[(k-1)+\frac{2(n-k)}{\pi} x\bigg]\frac{\cos(x)}{x^2(1-\sin(x))}$$
over $x\in (0,\pi/2)$ for all $k=2,\dots,n$.  Moreover, fundamental tradeoffs emerge between the least achievable values for volatility risk and network connectivity in the following sense 
\begin{equation*}
\big(\mathcal T_{\varepsilon}(|\overline{y}|)+\varepsilon\big) \sqrt{\Xi_{\mathcal G}} \Sp > \Sp \Delta(\tau)
\end{equation*} 
in which 
\begin{equation*}
\Delta(\tau)=\bigg(\sqrt{\frac{2}{\pi}}+\frac{1}{2\varepsilon}\bigg(1-\frac{2}{\pi}\bigg)\sigma_*\bigg)\varrho_*
\end{equation*} 
for the quadratic utility function and 
\begin{equation*}
\Delta(\tau) = \frac{\beta}{2}\sigma_*\varrho_*+ \frac{\mathrm{erf}(\beta\sigma_*/2)}{\beta} \Sp \frac{n(n-1)\tau}{\pi}
\end{equation*}  
for the exponential utility function, where 
\begin{equation}\label{eq: fll}
\varrho_* \Sp = \Sp \|\mathbf c\| \Sp | b | \Sp \sqrt{\frac{n\tau}{2} p^\dag}.
\end{equation}  
\end{thm}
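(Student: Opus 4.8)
The plan is to pass to the Laplacian eigenbasis and turn both factors in \eqref{tradeoff} into spectral sums. By Theorem \ref{thm: main1}, $\mathcal R_{\varepsilon}(|\overline y|)=\sqrt 2\,S_{\varepsilon}(0)\,\overline\sigma$ with $\overline\sigma^2=\tfrac{b^2}{2}\sum_{k=2}^n g_k\,(c_k^Q)^2$, where $g_k=\tfrac{\cos(\lambda_k\tau)}{\lambda_k(1-\sin(\lambda_k\tau))}=2\tau f(\lambda_k\tau)$ for the function $f$ of \eqref{eq: functionf}; also $\ER=n\sum_{i=2}^n\lambda_i^{-1}$. Since $\mathbf c^T\mathbf 1_n=0$ forces $c_1^Q=0$, orthogonality of $Q$ gives $\|\mathbf c\|^2=\sum_{k=2}^n (c_k^Q)^2$. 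Writing $w_k=(c_k^Q)^2\ge 0$ and squaring \eqref{tradeoff}, the common positive factor $S_{\varepsilon}(0)^2 b^2 n$ cancels and the claim reduces to the purely spectral inequality $\big(\sum_{k=2}^n g_k w_k\big)\big(\sum_{i=2}^n\lambda_i^{-1}\big)> \tau^2 p^\dag\,\|\mathbf c\|^2$, after which re-assembling the constants returns \eqref{tradeoff} with $\vartheta_*=\|\mathbf c\|\,b\,S_{\varepsilon}(0)\sqrt{p^\dag}$.

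Next I expand the product and exploit $w_k\ge 0$: $\big(\sum_k g_k w_k\big)\big(\sum_i\lambda_i^{-1}\big)=\sum_{k=2}^n w_k\,a_k$ with $a_k:=g_k\sum_{i=2}^n\lambda_i^{-1}$, so by the elementary bound $\sum_k w_k a_k\ge\big(\min_k a_k\big)\sum_k w_k$ it suffices to show $a_k>\tau^2 p^\dag$ for every $k$. Here I split $\sum_{i=2}^n\lambda_i^{-1}$ along the ordering $\lambda_2\le\cdots\le\lambda_n$: the $k-1$ indices $i\le k$ obey $\lambda_i^{-1}\ge\lambda_k^{-1}$, while the $n-k$ indices $i>k$ obey $\lambda_i^{-1}>\tfrac{2\tau}{\pi}$ by the strict stability bound $\lambda_i\tau<\pi/2$ of Assumption \ref{assum1}, giving $\sum_{i=2}^n\lambda_i^{-1}>(k-1)\lambda_k^{-1}+(n-k)\tfrac{2\tau}{\pi}$. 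Substituting $x=\lambda_k\tau$ (so $\lambda_k^{-1}=\tau/x$), using $g_k=2\tau f(x)$ and the identity $\tfrac{\cos x}{x^2(1-\sin x)}=\tfrac{2f(x)}{x}$, a direct computation collapses this to $a_k>\tau^2 p_k(x)\ge\tau^2 p_k^*\ge\tau^2 p^\dag$, with $p_k$ exactly the function in the statement. Strictness propagates because $\lambda_i^{-1}>\tfrac{2\tau}{\pi}$ is strict for every $i$, so for any nonzero $\mathbf c$ at least one termwise bound is strict; taking square roots then yields \eqref{tradeoff}.

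For the two risk-in-expectation bounds I reuse this variance estimate together with the variance hard limit $\overline\sigma\ge\sigma_*$, which is precisely the content of Theorem \ref{thm: tradeoff} (equivalently $\mathcal R_{\varepsilon}\ge\kappa_*\sqrt\tau$). For the quadratic utility, \eqref{eq: riskmeanquadss} gives $\mathcal T_{\varepsilon}(|\overline y|)+\varepsilon=\sqrt{2/\pi}\,\overline\sigma+\varepsilon-\sqrt{\varepsilon^2-(1-2/\pi)\overline\sigma^2}$; linearizing the square root by concavity, $\sqrt{\varepsilon^2-a\overline\sigma^2}\le\varepsilon-\tfrac{a\overline\sigma^2}{2\varepsilon}$ with $a=1-2/\pi$, gives $\mathcal T_{\varepsilon}(|\overline y|)+\varepsilon\ge\sqrt{2/\pi}\,\overline\sigma+\tfrac{1-2/\pi}{2\varepsilon}\,\overline\sigma^2$. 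Multiplying by $\sqrt{\ER}$ and inserting the product bound $\overline\sigma\sqrt{\ER}\ge\varrho_*$ together with $\overline\sigma^2\sqrt{\ER}\ge\sigma_*\varrho_*$ (the latter from $\overline\sigma\ge\sigma_*$) reproduces $\Delta(\tau)$. For the exponential utility, \eqref{eq: expriskss} gives $\mathcal T_{\varepsilon}(|\overline y|)+\varepsilon=\tfrac{\beta\overline\sigma^2}{2}+\tfrac1\beta\ln\!\big(1+\mathrm{erf}(\beta\overline\sigma/2)\big)$; both summands increase in $\overline\sigma$, so monotonicity of $\ln$ and $\mathrm{erf}$, the bound $\overline\sigma\ge\sigma_*$, the spectral product bound, and the effective-resistance hard limit \eqref{eff-limit} bound each summand below by the corresponding term of $\Delta(\tau)$.

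The hard part is the second paragraph: recognizing that, after the ordering-plus-stability split, the modewise quantity $a_k$ is \emph{exactly} $\tau^2 p_k(\lambda_k\tau)$, and that minimizing $p_k$ over $x\in(0,\pi/2)$ and then over $k$ produces a single constant $p^\dag$ valid uniformly over all admissible spectra and all output directions $\mathbf c$. This is where the genuine tension between connectivity (encoded in $\sum_i\lambda_i^{-1}$) and the delay-shaped variance weights $g_k$ is used, and it is also the delicate point for strictness, since one must verify that the two families of terms ($i\le k$ versus $i>k$) cannot all be tight simultaneously for a genuine connected graph under the strict constraint $\lambda_n\tau<\pi/2$.
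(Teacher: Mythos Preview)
Your argument is essentially the paper's proof: you form $\overline{\sigma}^2\,\ER$ as a double spectral sum, split $\sum_i\lambda_i^{-1}$ at the index $k$ using $\lambda_i^{-1}\ge\lambda_k^{-1}$ for $i\le k$ and $\lambda_i^{-1}>2\tau/\pi$ for $i>k$, and recognize the resulting modewise bound as $\tau^2 p_k(\lambda_k\tau)$, exactly as the paper does; the quadratic-utility step via concavity of $\sqrt{\cdot}$ is likewise the paper's mean-value-theorem bound. The only place your sketch is thinner than the paper's is the exponential-utility term: monotonicity alone does not produce the factor $\mathrm{erf}(\beta\sigma_*/2)$ from $\ln\bigl(1+\mathrm{erf}(\beta\overline\sigma/2)\bigr)$---the paper inserts the elementary inequality $\ln(1+x)\ge \tfrac{x}{1+x}\ge \tfrac{x}{2}$ (valid since $\mathrm{erf}\le 1$) before invoking $\overline\sigma\ge\sigma_*$ and the effective-resistance limit, and you should make that step explicit.
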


\begin{table*}[t]
\center
 \begin{tabular}{||c || l ||} 
 \hline
 Graph &  Risk of Large Fluctuations in Probability for Scalar Events   \\ [0.5ex] 
 \hline\hline
 $\mathbf K_n$ & $ \mathcal R_{\varepsilon}(|\overline {y}_i|)= \sqrt{2}S_\varepsilon(0)\sqrt{\tau\left(1-\frac{1}{n}\right) f(n\tau)} ~~~~\textrm{for}~~ i=1,\dots,n$   \\ 
 \hline
 $\mathbf{W}_{n+1}$ & $ \mathcal R_{\varepsilon}(|\overline {y}|_i)= \sqrt{2}S_\varepsilon(0) \times \begin{cases}  \sqrt{n\tau f\big((n+1)\tau\big)} & ~~\textrm{if}~~i=1 \\
\sqrt{\frac{\tau}{n}~ \displaystyle \sum_{k=2}^{n+1} f\left(3-2\cos \left(\frac{2\pi(k-1)\tau}{n}\right)\right)} &~~\textrm{if}~~ i=2,\dots,n 
\end{cases}
$ \\
 \hline
 $\mathbf K_{n_1,n_2}$ & $ \mathcal R_{\varepsilon}(|\overline {y}|_i)= \sqrt{2}S_\varepsilon(0) \times \begin{cases}  \sqrt{\big(1-\frac{1}{n_2}\big)\tau f(n_2\tau)+\frac{n_2}{n_1 n}\tau f(n\tau)} & ~~\textrm{if}~~ i\in \G_1 \\
\sqrt{\big(1-\frac{1}{n_1}\big)\tau f(n_1\tau)+\frac{n_1}{n_2 n}\tau f(n\tau)} & ~~\textrm{if}~~ i\in \G_2 
\end{cases}
$  \\
 \hline
 $\mathbf{P}_n$ & $ \mathcal R_{\varepsilon}(|\overline {y}|_i)= \sqrt{2}S_\varepsilon(0)\displaystyle  \sqrt{ \frac{2\tau}{n}\sum_{k=2}^{n}\cos^2\left( \frac{\pi (n-k+1)(2i-1)}{2n}\right)
 f\left(2\left[1-\cos \left(\frac{\pi (k-1)}{n}\right)\right]\tau\right)},~ i=1,\dots,n  $   \\
 \hline
 $\mathbf{R}_n$ & $ \mathcal R_{\varepsilon}(|\overline {y}|_i)= \sqrt{2}S_\varepsilon(0)\sqrt{\tau\displaystyle  \sum_{k=2}^{n} f\left(2\left[1-\cos \left(\frac{2\pi(k-1)}{n}\right)\right]   \tau\right) },~ i=1,\dots,n  $  \\ [1ex] 
 \hline
\end{tabular}\caption{Explicit formulas for systemic risk measure of individual events. The function $f(x)$ is defined in \eqref{eq: functionf}.  For a detailed analysis on the derivation of the formulas we refer to Appendix C.  }\label{table: flucrisktopgraph}
\vspace{-0.5cm}
\end{table*}

Inequality \eqref{tradeoff} implies that network \eqref{first-order} will be prone to higher levels of risk if the connectivity enhances, e.g., by adding new coupling links and/or increasing the  feedback gains. The origin of this intrinsic tradeoff is time-delay; inequality \eqref{tradeoff} would be trivial  if $\tau=0$. We observe that by increasing $\tau$, while preserving stability, interplay between systemic risk and network connectivity becomes more apparent. The key observation is that by fixing the network size, time delay and $\varepsilon$ inequalities \eqref{limit-risk} and \eqref{tradeoff} remain true for all dynamical networks \eqref{first-order} with arbitrary graph topologies and output matrices $c$; in other words, these fundamental limits and tradeoffs are {\it universal}. In Figure \ref{fig: trade-off}, we verify tightness of our bounds using extensive simulation results. The blue dots are numerically calculated based on the following procedure: (i)  randomly generate a connected graph that satisfies stability condition  in Assumption \ref{assum1} , (ii) randomly generate an output vector $c$, and (iii) compute the scaled pair $(\sqrt{\Xi_{\mathcal G}}, \mathcal R_{\varepsilon}(\bar y))/(\sqrt{2}S_{\varepsilon}(0))$. The  orange color curve in Figure \ref{fig: trade-off} illustrates a particular family of complete graphs with parametrized identical coupling strengths that is specifically constructed to  highlight  sharpness of our  bounds in \eqref{limit-risk} and \eqref{tradeoff}. 


\begin{thm}\label{thm: tradeoffflucriskmv}
Let us consider dynamical network \eqref{first-order} with vector of observables \eqref{observables}.  There is an intrinsic  fundamental tradeoff between the least achievable values for the joint vector-valued risk measure on $\R^q_+$ and network interconnectivity that can be characterized by inequality 
\begin{equation}\label{tradeoff-2}
\boldsymbol{\mathcal R}_{\varepsilon}(|\overline{\mathbf y}|) \Sp \cdot \Sp  \sqrt{\ER} \Sp  \succ  \Sp \BB \Theta_* \Sp \sqrt{n\tau}
\end{equation} 
in which $\BB \Theta_* =  b \Sp S_{\varepsilon}(0) \Sp \sqrt{p^\dag}~ \big[ \Sp \|\mathbf c_1\|, \Sp \ldots \Sp, \|\mathbf c_q\| \Sp \big]^T.$
\end{thm}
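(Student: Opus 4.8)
The plan is to reduce the vector-valued tradeoff \eqref{tradeoff-2} to the scalar tradeoff of Theorem~\ref{thm: tradeoff2} applied coordinate by coordinate, with the coupling between the two supplied by the lower bound embedded in Theorem~\ref{thm: doubleinequality2}. Throughout, $\mathbbm 1_n \in \ker(C)$, so that each scalar observable $\overline y_i = \mathbf c_i^T \mathbf x_t$ has a well-defined steady-state law with $\mathbf c_i^T \mathbbm 1_n = 0$, and the individual steady-state risks $\mathcal R_\varepsilon(|\overline y_i|)$ are finite and given by Theorem~\ref{thm: main1}. Because $\boldsymbol{\mathcal R}_\varepsilon(|\overline{\mathbf y}|)$ is set-valued (a Pareto set), the inequality \eqref{tradeoff-2} is to be understood as holding for every $\boldsymbol\delta \in \boldsymbol{\mathcal R}_\varepsilon(|\overline{\mathbf y}|)$, so the task is to bound each such $\boldsymbol\delta$ from below componentwise.

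First I would invoke Theorem~\ref{thm: doubleinequality2}, which provides the inclusion $\boldsymbol{\mathcal R}_\varepsilon(|\overline{\mathbf y}|) \subseteq \mathbb W_{\mathfrak{R}_\varepsilon (|\overline{\mathbf y}|)}$. Reading the defining constraints of $\mathbb W$ in \eqref{epsilon-q}, any admissible $\boldsymbol\delta$ obeys the lower bound $\delta_i \geq \mathcal R_\varepsilon(|\overline y_i|)$ for every $i = 1,\dots,q$. This is the crucial structural fact: it collapses the genuinely multivariate joint risk onto the one-dimensional individual risks, each of which is controlled by the scalar theory of Section~\ref{subs: statisticsofoutput}.

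Next I would apply the scalar tradeoff of Theorem~\ref{thm: tradeoff2} to each observable $\overline y_i$ in turn, with the output vector taken to be $\mathbf c_i$. The quantity $p^\dag = \min\{p_2^*,\dots,p_n^*\}$ appearing there is determined solely by the network size, the Laplacian spectrum, and the common delay $\tau$; it is manifestly independent of the chosen output vector. Hence one and the same $p^\dag$ governs all $q$ coordinates, and substituting $\mathbf c = \mathbf c_i$ into the scalar bound yields
\[
\mathcal R_\varepsilon(|\overline y_i|)\,\sqrt{\ER} \; > \; \|\mathbf c_i\|\, b\, S_\varepsilon(0)\,\sqrt{p^\dag}\,\sqrt{n\tau} \;=\; [\BB\Theta_*]_i\,\sqrt{n\tau},
\qquad i=1,\dots,q.
\]

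Finally, chaining the two estimates gives, for each $i$ and each $\boldsymbol\delta \in \boldsymbol{\mathcal R}_\varepsilon(|\overline{\mathbf y}|)$, the bound $\delta_i \sqrt{\ER} \geq \mathcal R_\varepsilon(|\overline y_i|)\sqrt{\ER} > [\BB\Theta_*]_i \sqrt{n\tau}$, which is exactly the componentwise strict inequality \eqref{tradeoff-2}. The step I expect to require the most care is not any computation but the set-valued bookkeeping: the strict inequality must hold uniformly over the entire Pareto set, and this is secured precisely because the lower bound $\delta_i \geq \mathcal R_\varepsilon(|\overline y_i|)$ from Theorem~\ref{thm: doubleinequality2} is itself uniform over that set. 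Everything else is a verbatim transcription of the scalar tradeoff to each coordinate.
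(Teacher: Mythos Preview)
Your proposal is correct and follows exactly the route the paper indicates: the paper itself omits the proof, stating only that it is ``based on results of Theorems~\ref{thm: doubleinequality2} and~\ref{thm: tradeoff2},'' and your argument fills in precisely those two steps in the natural way.
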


Proof of  this theorem is based on results of Theorems \ref{thm: doubleinequality2} and \ref{thm: tradeoff2} and is omitted.

\begin{cor}\label{cor: multilimitstradeoffs}
Suppose that $\mathbbm{1}_n \in \ker(C)$ for dynamical network \eqref{first-order} with output matrix $C=\big[\Sp \mathbf c_1, \ldots, \mathbf c_q \Sp \big]^T$. The vector-valued joint risk in expectation \eqref{eq: riskmomv} cannot be improved beyond some hard limits that are characterized as lower bounds in the following inequalities:

\noindent $(i)$ For quadratic utility function  $v(\BB z)=\BB z^T\BB z$, it follows that 
\begin{equation*}
 \boldsymbol{\mathcal T}_{\varepsilon}(|\mathbf{\overline{y}}|) ~\succeq~ \sqrt{\frac{2}{\pi}}\Sp \boldsymbol{\overline{\sigma}}^* ~-~\mathbbm 1_q ~\sqrt{\varepsilon^2-\bigg(1-\frac{2}{\pi}\bigg) \Sp \|\boldsymbol{ \overline{\sigma}}^* \|^2}
\end{equation*}
where  $\boldsymbol{ \overline{\sigma}}^*=[\Sp \sigma_1^*, \ldots, \sigma_q^*]^T$ with
$$
\sigma_i^*=\|\mathbf c_i\| \Sp b \Sp  \sqrt{\frac{\tau }{2\big(1-\sin(z^+)\big)}}, 
$$ and $z^+$ is the positive solution of equation $\cos(z)=z$.

\noindent $(ii)$ For exponential utility function $v(\BB z)=e^{\beta (z_1 + \ldots + z_q)}$, it holds that
\begin{equation*}
\mathbbm 1_q^T \boldsymbol{\mathcal T}_{\varepsilon}(|\mathbf{\overline{y}}|) ~\geq~   b ~\sqrt{\frac{\tau}{\pi\big(1-\sin(z^+)\big)}}~\sum_{i=1}^q\|\mathbf c_i\|-\varepsilon.
\end{equation*}
\end{cor}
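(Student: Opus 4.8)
The plan is to reduce this vector-valued statement to the scalar steady-state formulas of Theorem \ref{thm: main1}, combined with the set-valued characterization of Theorem \ref{thm: multobjvolrisk} for the quadratic case and the feasibility identity \eqref{eq: riskvectorexponential2} for the exponential case, and then to feed into both the single universal ingredient that $\overline{\sigma}_i \geq \sigma_i^*$ for every admissible graph. Since $\mathbbm 1_n \in \ker(C)$, each scalar observable satisfies $\overline{y}_i \sim \mathcal N(0,\overline{\sigma}_i^2)$, so all the scalar machinery applies coordinatewise.

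First I would establish the universal spectral bound on each steady-state standard deviation. Writing the variance \eqref{variance} through the function $f$ of \eqref{eq: functionf} as $\overline{\sigma}_i^2 = b^2\tau\sum_{k=2}^n f(\lambda_k\tau)\,(c_{i,k}^Q)^2$, where $c_{i,k}^Q = \mathbf q_k^T \mathbf c_i$, and using the Parseval identity $\sum_{k=2}^n (c_{i,k}^Q)^2 = \|\mathbf c_i\|^2$ (valid because $\mathbf c_i^T\mathbbm 1_n=0$), the task is reduced to minimizing $f$ over $(0,\pi/2)$. A logarithmic-derivative computation gives $\frac{\cos x}{1-\sin x}-\tan x = \frac{1}{\cos x}$, so the stationarity condition $f'(x)=0$ collapses to $\cos x = x$; hence $f$ attains its minimum at $z^+$ with value $f(z^+)=\frac{1}{2(1-\sin(z^+))}$. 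This yields $\overline{\sigma}_i \geq \sigma_i^*$ with $\sigma_i^*=\|\mathbf c_i\|\,b\sqrt{\tau/(2(1-\sin(z^+)))}$, exactly the bound already underlying the proof of Theorem \ref{thm: tradeoff}, which I would cite rather than rederive.

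For part $(i)$ I would invoke Theorem \ref{thm: multobjvolrisk}: under feasibility every $\boldsymbol\delta\in\boldsymbol{\mathcal T}_\varepsilon(|\overline{\mathbf y}|)$ has the form $\sqrt{2/\pi}\,\overline{\boldsymbol\sigma}+\mathbf z$ with $\mathbf z\preceq\mathbf 0$ and $\|\mathbf z\|=\sqrt r$, where $r=\varepsilon^2-(1-2/\pi)\sum_j\overline{\sigma}_j^2$. Because $z_i\in[-\sqrt r,0]$, each coordinate obeys $\delta_i \geq \sqrt{2/\pi}\,\overline{\sigma}_i-\sqrt r$. Here $\overline{\sigma}_i\geq\sigma_i^*$ raises the first term, while $\sum_j\overline{\sigma}_j^2\geq\|\overline{\boldsymbol\sigma}^*\|^2$ forces $r\leq\varepsilon^2-(1-2/\pi)\|\overline{\boldsymbol\sigma}^*\|^2$ and hence raises $-\sqrt r$; substituting the worst case gives the claimed componentwise $\succeq$ inequality. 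For part $(ii)$ I would start from the optimality identity behind \eqref{eq: riskvectorexponential2}, namely $\mathbbm 1_q^T\boldsymbol{\mathcal T}_\varepsilon(|\overline{\mathbf y}|)=\tfrac1\beta\ln\mathbb E\!\big[e^{\beta\sum_i|\overline y_i|}\big]-\varepsilon$, apply Jensen's inequality to bound the expectation below by $e^{\beta\sum_i\mathbb E|\overline y_i|}$, and use the half-normal mean $\mathbb E|\overline y_i|=\overline{\sigma}_i\sqrt{2/\pi}$ to obtain $\sum_i\delta_i\geq\sqrt{2/\pi}\sum_i\overline{\sigma}_i-\varepsilon$. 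Inserting $\overline{\sigma}_i\geq\sigma_i^*$ and simplifying $\sqrt{2/\pi}\cdot\sqrt{\tau/(2(1-\sin z^+))}=\sqrt{\tau/(\pi(1-\sin z^+))}$ delivers the stated exponential bound.

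The main obstacle I anticipate is the monotonicity bookkeeping in part $(i)$: I must check that the single constant vector $\sqrt{2/\pi}\,\overline{\boldsymbol\sigma}^*-\mathbbm 1_q\sqrt{\varepsilon^2-(1-2/\pi)\|\overline{\boldsymbol\sigma}^*\|^2}$ lower-bounds \emph{every} element of the set-valued risk \emph{simultaneously in all coordinates and across all admissible graphs}, which hinges on both terms moving in the same direction as the $\overline{\sigma}_j$ grow, and on the feasibility region $r>0$ (outside of which the risk is $+\infty$ and the inequality holds trivially). The remaining ingredients --- the spectral minimization, Jensen's step, and the half-normal moment --- are standard once $\overline{\sigma}_i\geq\sigma_i^*$ is secured.
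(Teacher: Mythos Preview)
Your proposal is correct and follows essentially the same route as the paper's own proof: for part~$(i)$ you use the set description of Theorem~\ref{thm: multobjvolrisk} to extract the coordinatewise lower bound $\delta_i\ge\sqrt{2/\pi}\,\overline\sigma_i-\sqrt r$ and then feed in $\overline\sigma_i\ge\sigma_i^*$ from Theorem~\ref{thm: tradeoff}, while for part~$(ii)$ you start from \eqref{eq: riskvectorexponential2}, apply Jensen's inequality and the half-normal mean, and insert the same spectral lower bound. Your write-up is in fact more explicit than the paper's on the monotonicity bookkeeping (both terms in part~$(i)$ moving the right way) and on the feasibility caveat $r>0$.
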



\section{Exploiting Graph Topology} 

We show that for some graph topologies, the value of systemic risk measure can be calculated explicitly. Our focus will be only on risk in probability with respect to deviation of a single state from the average, where similar results can be obtained for other types of observations as well as risk in expectation. In order to exploit graph structure and determine how risk scales with network size, we consider the class of dynamical network \eqref{first-order} with unweighted coupling graphs. The results of this section are summarized in Table \ref{table: flucrisktopgraph}.

\vspace{0.1in}

\noindent {\it Complete graph.} Due to the perfect symmetry of a complete graph $\mathbf K_n$, all elements of the vector of joint fluctuation risk measure are identical and the risk measure reduces to \eqref{eq: riskv}. The network is marginally stable if $\tau < \frac{\pi}{2n}$, where this region shrinks as $n$ grows.  For two given linear consensus networks \eqref{first-order} with coupling graphs $\mathbf{K}_{n_1}$ and $\mathbf{K}_{n_2}$ and $n_1 < n_2$, there exists a critical time delay $\tau^* < \frac{\pi}{2 n_2}$ such that $\mathcal R^{(2)}_{\varepsilon}(|\overline{\BB y}|) \leq \mathcal R_{\varepsilon}^{(1)}(|\overline {\BB y}|)$ for all $\tau \leq \tau^*$, and $\mathcal R^{(1)}_{\varepsilon}(|\overline{\BB y}|) < \mathcal R_{\varepsilon}^{(2)}(|\overline {\BB y}|)$ for all $\tau^* < \tau < \frac{\pi}{2 n_2}$. The phenomenon is graphically illustrated on the left plot of Figure \ref{fig: riskcomplete}. In a more simple language, networks with larger complete graphs are safer than smaller ones, and as time delay increases there is a transition point beyond which networks with smaller complete graphs are safer.

\vspace{0.1in}

\noindent {\it Wheel graph.} Let us label the central node of the wheel graph $\mathbf{W}_{n+1}$ by $1$ and the remaining nodes on the circumference by $2,\ldots,n+1$. The largest Laplacian eigenvalue is $\lambda_{n+1}=n+1$, which implies that network is marginally stable if $\tau  < \frac{\pi}{2(n+1)}$.  The symmetric structure of  $\mathbf{W}_{n+1}$ implies that $\mathcal R_{\varepsilon}(|\overline{y}_2|)=\ldots=\mathcal R_{\varepsilon}(|\overline{y}_{n+1}|)$. According to explicit expressions for risk measures in Table \ref{table: flucrisktopgraph}, we can verify that\footnote{It can be shown that  
\begin{equation*}\begin{split}
\frac{\mathcal{R}_{\varepsilon}(|\overline y|_i)}{\mathcal{R}_{\varepsilon}(|\overline y|_1)}&<\frac{1}{n^2}+\frac{n^2-1}{n^2} \frac{\cos(\lambda^*\tau)}{1-\sin(\lambda^*\tau)}\frac{1-\sin((n+1)\tau)}{\cos((n+1)\tau)} \leq 1
\end{split}
\end{equation*} for all  $i=2,\ldots,n+1$, where  $\lambda^*$ is the maximizer of $\frac{\cos(\lambda\tau)}{1-\sin(\lambda\tau)}$ over $ \lambda \in \big\{3-2\cos(2\pi(k-1)/n)~|~k=2,\ldots,n \big\}$.} $\mathcal{R}_{\varepsilon}(|\overline y|_1) < \mathcal{R}_{\varepsilon}(|\overline y|_i)$ for all $i=2,\ldots,n+1$. Thus, the central node in a dynamical network with  wheel graph is riskier than the surrounding nodes.

\begin{figure}
\includegraphics[scale=0.55]{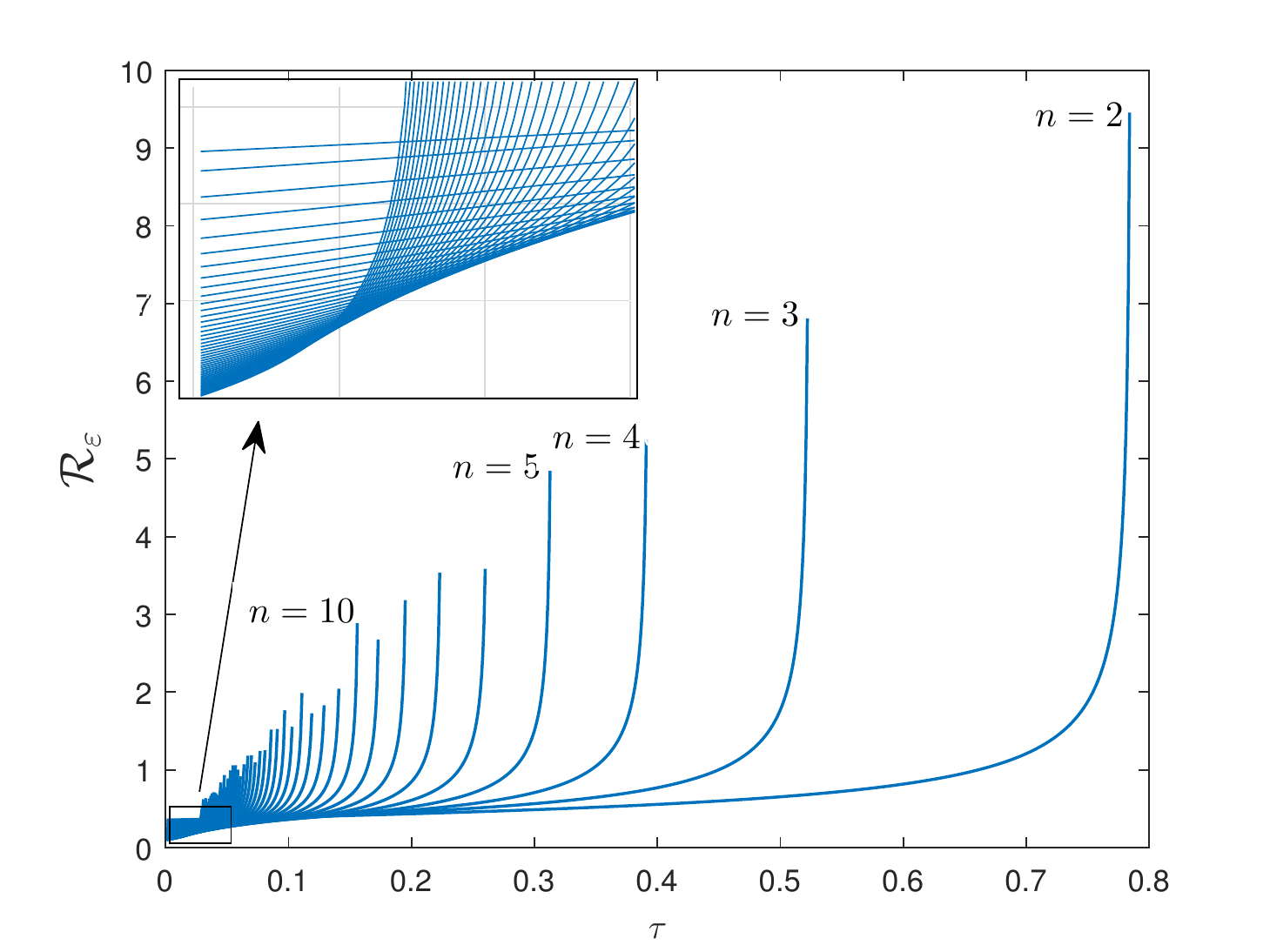}\hspace{-0.2in}
\includegraphics[scale=0.55]{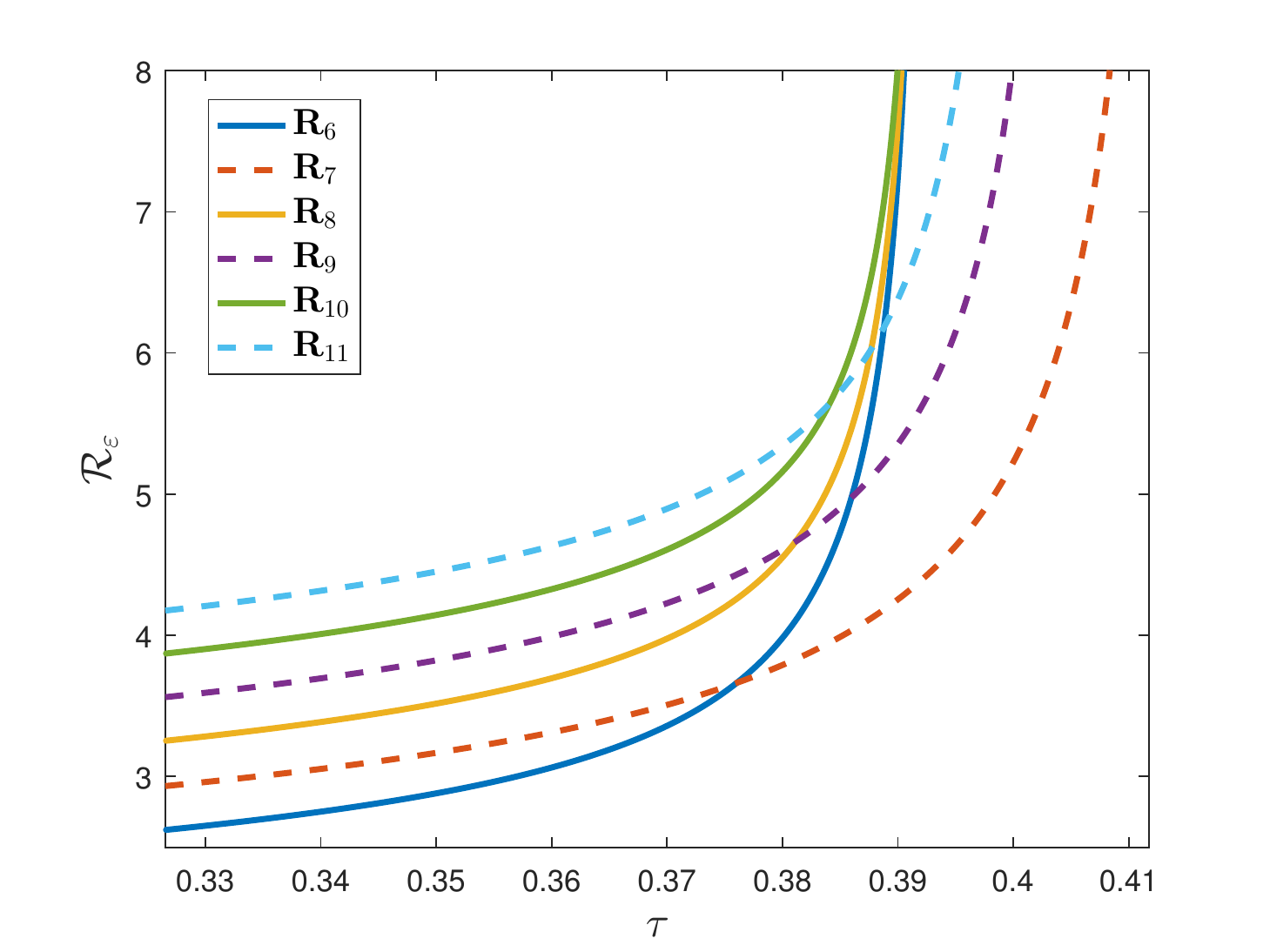}
\caption{(Left) The fluctuation risk of a complete graph as a function of the delay $\tau$. Here $\varepsilon=0.01$ and $\mathbf K_n$, $n=2,\dots,50$. (Right) Fluctuation risk on ring graphs in terms of network
size. A remarkable qualitative distinction appears between
graphs with odd and graphs with even number of agents. } \label{fig: riskcomplete}
\vspace{-0.5cm}
\end{figure}

\vspace{0.1in}

\noindent {\it Complete Bipartite graph.} Let us denote each side of the complete bipartite graph $\mathbf K_{n_1,n_2}$ by $\G_1$ and $\G_2$ that each has $n_1$ and $n_2$ nodes, respectively. Nodes at each side are only connected to nodes on the other side. The total number of nodes is $n=n_1+n_2$ and let us assume that $n_1\leq n_2$.  In this case, the network \eqref{first-order} is marginally stable if and only if $\tau<\frac{\pi}{2n}$.  The symmetric topology of $\mathbf{K}_{n_1,n_2}$  entails that all $\mathcal R_{\varepsilon}(|\overline{y}|_i)$ belonging to one side (either $\G_1$ or $\G_2$) are identical. We may categorize the set of complete bipartite graphs into two subclasses: graphs with $n_1=1$ and the ones with $n_1 >1$. The first class contains all star graphs $\mathbf K_{1,n-1}$, where $\G_1$ consists of only node with label $1$ and $\G_2$ contains all nodes with labels $2,\ldots,n$. From Table \ref{table: flucrisktopgraph}, it is straightforward to verify that \[\mathcal R_{\varepsilon}(|\overline{y}|_1)>\mathcal R_{\varepsilon}(|\overline{y}|_j)\]
for all nodes $j \in \{2,\ldots,n\}$ in subgraph $\G_2$.  Similar to the wheel graph, the central node in a star graph attains the highest risk value among all nodes in the network, regardless of the time delay value. When $n_1 > 1$, from Table \ref{table: flucrisktopgraph} and some elementary analysis, it can be shown that there exists a unique $\tau^*=\tau^*(n_1,n_2)$ such that for all $i$ in subgraph $\G_1$ and all $j$ in subgraph $\G_2$ the following inequalities hold 
\begin{equation*}\begin{split}
\mathcal R_{\varepsilon}(|\overline{y}|_i) &\leq \mathcal R_{\varepsilon}(|\overline{y}|_j)~~~\textrm{for}~~~0 \leq \tau \leq \tau^*\\
\mathcal R_{\varepsilon}(|\overline{y}|_j) & < \mathcal R_{\varepsilon}(|\overline{y}|_i)~~~\textrm{for}~~~\tau^* < \tau < \frac{\pi}{2n}.
\end{split}
\end{equation*} 
For $n_1>1$, we conclude that, unlike graphs with star topology, time delay plays an active role in determining which of the subgraphs $\G_1$ or $\G_2$ contains riskier nodes.  This point is illustrated with two examples in Figure \ref{fig: bipartite}.

\begin{figure}\center
\includegraphics[scale=0.5]{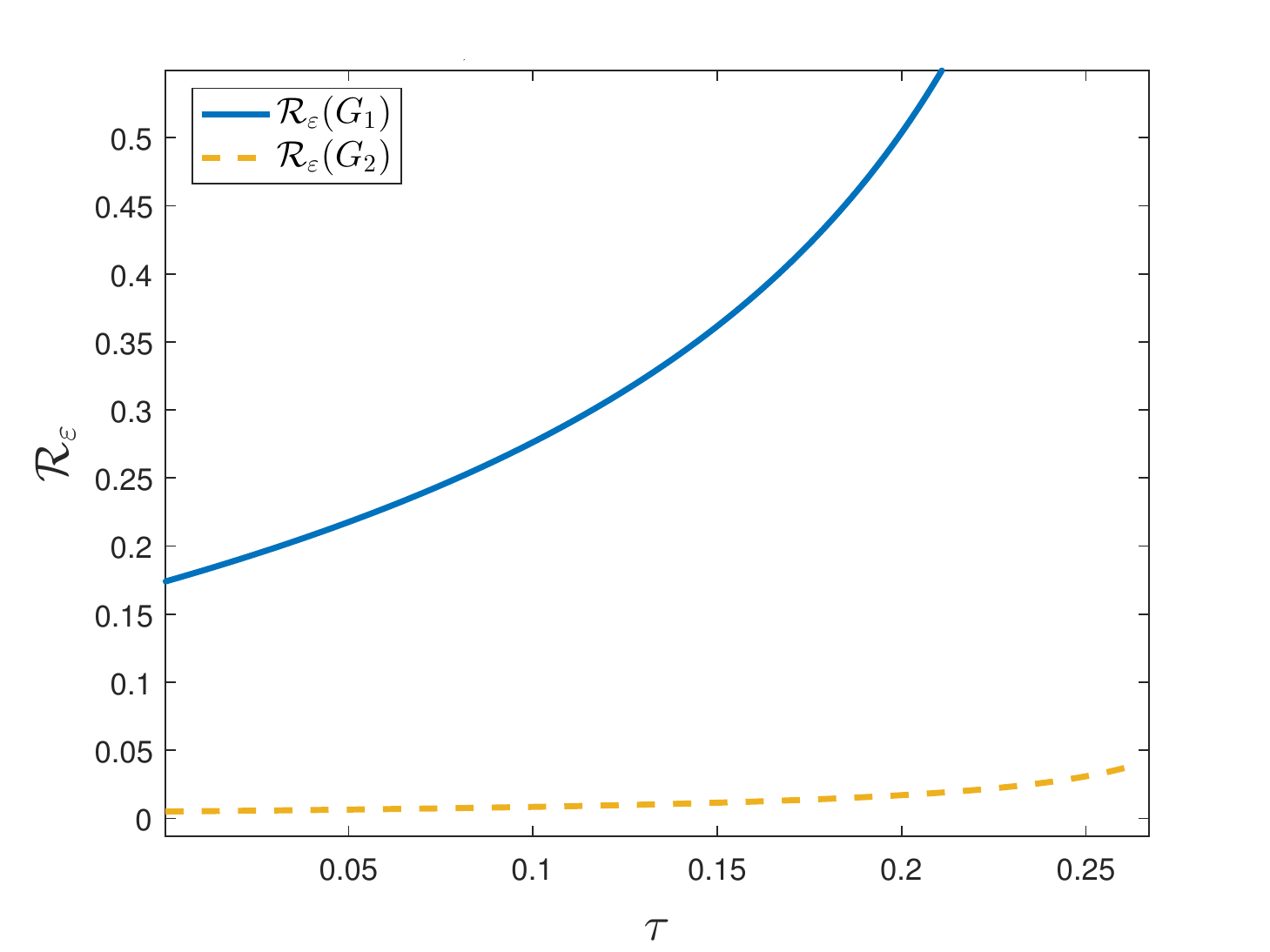}
\includegraphics[scale=0.5]{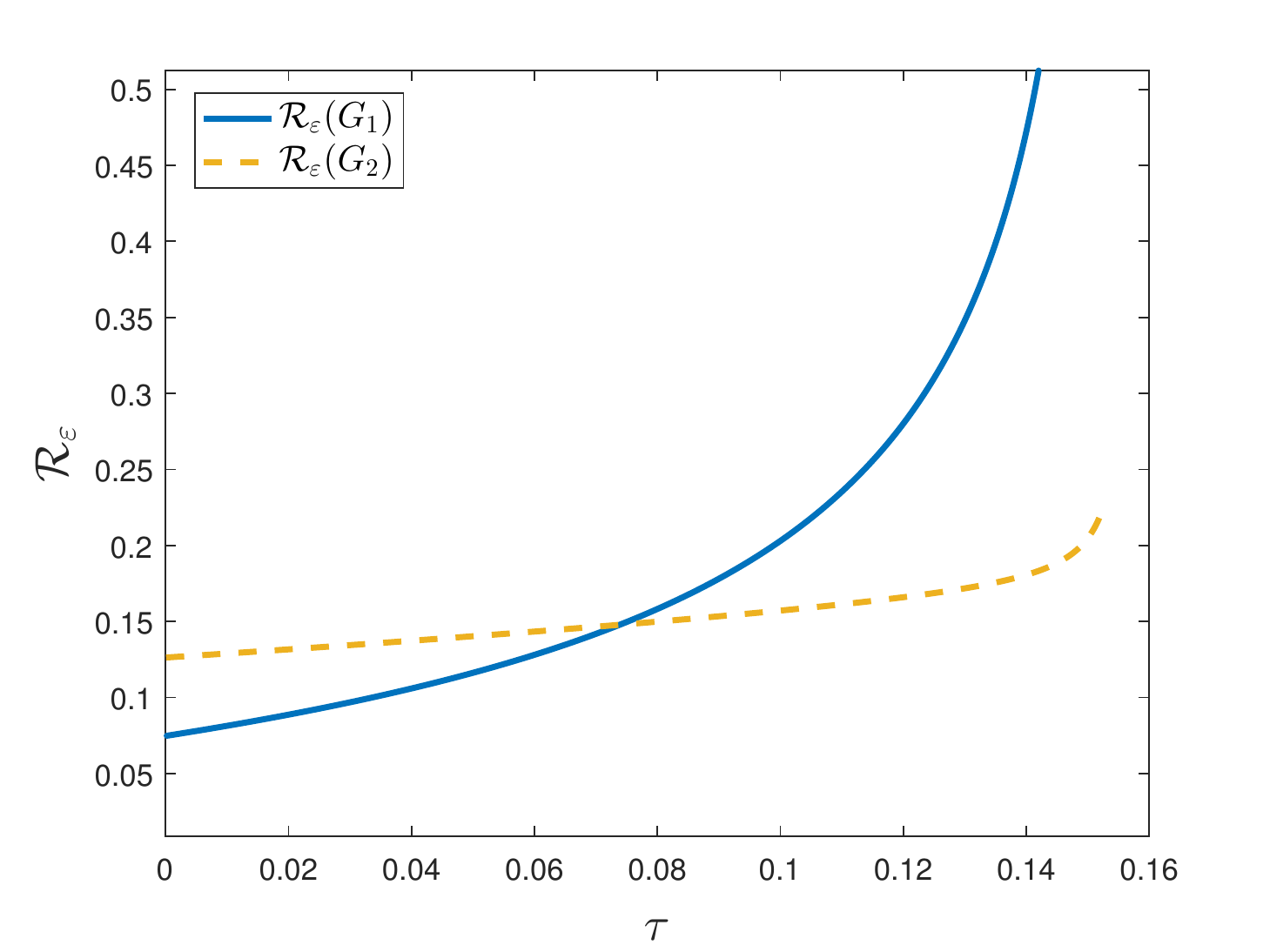}
        \caption{ Illustration of the systemic risk on $\mathbf{K}_{1,4}$ and $\mathbf{K}_{2,8}$. The first example is the star graph. The central node is the riskier one. The second scheme has a group of two and eight nodes, respectively. In the second scheme, we observe that the larger group is safer for small delay and riskier for large delay.}\label{fig: bipartite}
\end{figure}

\begin{figure}\center
 \includegraphics[scale=0.5]{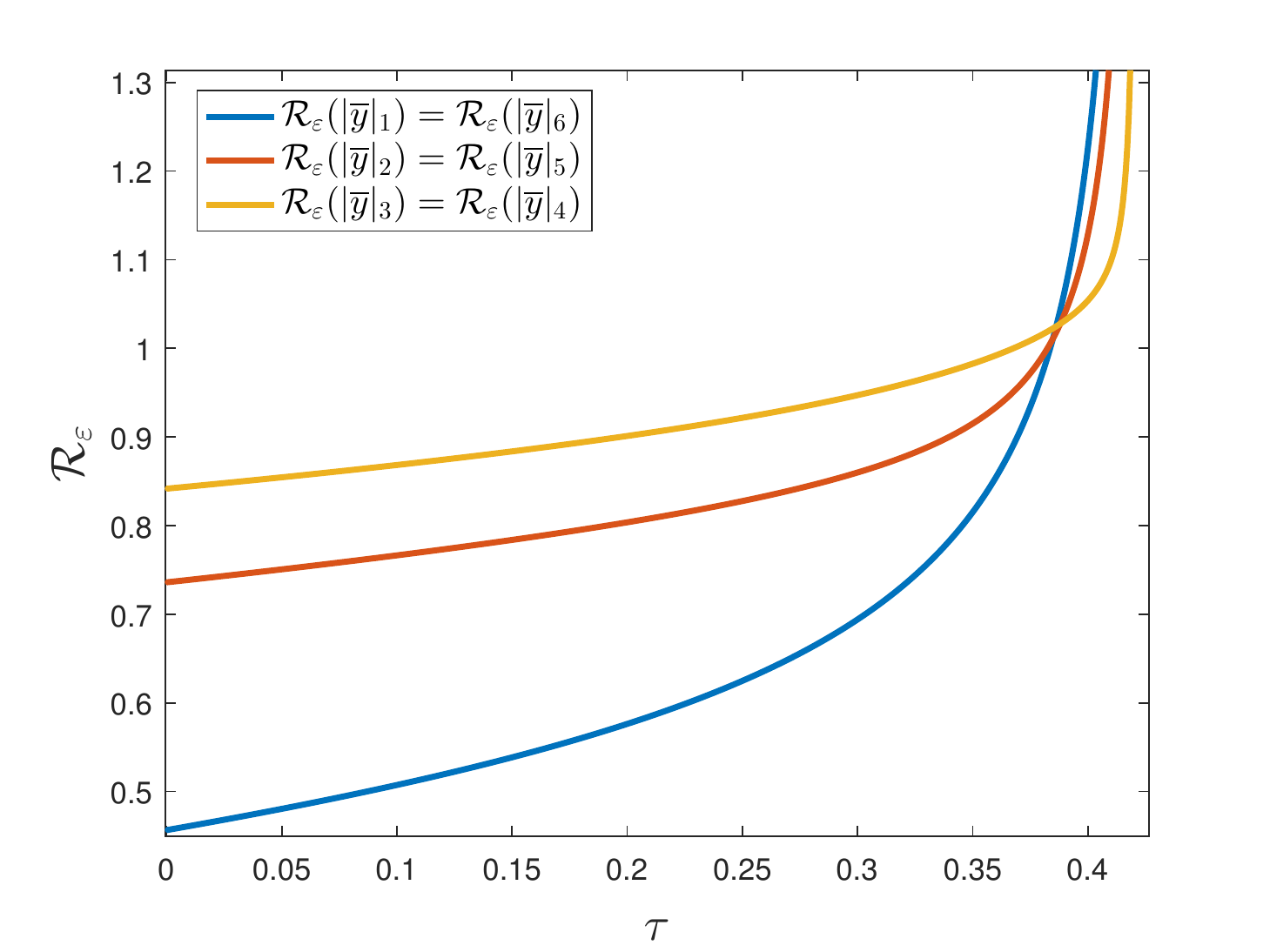}
\includegraphics[scale=0.5]{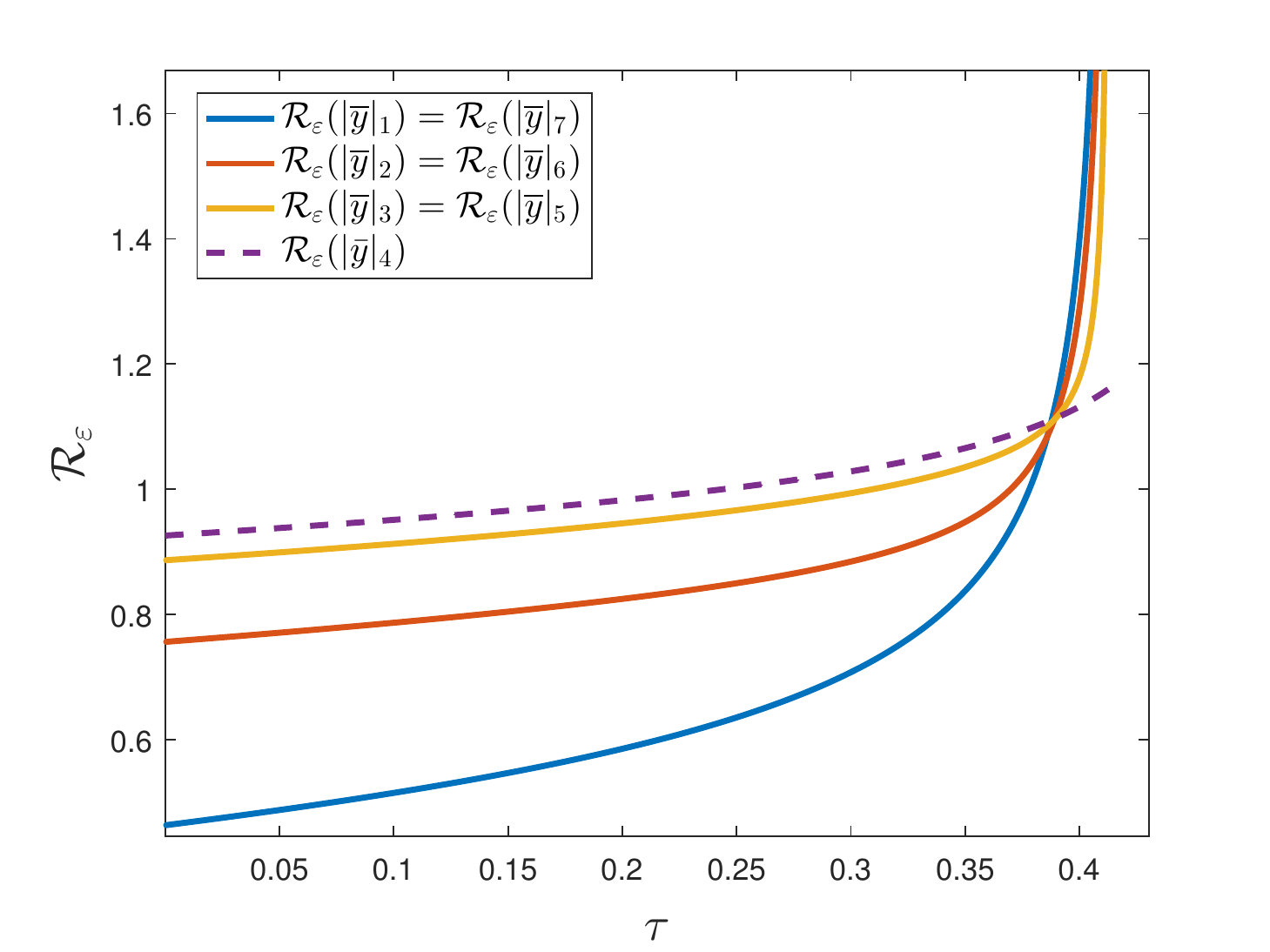}\caption{Systemic risk for $\mathbf{P}_{6}$ and $\mathbf{P}_{7}$. For small delays, the risk increases monotonically for as nodes lie closer to the center, where as for large delays, the risk monotonically decreases. When $n$ is odd and $\tau$ approaches the critical delay value, the risk of the central node increases towards infinity at a significantly lower rate than the rest of the network.}\label{fig: pathrisk}
\end{figure}

\vspace{0.1in}

\noindent {\it Path graph.} In networks for which agents are connected to a uniformly bounded number of neighbors, such as path and ring graphs, the marginal stability range for time delay  does not shrink to zero as the network size grows. A network with path graph topology $\mathbf{P}_n$ is marginally stable iff  $\tau < \frac{\pi}{4 \left(1+\cos(\frac{\pi}{n})\right)}$. A simple calculation reveals that the network is marginally stable for all  $n$ iff $\tau < \frac{\pi}{8}$. From Table \ref{table: flucrisktopgraph},  it can be verified that 
\[\mathcal R_{\varepsilon}(|\overline{y}|_{i})=\mathcal R_{\varepsilon}(|\overline{y}|_{n-i+1})\] 
for all  $i=1,\ldots,n$.

There exists a critical time delay $\tau^*$ in the marginal stability range that for all time delays $\tau < \tau^*$ agents residing toward the tails of the path graph are safer than the ones located closer to the center. In this case, central nodes are more prone to large fluctuations. When $\tau > \tau^*$, agents closer to the center  become safer and the ones that are closer to the tails become riskier.  In conclusion, agents residing in the boundaries are the safest for small time delays, and in contrary, agents located in the center are the safest for large time delays.  When $n$ is odd, there is a central agent who becomes significantly safer for large time delays. Figure \ref{fig: pathrisk} illustrates these analytic design rules of thumb graphically. 
%

%
%

\vspace{0.1in}

\noindent {\it Ring graph.}  The dynamical network \eqref{first-order} with ring graph topology $\mathbf{R}_n$ is marginally stable iff  
$$\tau < \frac{\pi}{4} \left( 1- \cos \left(2 \pi \left(1-\frac{1}{n}\right)\right)\right)^{-1}.$$ The symmetric structure of $\mathbf{R}_n$ necessitates that all agents to have identical risk values. In the right plot of Figure \ref{fig: riskcomplete}, the value of the systemic risk measure for one of the agents, which is given in Table \ref{table: flucrisktopgraph}, is drawn for different values of $n$. One observes that the systemic risk measure monotonically increases by $n$ for small time delays.  However, when time delay increases and approaches the marginal stability border,  networks with odd number of agents appear safer than networks with even number of agents.

%
%


\section{Simulations}
We verify our theoretical findings through two numerical examples. 

\begin{exmp} (Transient Behavior of Risk Measures)  Consider the graph over $n=5$ nodes as in Figure \ref{fig: example1}. 
\begin{figure}\center
\begin{tikzpicture}[auto, node distance=2cm, every loop/.style={},
                    main node/.style={circle,draw,font=\sffamily\large\bfseries,scale=0.8}]

  \node[main node] (1) {1};
  \node[main node] (2) [below left of=1] {2};
  \node[main node] (3) [below right of=1] {3};
  \node[main node] (4) [below of=3] {4};
  \node[main node] (5) [below of=2] {5};

  \path[every node/.style={font=\sffamily\small,scale=0.8}]
    (1) edge node [left] {2} (2)
        edge node  {3.2} (3)
    (2) edge node {0.1} (5)
        edge node  {5} (3)
    (4) edge node {0.2} (3)
    	edge node  {0.3} (5);
\end{tikzpicture} 
\caption{The graph for Example 1. The assigned weights accompany the edges between nodes, allow a maximum delay of $\tau_{\max}=0.1211$.}\label{fig: example1}
\end{figure}
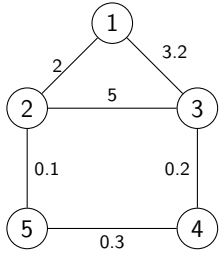

\begin{figure}\center
 \includegraphics[trim = 6mm 0mm 5mm 0mm, clip, scale=0.5]{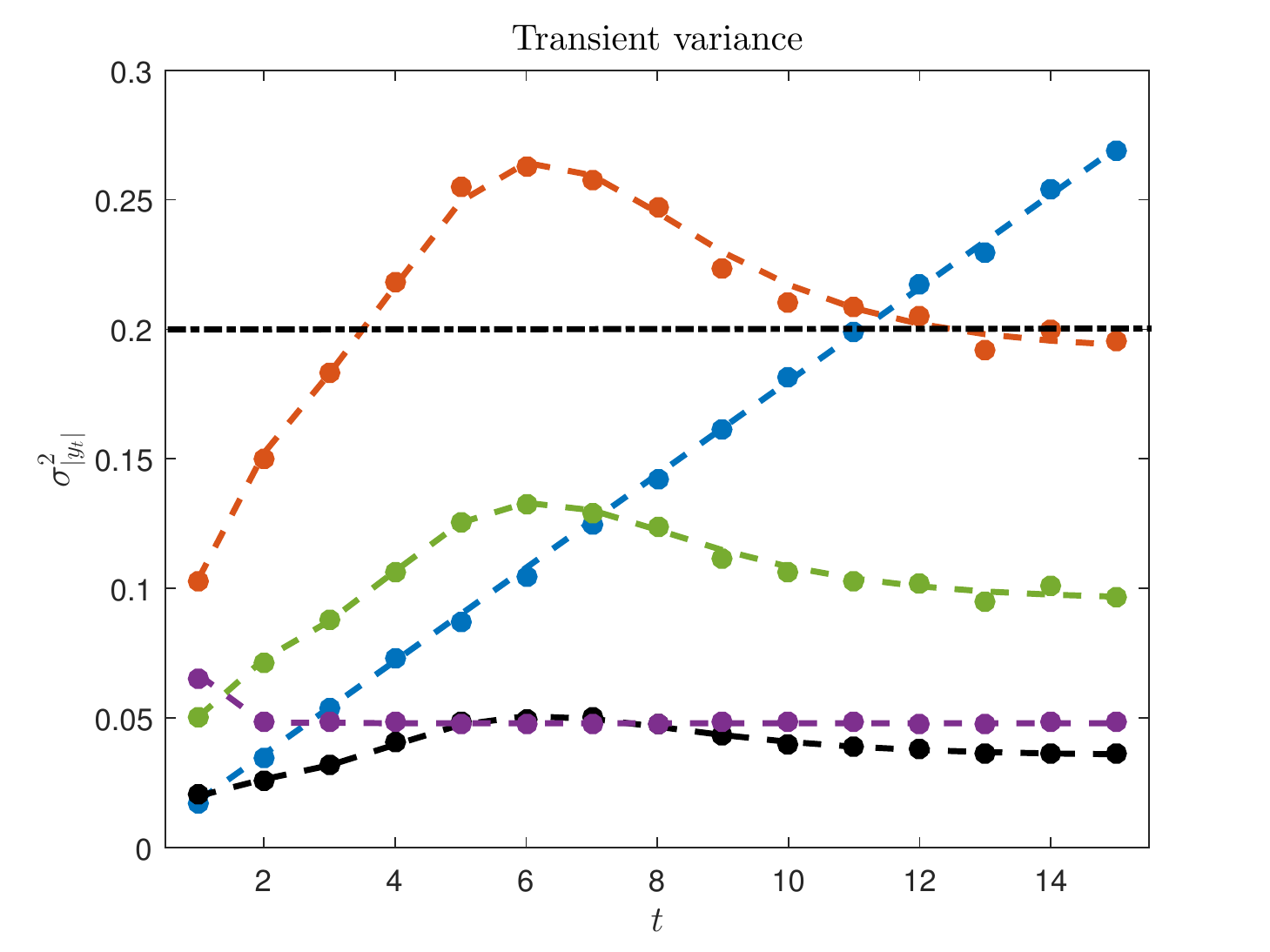}
\includegraphics[trim = 6mm 0mm 5mm 0mm, clip, scale=0.5]{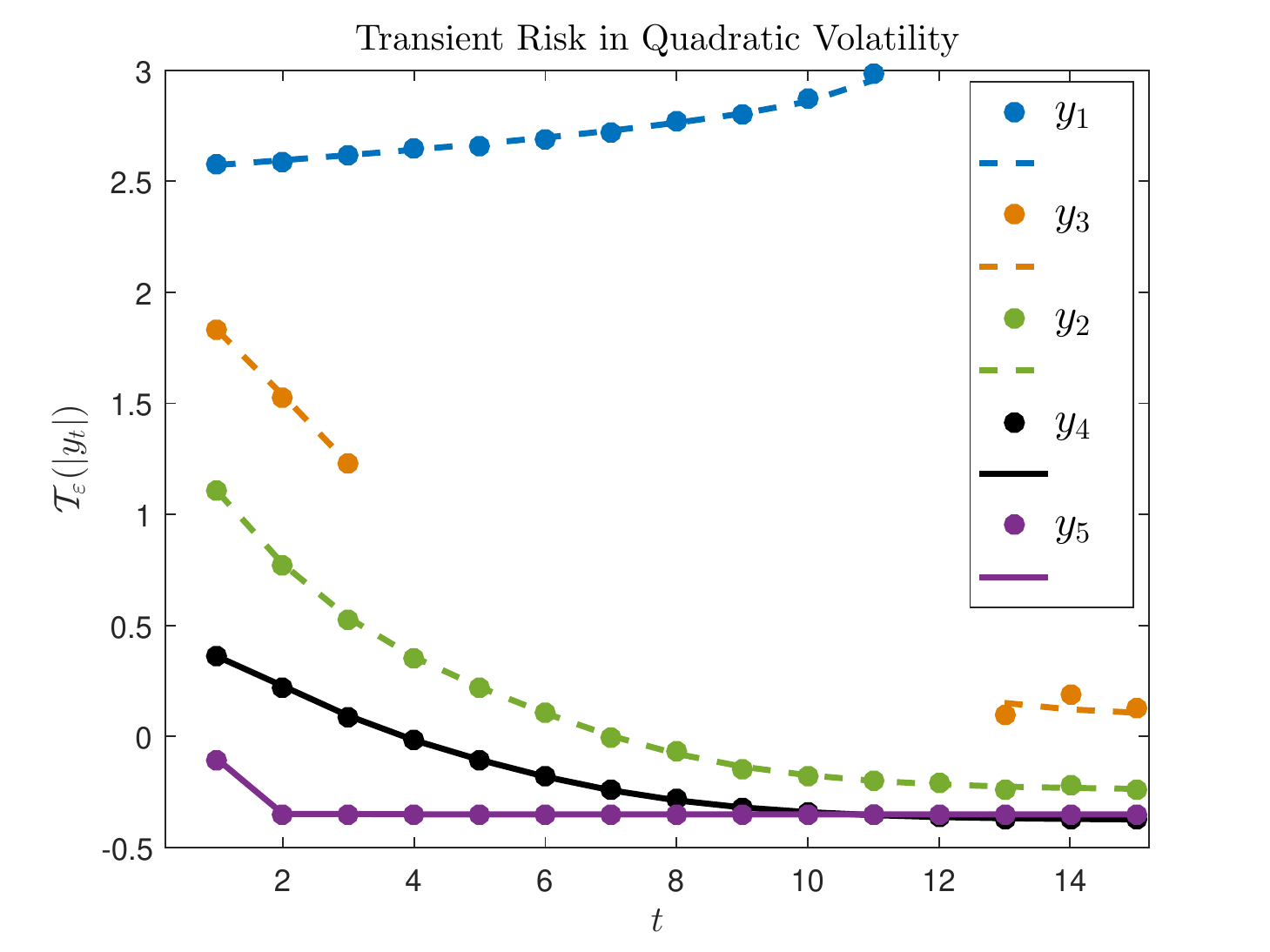}\caption{Simulated (dot) and analytic (solid line) results for $\mathbf y_t=C \mathbf x_t$. Figure on the left illustrates the variances and figure on the right the volatility risk with quadratic utility. For $t=4$ to $t=12$, the variance of the orange observable lies above the permissible cut-off value. According to Theorem \ref{thm: main0}, its risk in quadratic volatility is infinite. Similarly for the motion of the average (the line) for $t \geq 11$.}\label{fig: example1r1}
\end{figure}

\begin{figure}\center
\includegraphics[trim = 6mm 0mm 5mm 0mm, clip, scale=0.5]{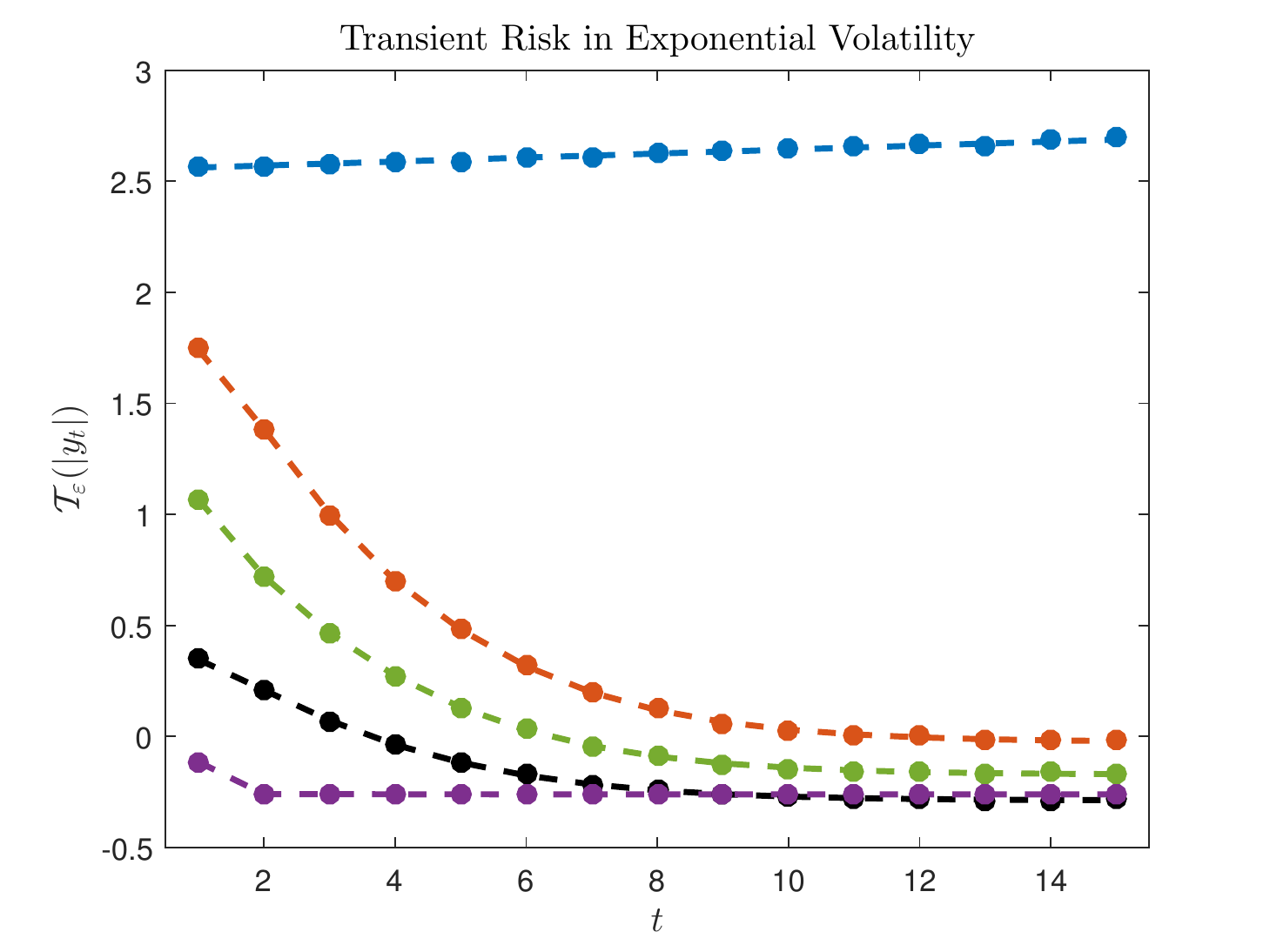}
\includegraphics[trim = 6mm 0mm 5mm 0mm, clip, scale=0.5]{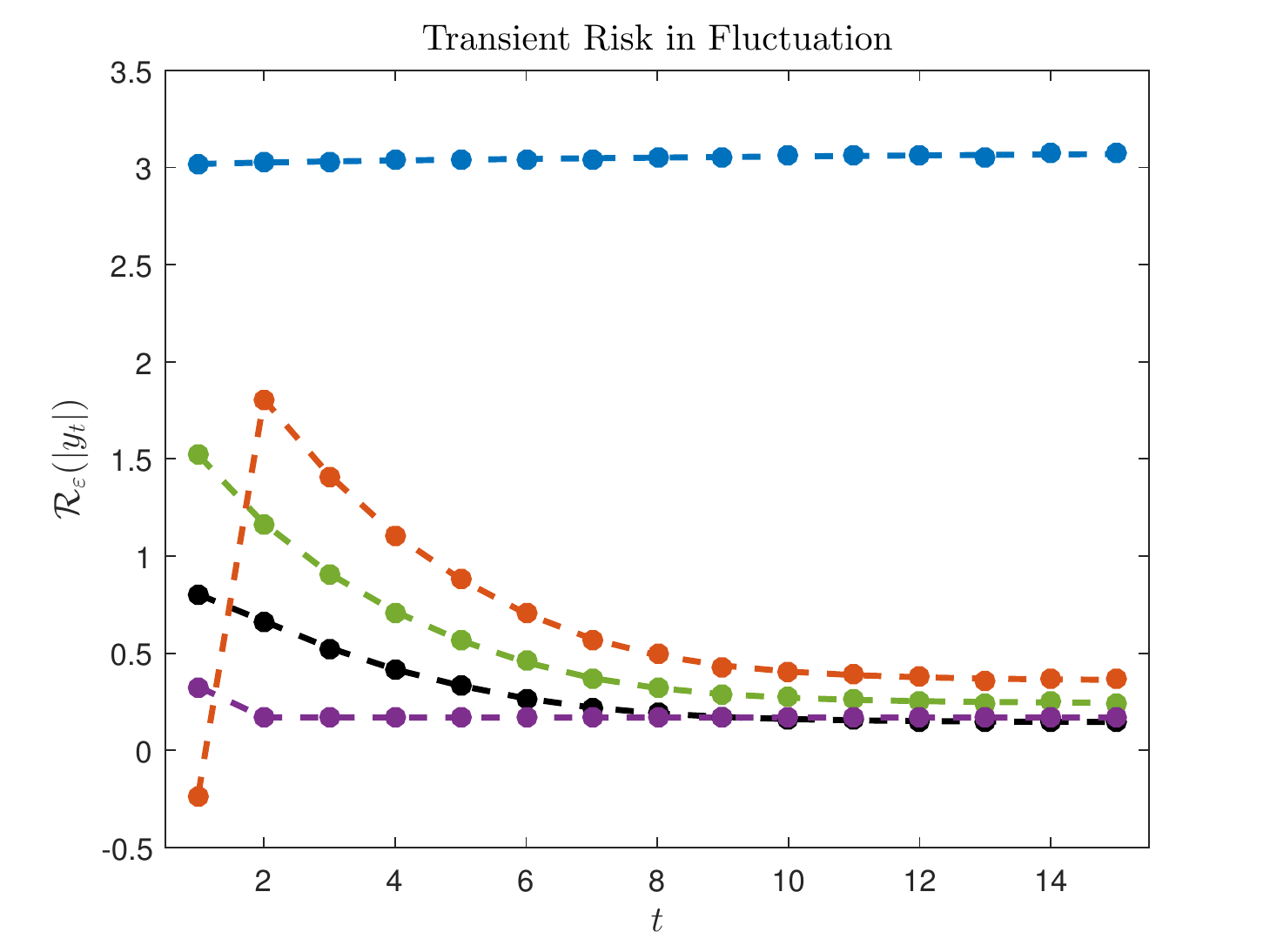}\caption{ Simulated (dot) and analytic (solid line) results for $\mathbf y_t=C \mathbf x_t$. Figure on the left illustrates the volatility risk with exponential utility and figure on the right illustrates the fluctuation risk.}
\label{fig: example1r2}
\end{figure}

The parametrization of \eqref{first-order} is set as $\tau=0.1$, $b=0.3$ and the initial conditions are chosen $\phi_i(t)=i$ for $t\in [-0.1,0]$. The cutoff value is selected to be $\varepsilon=\sqrt{0.2}$. 
%
We will examine the transient risk with the output matrix $C=[\frac{1}{5}\mathbf 1_5^T ~|~  \mathbf m_1^T  ~|~ \mathbf m_5^T~ |~ \mathbf m_2^T-\mathbf m_3^T ~|~ \mathbf m_2^T-\mathbf m_5^T ]$. The corresponding rows are the motion of the average, the deviation of nodes $i=1$ and $i=5$ from the average as well as the difference between nodes $2$ and $3$ and the difference between nodes $2$ and $5$. The results are presented in Figures \ref{fig: example1r1} and \ref{fig: example1r2}.  The risk of the average of motion becomes unbounded, as predicted. Finally, we notice a consistent pattern on behavior of risk among the observables $\mathbf c_1, \ldots, \mathbf c_5$: the risk involving nodes with stronger coupling is systematically smaller than the risk that involves nodes with weaker coupling, whereas the risk on the pairwise disagreement between the strongly coupled  node $2$ and the loosely coupled node $5$ appears to be the highest.
\end{exmp}

%
%

\begin{exmp}\label{exmp: M100} 
We consider a network with $100$ agents and an  arbitrary graph topology  that satisfies  Assumption \ref{assum0}. We will numerically calculate the value of the steady-state systemic risk in expectation  \eqref{eq: riskmeanquadss}  with respect to output matrix $C=M_{100}$ as a function of time delay in its marginal stability region. This way, an observable output is assigned to every agents.  According to  Theorem \ref{thm: main1}, the value of the systemic risk measure corresponding to each agent can be either negative, positive bounded or infinite depending on the value of parameter. For a fixed parameter $\varepsilon$ and depending on the value of each agent's systemic risk measure,  agents can be classified as being safe (negative), marginally safe (positive), and unsafe (infinite). For this example, we generate a network whose marginal stability range is $ \tau < \frac{\pi}{2\lambda_n}=0.3524$. We consider risk w.r.t. quadratic utility function  with cutoff threshold $\varepsilon=\sqrt{0.05}$.  For each class, we also record average of (weighted) node degrees as a connectivity measure. The left sub-plot in Figure \ref{fig: M100}  pictures the number of agents in each class  for every $\tau \in [0,0.35)$.  The right sub-plot in Figure \ref{fig: M100} depicts average of (weighted) node degrees. For small time delays, the transition from safe mode to marginally safe mode starts with those nodes that have smaller than average degrees (which is about $3.65$). For larger time delays, the transition from marginally safe mode to the unsafe mode starts with those nodes that have larger than average degrees. This remarkable phenomenon has been repeatedly observed for almost all networks with similar characteristics.  
\end{exmp}

\begin{figure}\center
\includegraphics[trim = 6mm 0mm 5mm 0mm, clip, scale=0.55]{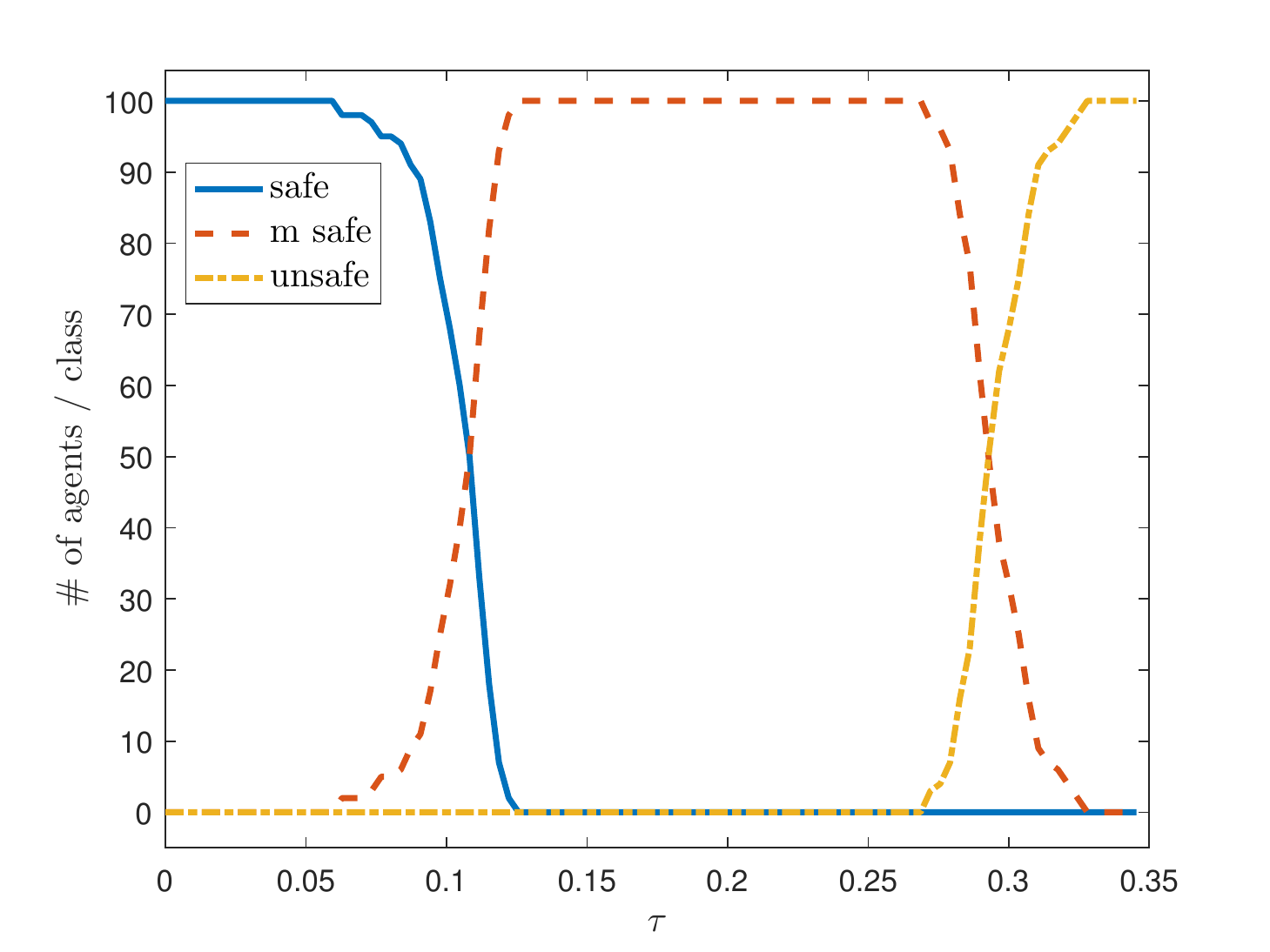}
\includegraphics[trim = 6mm 0mm 5mm 0mm, clip, scale=0.55]{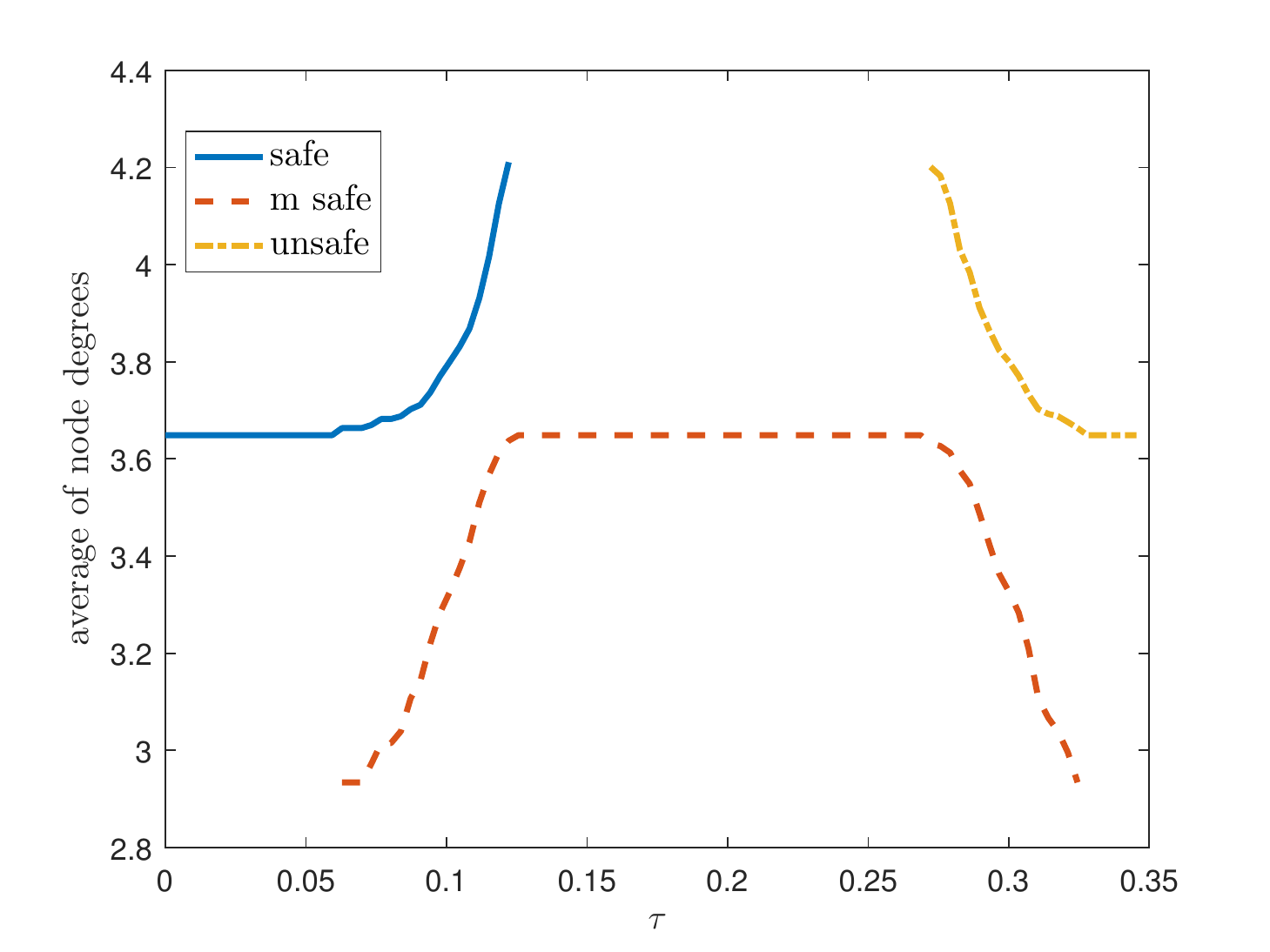}\caption{The network of Example \ref{exmp: M100}, where $\tau\in[0,0.35]$.  }\label{fig: M100}
\end{figure}

\begin{exmp}\label{exmp: B100}
We consider a network with $100$ agents and an  arbitrary graph topology  that satisfies  Assumption \ref{assum0}, where the output matrix is incidence matrix of the graph, i.e., $C=B_{100}$. The generated network is marginally stable for all time delay $\tau < 0.3415$. The number of observable outputs is $4950$ and the results of our simulations are illustrated in Figure \ref{fig: B100}. The effective graph resistance between nodes $i$ and $j$,
\begin{equation*}
\ER(i,j)=[B_{100} L^{\dag} B_{100}^T]_{ij}\footnote{ Here $L^{\dag}$ is the Moore-Penrose pseudo-inverse of  $L$ and the sum over all $i$ and $j$ equals $\Xi_{\mathcal{G}}$ as in \eqref{eq: graphr}. More details can be found in \cite{Mieghem11}.}
\end{equation*}  is used as a connectivity measure. For two distinct agents, the larger the effective resistance between them, the weaker these agents are connected to each other. Depending on the value of the risk measure with quadratic volatility for each individual event $|\overline{y}_i| > \delta_i$ for $i=1,\ldots, 4950$, these events can be labeled as being safe (negative), marginally safe (positive bounded), and unsafe (infinity).  The sub-plots in Figure \ref{fig: B100} asserts that: for small time delays, the transition from safe mode to marginally safe mode happens between those pair of nodes that maintain a larger than average effective resistance (i.e., lower than average coupling strength),  where the average effective resistance is about $0.544$. For larger time delay, the transition from marginally safe mode to unsafe mode occurs for those pair of nodes that have a lower than average effective resistance (i.e., higher than average coupling strength). 
\end{exmp}

\begin{figure}\center
\includegraphics[trim = 0mm 0mm 5mm 0mm, clip, scale=0.55]{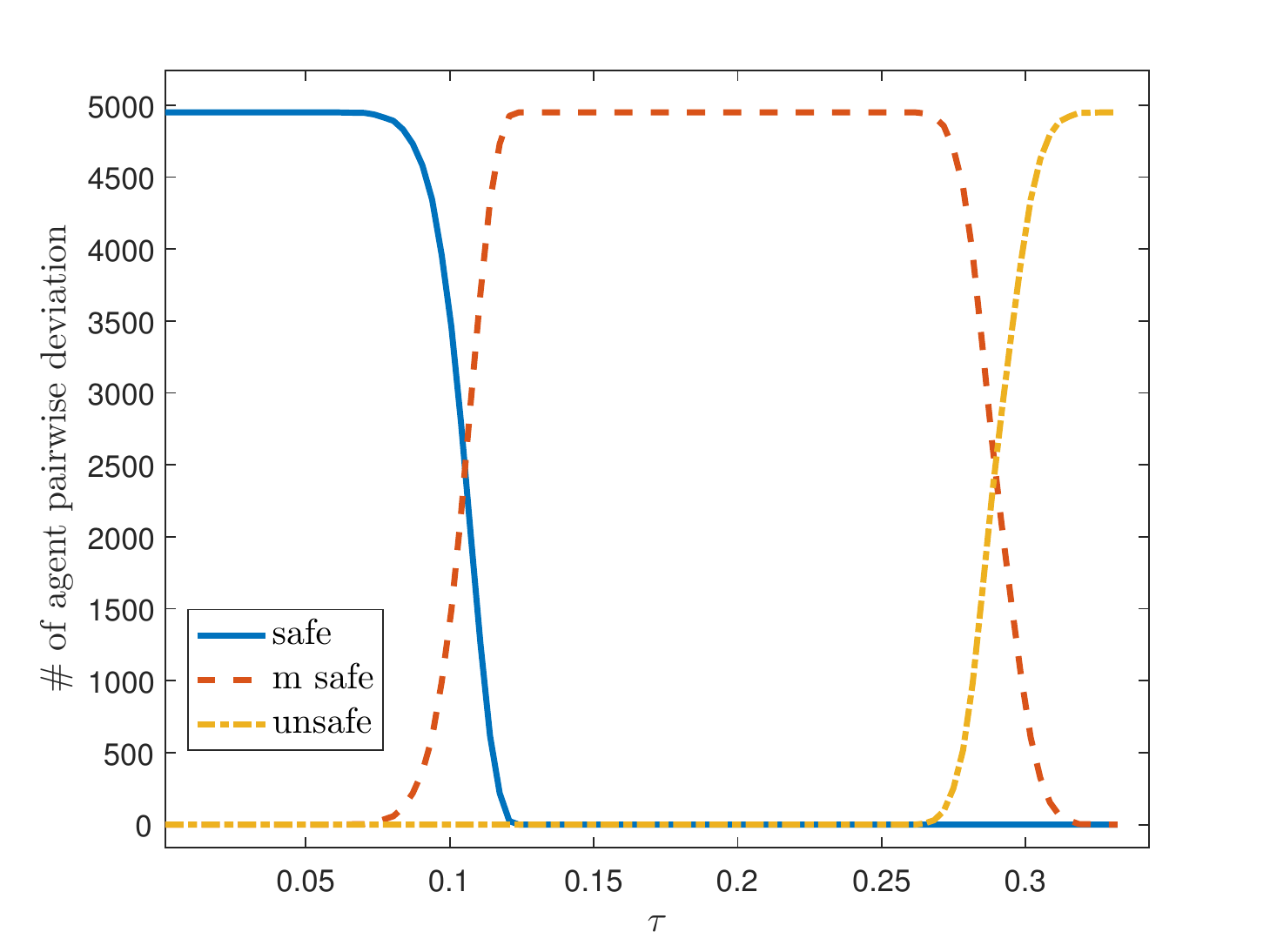}
\includegraphics[trim = 6mm 0mm 5mm 0mm, clip, scale=0.55]{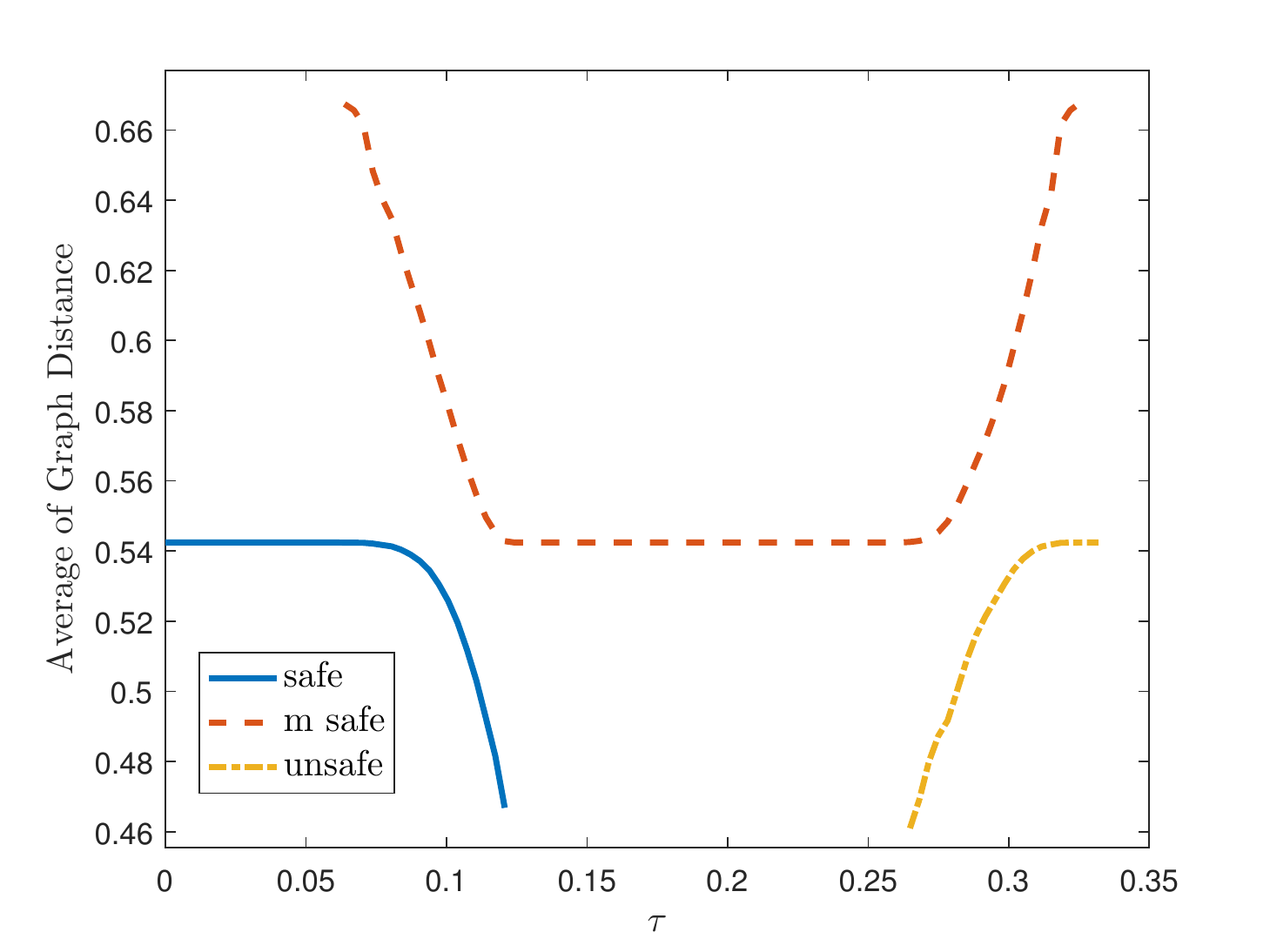}\caption{The network of Example \ref{exmp: B100}, where $\tau\in[0,0.34]$. }\label{fig: B100}
\end{figure}

\section{Discussion}\label{sec: discussion}

There are several points that need to be discussed. The focus of this paper is on the risk of large fluctuations in probability and in expectation. The value of these two types of risk measures capture behavior of relevant microscopic features of a network. Moreover, it turned out that their values depend on network topology through Laplacian spectrum and its eigenvectors. To investigate usefulness of other types of risk measures remains an open problem and is one of our future research directions. The results of this paper are particularly useful to develop design algorithms to optimize network connectivity, performance and risk in presence of external noise and time delay. The design of this class of networks involves only three main scenarios: growing by adding new feedback interconnections, sparsification (eliminating) of existing feedback loops, and reweighing feedback gains. Other design objectives can be obtained by combining these three design procedures. Our particular interest is to design networks that strike a balance among connectivity, performance, and risk. This is one of our current research directions.   



\appendix
\section*{A. Delay Differential Equations}\label{append: dde} Let the initial value problem
\begin{equation}\label{eq: nominaldiag}
\begin{split}
\dot \varphi(t)&=-a \varphi(t-\tau), \hspace{0.3in} t> 0\\
\varphi(t)&=\phi(t), \hspace{0.68in}  t\in [-\tau,0]
\end{split}
\end{equation} where $a>0,~\tau \geq 0$ and $\varphi= \varphi(t,0,\phi)\in \mathbb R,~t\geq 0$ its solution. It is well-known, \cite{lunel93}, that $z$ is exponentially stable with respect to zero if and only if $a\tau<\frac{\pi}{2}.$ Unfortunately, $\varphi$ attains a closed form expression only when $\tau=0$, i.e. $\varphi(t)=e^{-a t}\phi(0)$. For $\tau>0$, it is possible to derive a closed form expression of the energy of the fundamental solution of \eqref{eq: nominaldiag}; that is the integral $\int_0^{\infty}\varphi^2(t)\,dt$ with $\varphi$ the solution of \eqref{eq: nominaldiag} initiated as $\phi(t)=0$, $t\in [-\tau,0)$ and $\phi(0)=1$. 
\begin{lem}\footnote{The authors claim no novelty of this result (see for instance \cite{PhysRevE.86.056114,kuechler92}). The reason for providing a detailed proof is for the sake of consistency to the formalism of the paper but also because it is derived with alternative and simpler methodology.}\label{lem: vmatrix} If $a\tau<\frac{\pi}{2}$, the fundamental solution $\varphi$ of \eqref{eq: nominaldiag} satisfies
\begin{equation*}
\int_0^{\infty}\varphi^2(t)\,dt=\tau f(a\tau),
\end{equation*} where $f:\big(0,\frac{\pi}{2}\big)\rightarrow \mathbb R_+$ is defined through 
\begin{equation*}f(x)=\frac{1}{2x}\frac{\cos(x)}{1-\sin(x)}.\end{equation*} 
\end{lem}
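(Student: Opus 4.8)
The plan is to avoid the method of steps (which would force us to sum the energies of infinitely many polynomial pieces) and instead exploit the self-similar structure of the autocorrelation of the fundamental solution. Throughout I would use that, by $a\tau<\frac{\pi}{2}$ and \cite{lunel93}, $\varphi$ decays exponentially, so every integral below converges absolutely; I also record the elementary facts that $\varphi$ jumps from $\varphi(t)=0$ for $t<0$ to $\varphi(0^+)=1$ and obeys $\dot\varphi(t)=-a\varphi(t-\tau)$ for $t>0$.

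First I would extract one scalar identity essentially for free. Differentiating $\varphi^2$ and substituting the equation gives $\frac{d}{dt}\varphi^2(t)=-2a\,\varphi(t)\varphi(t-\tau)$ for $t>0$; integrating over $(0,\infty)$ and using $\varphi(0^+)=1$, $\varphi(\infty)=0$ yields
\[
\int_0^\infty \varphi(t)\varphi(t-\tau)\,dt=\frac{1}{2a}.
\]
Next, I introduce the autocorrelation $g(s)=\int_{\mathbb R}\varphi(t)\varphi(t-s)\,dt$ (with $\varphi\equiv 0$ on $(-\infty,0)$), which is even and whose value at $0$ is exactly the sought energy $I=\int_0^\infty\varphi^2\,dt$. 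Writing $g(s)=\int_0^\infty\varphi(u+s)\varphi(u)\,du$ for $s>0$ and differentiating under the integral sign — legitimate by the exponential tail — I substitute $\dot\varphi(u+s)=-a\varphi(u+s-\tau)$ and use evenness to obtain the functional equation $g'(s)=-a\,g(\tau-s)$ for $0<s<\tau$. Letting $s\to 0^+$ and invoking the identity above gives the one-sided slope $g'(0^+)=-a\,g(\tau)=-\tfrac12$.

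The key step is that differentiating the functional equation once more collapses it into a genuine second-order ODE, $g''(s)=-a^2 g(s)$ on $(0,\tau)$, so $g(s)=A\cos(as)+B\sin(as)$ there. I would then pin down the constants: the slope condition forces $B=-\frac{1}{2a}$, and requiring $A\cos(as)+B\sin(as)$ to satisfy $g'(s)=-a\,g(\tau-s)$ identically (expanding via the angle-addition formulas and matching the $\cos(as)$ and $\sin(as)$ coefficients) forces $A(1-\sin a\tau)=-B\cos a\tau$, hence
\[
A=\frac{\cos a\tau}{2a\,(1-\sin a\tau)}.
\]
Since $I=g(0)=A$, this equals $\tau f(a\tau)$, as claimed.

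I expect the only delicate points to be the rigorous justification of differentiating $g$ under the integral sign (handled by the exponential decay, with the corner points of $\varphi$ at multiples of $\tau$ being a measure-zero nuisance) and the careful bookkeeping of the jump of $\varphi$ at the origin — this jump is precisely what produces the nonzero one-sided slope $g'(0^+)$ and thereby fixes $B$, so it must not be glossed over. Everything else is routine trigonometric matching; as a consistency check, the matched $\cos(as)$-coefficient equation reduces, via $\cos^2+\sin^2=1$, to the very same value of $A$, confirming the computation.
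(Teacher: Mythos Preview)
Your proof is correct and follows essentially the same route as the paper: both introduce the autocorrelation $g(\rho)=V(\rho)=\int_\rho^\infty\varphi(t)\varphi(t-\rho)\,dt$, establish its evenness, the functional relation $g'(\rho)=-a\,g(\tau-\rho)$, and the value $g(\tau)=\tfrac{1}{2a}$, then solve the resulting harmonic motion on $(0,\tau)$ --- the paper via the equivalent first-order system $(\dot V,\dot R)=(-aR,aV)$ with $R(\rho)=V(\tau-\rho)$, you via the second-order ODE $g''=-a^2 g$. Your direct derivation of $g(\tau)=\tfrac{1}{2a}$ by integrating $\tfrac{d}{dt}\varphi^2$ is marginally cleaner than the paper's two-way computation of $\dot V(0)$, but otherwise the arguments coincide.
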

\begin{proof} Given the fundamental solution $\varphi$, we define the quantity: \begin{equation}\label{eq: qfunctional}
V(\rho)=\int_{\rho}^{\infty}\varphi(t)\varphi(t-\rho)\,dt, \hspace{0.3in} \rho\in \mathbb R
\end{equation} 
\begin{claim}\label{claim} For $V$ as in \eqref{eq: qfunctional}, the following hold:
\begin{enumerate}
\item $V(\rho)=V(-\rho)$,
\item $\dot V(\rho)=-aV(\tau-\rho)$ for $\rho>0$, and
\item $V(\tau)=\frac{1}{2a}$.
\end{enumerate}
\end{claim}
Our objective is to calculate $V(0)$. Set $R(\rho)=V(\tau-\rho)$. The pair $(V,R)$ satisfies the system \begin{equation*}
\begin{split}
\dot V=-a R, ~ \hspace{0.4in}  \dot R= a V
\end{split}
\end{equation*} where $V(0)=R(\tau)$, $V(\tau)=R(0)=\frac{1}{2a}$, and with solution
\begin{equation*}
\begin{pmatrix}
V(\rho) \\ R(\rho)
\end{pmatrix}=\begin{pmatrix}
\cos(a\rho) & -\sin(a\rho) \\ \sin(a\rho) & \cos(a\rho)
\end{pmatrix}\begin{pmatrix}
V(0) \\ R(0)
\end{pmatrix}.
\end{equation*} The second row at $\rho=\tau$ reads $$R(\tau)=\sin(a\tau)V(0)+\cos(a\tau)R(0)=V(0).$$ Solving for $V(0)$, we have
\begin{equation*}
V(0)=\frac{\cos(a\tau)}{1-\sin(a\tau)}R(0)=\frac{\cos(a\tau)}{1-\sin(a\tau)}V(\tau),
\end{equation*} and from $(3.)$ we conclude $$\int_0^{\infty}\varphi^2(t)\,dt=\frac{1}{2a}\frac{\cos(a\tau)}{1-\sin(a\tau)}.$$
\end{proof}
\begin{proof}[Proof of Claim \ref{claim}]
For $(1.)$ we see that \begin{equation*}\begin{split}V(-\rho)&=\int_{-\rho}^{\infty}\varphi(t)\varphi(t+\rho)\,dt=\int_0^{\infty}\varphi(u-\rho)\varphi(u)\,du\\
&=\int_{\rho}^{\infty}\varphi(u-\rho)\varphi(u)\,du=V(\rho)\end{split}\end{equation*} because $\varphi$ is a fundamental solution. For $(2.)$ we calculate
\begin{equation}\label{eq: firstodeq}
\begin{split}
\frac{d}{d\rho}V(\rho)&=-\varphi(\rho)\varphi(0)-\int_{\rho}^{\infty}\varphi(t)\frac{\partial}{\partial\rho}\varphi(t-\rho)\,dt\\
&=-\varphi(\rho)+a\int_{\rho}^{\infty}\varphi(t)\varphi(t-\rho-\tau)\,dt\\
&=-\varphi(\rho)+aV(\rho+\tau).
\end{split}
\end{equation} From the symmetry of $V$,
\begin{equation*}\begin{split}
\frac{d}{d\rho}V(\rho)&=\frac{d}{d\rho}\big[V(-\rho)\big]=-\frac{d}{d\rho}V(-\rho)\\
&=-\varphi(-\rho)-aV(-\rho+\tau)=-aV(\tau-\rho)
\end{split}
\end{equation*}so long as $\rho>0$. Finally, while \eqref{eq: firstodeq} yields $$\dot V(0)=-\varphi(0)+aV(\tau)=-1+aV(\tau),$$ from property $(2.)$ we obtain $\dot V(0)=-a V(\tau).$ The result occurs after equating the right hand sides of these two expressions.
\end{proof} 
\section*{B. Proofs Of Technical Results}

 \begin{proof}[Proof of Lemma \ref{lem: boundofcov}]
Direct calculations yield 
\begin{equation*}\begin{split}
C\Sigma_tC&=\int_0^{t}CQ\Phi(t-s)Q^T B B^T Q \Phi(t-s) \Sp Q^T C\,ds\\
&=b^2 \Sp CQ\int_0^t\Phi^2(t-s)\,ds ~(CQ)^T,
\end{split}
\end{equation*} 
where $\Phi_L(t) = Q \Phi(t) Q^T$ in which  $\Phi(t)=\mathrm{diag}\big(\varphi_{1}(t),\ldots,\varphi_{n}(t)\big)$ and $\varphi_k$'s are eigensolutions of the unperturbed network.  By letting $C=[\mathbf c_1,\dots,\mathbf c_N]^T$, the elements of $C\Sigma_tC^T$ are given by \begin{equation*}\begin{split}
&[C\Sigma_t C^T]_{ij}=b^2 \bigg[\int_{0}^t\sum_{k}(\mathbf c_i^T\mathbf q_{k})\varphi_{k}^2(t-s)(\mathbf c_{j}^T\mathbf q_k)\,ds\bigg]_{ij}.
\end{split}
\end{equation*} According to our assumptions, it follows that $\varphi_{1}(t)\equiv 1$ and all $\varphi_k(t)$ for $k=2,\ldots,n$ are exponentially fast vanishing functions. Therefore, matrix $C\Sigma_t C$ is bounded if and only if $\sum_{k}(\mathbf c_i^T\mathbf q_{1})(\mathbf c_{j}^T\mathbf q_1)=0$ for all $i,j \in \V$. But for $i=j$ we have $(\mathbf c_i^T\mathbf q_{1})^2=0$ for all $i$. The reverse part is trivial.
\end{proof}

\begin{proof}[Proof of Proposition \ref{prop: cohrisk}] \textit{(i.)}  If $z$ is independent of $\omega\in \Omega$, then 
$\mathbb P(z>\delta)=0$ for any $\delta$ arbitrarily close to $z$ from the right. In view of $\varepsilon\in [0,1)$, the definition of $\mathcal R_{\varepsilon}(z)$ imposes the infimum of the sum that is achieved when all $\delta$ attain the infimum value which is $z$. 

\textit{(ii.)} Note that $ \mathcal R_\varepsilon(z+m)=\inf\{ \delta~|~\mathbb P\big(z(\omega)+m>\delta\big)<\varepsilon\} $ is equivalent to   $ \inf\{ \delta~|~\mathbb P\big(z(\omega)>\delta-m\big)<\varepsilon\} $. For $\delta':=\delta+m$ we obtain $\mathcal R_\varepsilon(z+m)=\inf\{ \delta'+m~|~\mathbb P\big(z(\omega)>\delta'\big)<\varepsilon\} =m+ \inf\{ \delta'~|~\mathbb P\big(z(\omega)>\delta'\big)<\varepsilon\}=\mathcal R_\varepsilon(z)$.

\textit{(iii.)} Let $z_1, z_2\in \mathbb L^2$ with $z_1\leq z_2$ almost surely. Then 
$\{z_1>\delta\}\subseteq \{z_2 >\delta\}$ that implies
$\mathbb P \big(z_1>\delta\big)\leq \mathbb P\big(z_2>\delta\big)$. The decrease of $\delta$ on $z_1$ to achieve $\varepsilon$ is smaller than the corresponding decrease on $z_2$. So the risk of $z_1$ cannot be larger than the risk of $z_2$.

\textit{(iv.)} For any $\lambda>0$, $\mathbb P\big(\lambda y>\delta\big)=\mathbb P\big(y>\delta/\lambda\big)$ and simple substitutions yield the result.

\textit{(v.)} Let $z=\alpha x+(1-\alpha)y$ with $x,y \in \mathbb L^2$ and $\mathbb P\big(x\leq y\big)=p$, $\mathbb P\big(x> y\big)=1-p$ and $\alpha\in [0,1]$.
Then 
\begin{equation*}
\mathbb P\big(z>\delta\big)=\mathbb P\big(z>\delta|x\leq y\big)p+\mathbb P\big(z>\delta|x> y\big)(1-p).
\end{equation*}If $x\leq y$, then
$\mathbb P\big(z>\delta\big)\leq \mathbb P(y>\delta\big) $
while, if $x\geq y$, then 
$\mathbb P\big(z>\delta\big)\leq \mathbb P(x>\delta\big)$
Using the last two inequalities and property $(iii.)$ the result follows. \end{proof}

\begin{proof}[Proof of Proposition \ref{prop: riskmom}] \textit{(i.)} Follows similarly with $(ii.)$ of Proposition \ref{prop: cohrisk}.

\textit{(ii.)} Due to the convexity and monotonicity properties of $v$, it is easy to verify that $\mathbb A$ is convex. Now let $\{z_i\}_{i=1,2}\in \mathbb A$ and $\{\delta_i\}_{i=1,2}\leq 0$ the corresponding risk values. In view of $(i.)$,  it holds that $\mathcal T_{\varepsilon}(z_i- \delta_i)= 0$, i.e. $z_i-\delta_i \in \mathbb A$ and by convexity $$\alpha(z_1- \delta_1)+(1-\alpha)(z_2-\delta_2)\in \mathbb A.$$ Again from $(i.)$ we get
\begin{equation*}\begin{split}0&\geq \mathcal T_{\varepsilon}\big(\alpha(z_1-\delta_1)+(1-\alpha)(z_2-\delta_2)\big)\\
&= \mathcal T_{\varepsilon}\big((\alpha z_1+(1-\alpha)z_2)-(\alpha \delta_1+(1-\alpha)\delta_2)\big).\end{split}\end{equation*} which yields the result.

\textit{(iii.)} If $z_2 \leq z_1$ then $z_2-\delta \leq z_1-\delta$, so that $z_2- \delta\leq z_1-\delta$
and $v(z_2-\delta)\leq u(z_1- \delta)$. Consequently the cost $\delta$ to bring $v(z_2-\delta)$ closer to $v(\varepsilon)$ is smaller than the respective cost on $ u(z_1-\delta)$. 

\textit{(iv.)} For every $z\in \mathbb B$, \textit{(iii.)} shows that $\lambda z \in \mathbb A$ as well for $\lambda\geq 0$. We take cases on $\lambda$:

$\lambda\in [0,1]$ : Then $\lambda z+\mathbf 1 (1-\lambda) \varepsilon\in \mathbb A$ as one can verify that $\mathcal T_{\varepsilon}(\mathbf 1 \varepsilon )=0$, i.e. $\varepsilon\in \mathbb A$. From \textit{(i.)}, $\mathcal T_{\varepsilon}(\lambda z)+(1-\lambda)\varepsilon=\mathcal T_{\varepsilon}(\lambda z+(1-\lambda)\mathbf 1 \varepsilon)\leq \lambda \mathcal T_{\varepsilon}(z)$, and solving for $\mathcal T_{\varepsilon}(\lambda z)$ we obtain the result.
 
$\lambda\geq 1$ : Then it follows from $\lambda z, \mathbf 1 \varepsilon \in \mathbb A$ and properties \textit{(i.)} and \textit{(ii.)}:  $\mathcal T_{\varepsilon}(z)+(1-\lambda^{-1})\varepsilon=\mathcal T_{\varepsilon}(z+(1-\lambda^{-1})\varepsilon)=\mathcal T_{\varepsilon}(\lambda^{-1}(\lambda z)+(1-\lambda^{-1})\varepsilon)\leq \lambda^{-1}\mathcal T_{\varepsilon}(\lambda z)$. Solving for $\mathcal T_{\varepsilon}(\lambda z)$ we obtain the result.
\end{proof}

\begin{proof}[Proof of Lemma \ref{prop: uniqueness}]We write  $\sigma_t^2(Q)=b^2\mathbf c^T Q \int_0^t \Phi^2(t-s)\,ds Q^T\mathbf c$ and with little abuse of notation our focus is on the matrix $\Sigma_t=b^2 Q \int_0^t \Phi^2(t-s)\,ds\,Q^T$ . This is a diagonal form for every $t$ and characterizes the algebraic multiplicity of the eigenvalues $\zeta_k=\zeta_k(t)=\int_0^t\varphi_k^2(t-s)\,ds,~i\in \V$. Fix $t$ and $i\in \V$. Note that we can write $\sigma_t^2(Q)=b^2\sum_{k}\zeta_k(t)\sum_{j=1}^{m_i}(c_j^Q)^2$ where $m_k=m_k(t)$ is the algebraic multiplicity of $\zeta_k$. We take cases.

\noindent $m_k=1$: the corresponding normal (real) eigenvector is unique up to a sign. Consequently, $c_k^{Q_1}=\pm c_k^{Q_2}$ therefore $(c_k^{Q_1})^2=(c_k^{Q_2})^2$.

\noindent $m_k>1$: $\zeta_k(t)$ is repeated $m_k$ times while it produces $m_k$ linearly independent normal eigenvectors that can be chosen mutually orthogonal. Then the collection of the eigenvectors of $Q_1$ that span the subspace of $\zeta_k(t)$, say $Q^{\zeta_k}_1$ and  the corresponding collection of $Q^{\zeta_k}_2$ of $Q_2$ are associated with an orthogonal matrix $P$ such that $Q_1^{\zeta_k}=Q_{2}^{\zeta_k} P$. Then \begin{equation*} 
\begin{split}
&\sum_{j=1}^{m_i}(c_j^{Q_1^{\zeta_k}})^2=\mathbf c^T Q_1^{\zeta_k} (\mathbf c^TQ_1^{\zeta_k})^T=\mathbf c^T Q_{2}^{\zeta_k} P (\mathbf c^T Q_{2}^{\zeta_k} P)^T\\
&=\mathbf c^T Q_{2}^{\zeta_k} P P^T (Q_{2}^{\zeta_k})^T\mathbf c =\mathbf c^T Q_{2}^{\zeta_k} (\mathbf c^T Q_{2}^{\zeta_k})^T= \sum_{j=1}^{m_i}(c_j^{Q_2^{\zeta_k}})^2.
\end{split}
\end{equation*} For $t$ and $k$ being arbitrary, the result follows.     
\end{proof}

\begin{proof}[Proof of Theorem \ref{thm: main0}]
From the discussion in \S \ref{subs: statisticsofoutput}, $y_t\sim \mathcal N(\mu_t,\sigma_t^2)$ with $\mu_t$ and $\sigma_t^2$ as in \eqref{eq: meanvaluescalar} and \eqref{eq: standarddevscalar}. Then $|y_t|$ follows a folded normal distribution with mean $\mu_{|y_t|}$ and variance $\sigma_{|y_t|}^2$ defined in the statement of the theorem. For the volatility risks we distinguish between $v(x)=x^2$ and $v(x)=e^{\beta x}$.

For $v(x)=x^2$,
$
\mathcal T_{\varepsilon}(|y_t|)=\inf\big\{\delta\in \mathbb R: \mathbb E[(|y_t|-\delta)^2]\leq \varepsilon^2\big\}$
 Solving for $\delta$ the inequality $\mathbb E[(|y_t|-\delta)^2]\leq \varepsilon^2$ we obtain that the infimum is achieved at the root
$\delta=\mathbb E[|y_t|]-\sqrt{\varepsilon^2-\mathbb E\big[\big(|y_t|-\mathbb E[|y_t|]\big)^2\big]}$. For $v(x)=e^{\beta x}$,
\begin{equation*}
\mathbb E\big[e^{\beta (|y_t|-\delta)}\big]\leq e^{\beta \varepsilon} \Leftrightarrow \delta \geq \frac{1}{\beta} \ln \big(\mathbb E[e^{\beta (|y_t|-\varepsilon)}]\big)
\end{equation*}
so $\mathcal T_{\varepsilon}(|y_t|)=\frac{1}{\beta} \ln \big(\mathbb E[e^{\beta (|y_t|-\varepsilon)}]\big)$ and the result follows by straightforward algebra on $$\mathbb E[e^{\beta |y_t|}]=\int_{-\infty}^{\infty}e^{\beta|u|}e^{-\frac{(u-\mu_t)^2}{2\sigma_t^2}}\,du.$$
\noindent Finaly from Eq. \eqref{eq: risk} we calculate the probability
\begin{equation*}\begin{split}
&\mathcal R_\varepsilon\big(|y_t|\big)=\inf\big\{\delta>0:\mathbb P(|y_t|>\delta)\leq \varepsilon \big\}\\
&= \inf\big\{\delta>0:\mathbb P(|y_t|\leq \delta)\geq 1-\varepsilon \big\} \\
&=\inf\bigg\{\delta>0:\frac{1}{\sqrt{2\pi}\sigma_t}\int_{-\delta}^{\delta}e^{-\frac{(x-\mu_t)^2}{2\sigma_t^2}}\,dx\geq (1-\varepsilon)\bigg\}\\
&=\inf\bigg\{\sqrt{2}\sigma_t\delta+\mu_t>0:\frac{1}{\sqrt{\pi}}\int_{-\delta-2\frac{\mu_t}{\sqrt{2}\sigma_t}}^{\delta}e^{-u^2}\,du\geq (1-\varepsilon)\bigg\}\\
&=\sqrt{2}\sigma_t S_{\varepsilon}\bigg(\frac{\mu_t}{\sqrt{2}\sigma_t}\bigg)+\mu_{t},
\end{split}
\end{equation*}
where $S_{\varepsilon}(\alpha)$ as in \eqref{eq: sgaussiant}.
\end{proof}
\begin{proof}[Proof of Theorem \ref{thm: main1}]
The condition $\mathbf c^T\mathbf q_1=0$ implies $c_1^Q=0$ or equivalently the marginally stable mode to be unobservable at the output. This implies $\lim_{t\rightarrow +\infty} \mu_{t}=0$ for $\mu_t$ as in \eqref{eq: meanvaluescalar} and $\lim_{t\rightarrow +\infty}\mu_{|y_t|}=\sqrt{\frac{2}{\pi}}\overline{\sigma}$ for $\overline{\sigma}=\lim_{t\rightarrow +\infty} \sigma_t$ and $\sigma_t$ as in \eqref{eq: standarddevscalar}. Of course, $
\overline{\sigma}^2 = \sum_{k> 1}(c_k^Q)^2\int_{0}^{\infty}\varphi_k^2(s)\,ds $ and we can combine Theorem \ref{thm: main0} together with Lemma \ref{lem: vmatrix} to derive the corresponding expressions \eqref{eq: riskmeanquadss}, \eqref{eq: expriskss} and \eqref{eq: riskprob} for all three risk measures.
\end{proof}

\begin{proof}[Proof of Theorem \ref{thm: monotonicity}]
Direct calculations, yield \begin{equation*}
\frac{\partial}{\partial \tau}\overline{\sigma}^2=\frac{1}{2}b^2\sum_{k=2}^n\frac{1}{1-\sin(\lambda_k\tau)}(c_k^Q)^2>0.
\end{equation*}
From this and the fact the systemic risk measures are increasing functions of $\overline{\sigma}$, it immediately follows  that $\mathcal R_\varepsilon$ or $\mathcal T_\varepsilon$ are strictly increasing functions of time delay as well.
\end{proof}

\begin{proof}[Proof of Theorem \ref{thm: doubleinequality2}]
We observe that $\boldsymbol{\mathcal R}_\varepsilon(|\overline{\mathbf y}|)$ is a set-valued function whose value is equivalent to the solution of the following multi-dimensional joint chance-constrained optimization problem 
\begin{eqnarray}
& & \hspace{-5.1cm}\underset{\boldsymbol \delta}{\textrm{minimize}}  ~~\boldsymbol \delta\label{eq: problem2}\\
& & \hspace{-5.1cm} \mbox{subject to:} ~~ \mathbb P\left(\Sp|\overline{\mathbf y} | \preceq \boldsymbol \delta \Sp \right) \geq 1-\varepsilon.\label{eq: problem2-1}
\end{eqnarray}
In general, this problem is very difficult to solve even numerically, let alone find closed form solutions. To overcome computational complexity and calculate bounds, we decompose the joint probability constraint \eqref{eq: problem2-1} into several individual constraints. Let us consider positive numbers $\varepsilon_1, \ldots, \varepsilon_q$  such that \eqref{epsilon-q} holds. Suppose that $\delta_k^*$ is the optimal solution of the following optimization problem
\begin{eqnarray}
& & \hspace{-4.8cm}\underset{\delta_k}{\textrm{minimize}}  ~~\delta_k \label{problem3-1}\\
& & \hspace{-4.8cm} \mbox{subject to:} ~~ \mathbb P\left(\Sp|\overline{y}_k| \leq   \delta_k \Sp \right) \geq 1-\varepsilon_k\label{problem3-2}
\end{eqnarray}
for $k=1,\ldots,q$. is an upper bound for the optimal solution of \eqref{eq: problem2}-\eqref{eq: problem2-1}. Let us denote $\boldsymbol \delta^* = [\delta_1^*,\ldots, \delta_q^*]^T$. From Bonferroni  inequality \cite{Genz:2009:CMN:1695822}, we have that 
\[ 
\PP\left(\Sp|\overline{\mathbf y} | \preceq \boldsymbol \delta^* \Sp \right) ~\geq~ \sum_{k=1}^q \PP\left(\Sp|\overline{y}_k| \leq   \delta_k^* \Sp \right) - (q-1)
\]
\noindent From constraint \eqref{epsilon-q} and \eqref{problem3-2}, it follows that 
\begin{eqnarray*} 
\PP\left(\Sp|\overline{\mathbf y} | \preceq \boldsymbol \delta^* \Sp \right) &\geq & \sum_{k=1}^q (1-\varepsilon_k) - (q-1)  =  1 - \sum_{k=1}^q \varepsilon_k ~\geq ~ 1-\varepsilon.
\end{eqnarray*}
\noindent This implies that $\boldsymbol \delta^*$ is a feasible solution for optimization problem \eqref{eq: problem2}-\eqref{eq: problem2-1}. Thus, we conclude that 
\begin{eqnarray}
\boldsymbol{\mathcal R}_{\varepsilon}(|\overline{\mathbf y}|)  &\preceq&  [\Sp \delta_1^*,\ldots, \delta_q^* \Sp]^T \nonumber \preceq  \left[\Sp {\mathcal R}_{\varepsilon_1}(|\overline{y}_1|),~\ldots~, {\mathcal R}_{\varepsilon_q}(|\overline{y}_q|) \Sp \right]^T. \label{ineq_vec_risk_1} 
\end{eqnarray}
By using the fact that ${\mathcal R}_{\varepsilon_k}(|\overline{y}_k|) = S_{\varepsilon_k}(0) S_{\varepsilon}(0)^{-1} {\mathcal R}_{\varepsilon}(|\overline{y}_k|) $, we can rewrite \eqref{ineq_vec_risk_1} to conclude that the optimal solution set must satisfy the upper bound of $\mathbb W_{\mathfrak{R}_{\varepsilon}}$.  In order to prove the lower bound, we use an equivalent representation of  \eqref{eq: riskflucmmonec} in the following form
\begin{equation*}\label{risk-complement}
\boldsymbol{\mathcal R}_{\varepsilon}(|\overline{\mathbf y}|)   =  \inf\big\{\boldsymbol \delta \in \R^q~\big|~\mathbb P\big(\Sp  |\overline{y}_1| > \delta_1 \Sp \vee \Sp \ldots \Sp \vee \Sp |\overline{y}_q| > \delta_q\big)\leq \varepsilon\big\}. 
\end{equation*}  
Suppose that $\boldsymbol \delta^{**}$ is an optimal solution of \eqref{eq: problem2}-\eqref{eq: problem2-1}. Then, 
\[\mathbb P\big(\Sp  |\overline{y}_1| > \delta^{**}_1 \Sp \vee \Sp \ldots \Sp \vee \Sp |\overline{y}_q| > \delta^{**}_q\big)\leq \varepsilon. \]
From Fr\'echet inequality \cite{Ruschendorf1991},
\[ \max \Big\{ \mathbb P\big(\Sp  |\overline{y}_k| > \delta^{**}_k\big) ~\big|~k=1,\ldots,q\Big\} ~\leq~\mathbb P\left(\Sp \bigvee_{k=1}^q |\overline{y}_k| > \delta^{**}_k \Sp \right),\]
it follows that $\mathbb P\big(\Sp  |\overline{y}_k| > \delta^{**}_k\big) ~\leq~\varepsilon$ for all $k=1,\ldots,q$.  Thus, $\delta_k^{**}$  is a feasible solution of \eqref{problem3-1}-\eqref{problem3-2} and its optimal solution is upper bounded by $\delta_k^{**}$. This implies that the solution set $ \boldsymbol{\mathcal R}_{\varepsilon}(|\overline{\mathbf y}|)$ must honor the lower bound of $\mathbb W_{\mathfrak{R}_{\varepsilon}}$.  
\end{proof}
\begin{proof}[Proof of Theorem \ref{thm: multobjvolrisk}] We recall from Section \ref{sec:stat} that in steady-state $\overline{\mathbf y}\sim \mathcal N({\mathbf 0}, \overline{\Sigma})$   if and only if $\mathbf 1_n$ is in the kernel of output matrix $C$. Thus,  constraint $\mathbb{E}\big[v(|\overline{\mathbf y}|-\boldsymbol \delta) \big] \leq  v(\varepsilon)$ in \eqref{eq: riskmomv} with quadratic function is equivalent to 
\begin{equation*}
\sum_{i=1}^q \overline{\sigma}_i^2-2\sqrt{\frac{2}{\pi}}\sum_{i=1}^q\delta_i \overline{\sigma}_i+\sum_{i=1}^q\delta_i^2\leq \varepsilon^2,
\end{equation*} where $\overline{\sigma}_i^2$ is the $i$'th diagonal element of the covariance matrix $\overline{\Sigma}$. From completing the square,  we obtain
$$ \sum_{i=1}^q \left[\bigg(\delta_i-\sqrt{\frac{2}{\pi}} \overline{\sigma}_i\bigg)^2 + \left(1-\frac{2}{\pi} \right) \overline{\sigma}_i^2 \right]\Sp \leq  \Sp \varepsilon^2.$$ 
The constraint is clearly feasible if and only if $r>0$. The feasible set lies on the disk  $ \sum_{i=1}^q \big(\delta_i-\sqrt{\frac{2}{\pi}} \overline{\sigma}_i\big)^2 \leq r $. Taking the infimum over delta, we arrive at the subset on the sphere that satisfies $$\text{both}~\delta_i\leq 0~\text{and}~\sum_{i=1}^q \bigg(\delta_i-\sqrt{\frac{2}{\pi}} \overline{\sigma}_i\bigg)^2 = r $$ which identifies with the set in the second part in the statement of the Theorem, concluding the proof.
\end{proof}
\begin{proof}[Proof of Theorem \ref{thm: multobjvolrisk2}]It suffices to show that the individual margins $\delta_i$, stacked as a vector $(\delta_1,\dots,\delta_q)^T$ belong to the subset $\boldsymbol{\mathcal T}_{\varepsilon}(|\overline{\mathbf y}|)$ of  $\mathbb R^q$, as in Theorem \ref{thm: multobjvolrisk}. We recall from \eqref{eq: riskmeanquadss}
\begin{equation*}
\mathcal{T}_{\varepsilon_i}(|\overline{y}_i|)=\frac{2}{\pi}\overline{\sigma}_i-\sqrt{\varepsilon_i^2 - (1-\frac{2}{\pi})\sigma_i^2}=:\delta_i
\end{equation*} this is equivalent to 
\begin{equation*}
\bigg(\delta_i-\sqrt{\frac{2}{\pi}}\overline{\sigma}_i^2\bigg)^2=\varepsilon_i^2-\bigg(1-\frac{2}{\pi}\bigg)\overline{\sigma}_i^2
\end{equation*} summing over $i=1,\dots,q$ we have
\begin{equation*}
\sum_{i=1}^q \bigg(\delta_i-\sqrt{\frac{2}{\pi}}\overline{\sigma}_i^2\bigg)^2=\sum_{i=1}^q \varepsilon_i^2- \bigg(1-\frac{2}{\pi}\bigg)\sum_{i=1}^q\overline{\sigma}_i^2=r.
\end{equation*}
\end{proof}
\begin{proof}[Proof of Theorem \ref{thm: tradeoff}] It suffices to calculate the lower bound of the stead state variance, $\sigma^*$, as in \eqref{eq: fl}. The corresponding fundamental limits on risk measures are then directly derived. Recall the steady-state variance $\overline{\sigma}^2$ from Theorem \ref{thm: main1} and observe that:
\begin{equation*}\begin{split}
\overline{\sigma}^2&=b^2 \sum_{k>1} \frac{(c_k^Q)^2}{2\lambda_k}\frac{\cos(\lambda_k\tau)}{1-\sin(\lambda_k\tau)} \geq  b^2\tau\big[\inf_{z\in(0,\frac{\pi}{2})}f(z)\big] \sum_k (c_k^Q)^2\\
&=\frac{b^2\tau}{2} \|\mathbf c\|^2\frac{1}{1-\sin(z^+)}=:\sigma_*^2
\end{split}
\end{equation*} as the infimum of $f(z),~z\in (0,\pi/2)$ is achieved at $z$ that satisfies $z=\cos(z)$, and $\sum_i (c_i^Q)^2=\|\mathbf c\|^2$ for $c_1^Q=0$, $Q$ orthogonal basis. 
\end{proof}
\begin{proof} [Proof of Theorem \ref{thm: tradeoff2}]
For the proof of this result we will need the following Lemma:
\begin{lem}\label{lem: auxtradeoff} For arbitrary but fixed  $\alpha,~\beta\geq 0$, the function $$p_{\alpha,\beta}(x)=\frac{(\alpha+\beta x)\cos(x)}{x^2(1-\sin(x))}$$ attains a unique minimum at $(0,\pi/2)$.
\end{lem}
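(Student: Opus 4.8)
The plan is to pass to the logarithm and prove that $p_{\alpha,\beta}$ is \emph{strictly log-convex} on $(0,\pi/2)$, which is more than enough to force a unique minimizer. Since $\cos x>0$ and $\alpha+\beta x>0$ throughout $(0,\pi/2)$ (the degenerate case $\alpha=\beta=0$ being excluded, and in the application $\alpha=k-1\ge 1$), the function $p_{\alpha,\beta}$ is strictly positive, so I would set $g(x)=\ln p_{\alpha,\beta}(x)$. First I would record the boundary behaviour: as $x\to 0^+$ the numerator tends to $\alpha$ (or behaves like $\beta x$ when $\alpha=0$) while $x^2\to 0$, so $p_{\alpha,\beta}(x)\to+\infty$; and as $x\to(\pi/2)^-$ the identity $\frac{\cos x}{1-\sin x}=\frac{1+\sin x}{\cos x}$ shows the factor $\frac{\cos x}{1-\sin x}\to+\infty$, so again $p_{\alpha,\beta}(x)\to+\infty$. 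Hence $g\to+\infty$ at both endpoints, which guarantees an interior minimizer and excludes the boundary.

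The core of the argument is to show $g''>0$ on $(0,\pi/2)$. Writing $g(x)=\ln(\alpha+\beta x)+\ln\cos x-2\ln x-\ln(1-\sin x)$ and differentiating, the key simplification is the identity $\frac{\cos x}{1-\sin x}=\sec x+\tan x$, which cancels the two tangent contributions and collapses the first derivative to the clean form
\begin{equation*}
g'(x)=\frac{\beta}{\alpha+\beta x}+\sec x-\frac{2}{x}.
\end{equation*}
Differentiating once more gives
\begin{equation*}
g''(x)=\frac{2}{x^2}+\sec x\,\tan x-\frac{\beta^2}{(\alpha+\beta x)^2}.
\end{equation*}

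The decisive estimate is $\frac{\beta}{\alpha+\beta x}\le\frac{1}{x}$, valid for every $x>0$ because $\alpha\ge 0$ gives $\beta x\le\alpha+\beta x$ (and the bound is trivial when $\beta=0$). Squaring yields $\frac{\beta^2}{(\alpha+\beta x)^2}\le\frac{1}{x^2}$, so
\begin{equation*}
g''(x)\ge\frac{1}{x^2}+\sec x\,\tan x>0.
\end{equation*}
Thus $g$ is strictly convex on $(0,\pi/2)$; combined with $g\to+\infty$ at both endpoints, $g$ attains a unique interior minimum, and since $p_{\alpha,\beta}=e^{g}$ is a strictly increasing function of $g$, the same point is the unique minimizer of $p_{\alpha,\beta}$.

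I expect the real obstacle to be the \emph{uniqueness} rather than the existence: the term $\frac{\beta}{\alpha+\beta x}$ is decreasing and a priori could spoil the monotonicity of $g'$, so the whole argument rests on showing that its negative contribution to $g''$ is uniformly dominated by the $\frac{2}{x^2}$ term. The bound $\frac{\beta}{\alpha+\beta x}\le\frac{1}{x}$ is precisely what makes this domination hold for all admissible $\alpha,\beta$, and recognizing and exploiting it is the crux of the proof.
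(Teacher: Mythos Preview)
Your argument is correct and, in fact, cleaner than the paper's. The paper proves the lemma by a direct analysis of the critical-point equation: it writes $p_{\alpha,\beta}'(x)=0$ as $I(x)=(\beta x+2\alpha)\cos x-\beta x^2-\alpha x=0$, checks $I(0)>0$ and $I(\pi/2)<0$ to get a root by the intermediate value theorem, and then argues uniqueness by a case split on whether $\beta\le\alpha$ or $\beta>\alpha$, in the latter case showing that any root $\xi$ of $I'$ satisfies $I(\xi)>0$. Your route is genuinely different: you pass to $g=\ln p_{\alpha,\beta}$, exploit the identity $\frac{\cos x}{1-\sin x}=\sec x+\tan x$ to collapse $g'$ to $\frac{\beta}{\alpha+\beta x}+\sec x-\frac{2}{x}$, and then the single uniform estimate $\frac{\beta}{\alpha+\beta x}\le\frac{1}{x}$ forces $g''\ge\frac{1}{x^2}+\sec x\tan x>0$. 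This buys you strict log-convexity, which is a stronger conclusion than mere uniqueness of the minimizer, and it dispenses entirely with the case analysis on $\alpha$ versus $\beta$. The paper's approach is perhaps more pedestrian but stays at the level of the original function; yours is shorter and more structural.
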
 
We directly calculate,
\begin{equation*}\begin{split}
\overline{\sigma}^2\Xi_{\mathcal G}&=b^2\frac{n}{2}\bigg(\sum_{k>1} \frac{(c_k^Q)^2}{\lambda_k}\frac{\cos(\lambda_k\tau)}{1-\sin(\lambda_k \tau)}\bigg)\bigg(\sum_{k>1}\frac{1}{\lambda_k}\bigg)\\
&= b^2\frac{n}{2}  \sum_{k>1}\bigg[\frac{(c_k^Q)^2}{\lambda_k}\frac{\cos(\lambda_k\tau)}{1-\sin(\lambda_k \tau)}\sum_{k'=2}^{k}\bigg(\frac{1}{\lambda_{k'}}\bigg) +\\
&\hspace{1.1in}+\frac{(c_k^Q)^2}{\lambda_k}\frac{\cos(\lambda_k\tau)}{1-\sin(\lambda_k \tau)}\sum_{k'=k+1}^{n}\bigg(\frac{1}{\lambda_{k'}}\bigg)\bigg]\\
&> \frac{n b^2\tau^2}{2}\sum_{k>1}(c_k^Q)^2\bigg(\frac{(k-1)\cos(\lambda_k\tau)}{(\lambda_k\tau)^2\big(1-\sin(\lambda_k\tau)\big)}+\\
&\hspace{1.4in}+\frac{2(n-k)}{\pi}\frac{\cos(\lambda_k\tau)}{\lambda_k\tau\big(1-\sin(\lambda_k\tau)\big)}\bigg)\\
&=\frac{n b^2\tau^2}{2}\sum_{k>1}(c_k^Q)^2p_{\alpha_{k},\beta_{k}}(\lambda_{k}\tau)\geq \|\mathbf c\|^2\frac{nb^2\tau}{2}\min_{k> 1}p_{\alpha_{k},\beta_{k}}(\lambda_{k}\tau)\\
& \geq \|\mathbf c\|^2\frac{nb^2\tau}{2}\min_{k> 1} \min_{x\in (0,\pi/2)} p_{\alpha_{k},\beta_{k}}(x) 
\end{split}
\end{equation*} 
where $p_{\alpha_{k},\beta_{k}}(x)$ is as in Lemma with $\alpha_{k}=k-1$ and $\beta_k=\frac{2(n-k)}{\pi}$. Now all the limits in $(i.), (ii.)$ and $(iii.)$ occur after combining \eqref{eq: fl} with the results of Theorem \ref{thm: main1}.  For the proof for the trade-offs we work as follows:

$(i.)$ The mean value theorem yields:
\begin{equation*}\begin{split}
\mathcal T_\varepsilon + \varepsilon & =\sqrt{\frac{2}{\pi}}\overline{\sigma}-\sqrt{\varepsilon^2-\bigg(1-\frac{2}{\pi}\bigg)\overline{\sigma}^2}+\sqrt{\varepsilon^2}\\
& \geq \sqrt{\frac{2}{\pi}}\overline{\sigma}+\frac{1}{2\varepsilon}\bigg(1-\frac{2}{\pi}\bigg)\overline{\sigma}^2.
\end{split}
\end{equation*} By virtue of \eqref{eq: fl} and \eqref{eq: fll}
 then
\begin{equation*}
\big(\mathcal T_\varepsilon + \varepsilon\big)\cdot  \sqrt{\Xi_{\mathcal G}}>\bigg(\sqrt{\frac{2}{\pi}}+\frac{1}{2\varepsilon}\bigg(1-\frac{2}{\pi}\bigg)\sigma_*\bigg)\vartheta_*
\end{equation*}

For $(ii.)$ we recall the  inequality $\ln(1+x)\geq \frac{x}{x+1}$ for $x>-1$ and we calculate:
\begin{equation*}
\frac{\ln\big(1+\mathrm{erf}(\beta\overline{\sigma}/2)\big)}{\beta}\geq \frac{1}{\beta}\cdot\frac{\mathrm{erf}(\beta\overline{\sigma}/2)}{1+\mathrm{erf}(\beta\overline{\sigma}/2)}\geq \frac{\mathrm{erf}(\beta\overline{\sigma}/2)}{2\beta}
\end{equation*} 
The best bound we can now get is 
\begin{equation*}
\frac{\ln\big(1+\mathrm{erf}(\beta\overline{\sigma}/2)\big)}{\beta}\sqrt{\Xi_{\mathcal G}}> \frac{\mathrm{erf}(\beta\sigma_*/2)}{2\beta} \frac{2n(n-1)\tau}{\pi}
\end{equation*} So
\begin{equation*}
\big(\mathcal T_{\varepsilon}+\varepsilon\big)\sqrt{\Xi_{\mathcal G}}>\frac{\beta}{2}\sigma_*\vartheta_*+ \frac{\mathrm{erf}(\beta\sigma_*/2)}{2\beta} \frac{2n(n-1)\tau}{\pi}
\end{equation*}

For $(iii.)$ the result follows immediately from \eqref{eq: fll}.
\end{proof}

\begin{proof}[Proof of Corollary \ref{cor: multilimitstradeoffs}] $(i)$ let $\delta_i$ the $i^{th}$ element of $ \boldsymbol{\mathcal T}_{\varepsilon}(|\mathbf{\overline{y}}|)$. From Theorem \ref{thm: multobjvolrisk}, it follows that $\delta_i\in \big(\sqrt{\frac{2}{\pi}}\overline{\sigma}_i-r,\sqrt{\frac{2}{\pi}}\overline{\sigma}_i+r\big)$. It is straightforward to verify that  $$\delta_i\geq \sqrt{\frac{2}{\pi}}\sigma_i^*-\sqrt{\varepsilon^2-\bigg(1-\frac{2}{\pi}\bigg)\sum_{i=1}^q(\overline{\sigma}_i^*)^2},$$ where  the result follows similarly to the steps in the proof of Theorem \ref{thm: tradeoff}. $(ii)$ Let us denote $\boldsymbol{\mathcal T}_{\varepsilon}(|\mathbf{\overline{y}}|)=(\delta_1,\dots,\delta_q)$ and recall  Eq. \eqref{eq: riskvectorexponential2}.  
From the Jensen's inequality:
\begin{equation*}
\begin{split}
\ln\big( \mathbb E\big[e^{\beta(\sum_{i}(|\overline{y}_i|-\frac{1}{q}\varepsilon)}\big]\big)&\geq \ln\big(e^{\mathbb E[\beta(\sum_{i}(|\overline{y}_i|-\frac{1}{q}\varepsilon)]}\big)\\
&=\beta \sqrt{\frac{2}{\pi}}\sum_{i=1}^q \sigma_i-\beta \varepsilon
\end{split}
\end{equation*} From Theorem \ref{thm: tradeoff}, we have that $\sigma_i\geq b \|\mathbf c_i\|\sqrt{\frac{\tau}{2(1-\sin(z^+))}}$. Substitute this lower bound to the first inequality of the proof and apply this bound to Eq. \eqref{eq: riskvectorexponential2} to conclude.
\end{proof}

\begin{proof}[Proof of Lemma \ref{lem: auxtradeoff}]
It can be easily shown that $\lim_{x\uparrow \frac{\pi}{2}}p_{\alpha,\beta}(x)=\lim_{x\downarrow  0}p_{\alpha,\beta}(x)=+\infty$. This means that, there exists at least one extremum that is a minimum of $p_{\alpha,\beta}(x)$ in $(0,\pi/2)$. Now this extremum must satisfy 
\begin{equation*}
p_{\alpha,\beta}'(x)=-\frac{(\beta x+2\alpha)\cos(x)-\beta x^2-\alpha x}{x^2(1-\sin(x))}=0.
\end{equation*} The claim is concluded if we show that, $I(x)=(\beta x+2\alpha)\cos(x)-\beta x^2-\alpha x$, has a unique solution in any compact subset of $(0,\pi/2)$. At first note that
$I(0)=2\alpha>0$ and $I(\pi/2)=-\beta\big(\frac{\pi}{2}\big)^2-\alpha\frac{\pi}{2}<0$. By the Intermediate Value Theorem, $I(x)$ attains a root in $(0,\pi/2)$. If this is not unique then $I'(x)=0$ should have a solution in the interval between the two distinct roots of $I(x)$. Now,  \begin{equation*}
I'(x)=\beta\cos(x)-(\beta x+2\alpha)\sin(x)-2\beta x-\alpha
\end{equation*} does not change sign in $(0,\pi/2)$ if $\beta\leq \alpha$. If $\beta\geq \alpha$ then the root $\xi$ of $I'(x)$ satisfies $$\beta\cos(\xi)-(\beta \xi+2\alpha)\sin(\xi)-2\beta \xi-\alpha=0.$$ Plugging the above relation to $I(x)$ we have $$I(\xi)=\frac{1}{\beta}(\beta\xi+2\alpha)^2\sin(\xi)+\beta \xi^2+4\alpha\xi+\frac{2\alpha^2}{\beta}>0$$
 Consequently, there exists $x'>0$ such that $I(x')>0$ and $I'(x)<0$ for $x\in(x',\pi/2)$ proving that $I$ vanishes at a unique point in $(0,\pi/2)$ for every fixed pair of $\alpha$ and $\beta$.
\end{proof}

\section*{APPENDIX C: Derivation of Risk Formulas}

The output vector $\mathbf c$ is set as $\mathbf c^T=\mathbf m_i$ for the deviation of the average measure. We remark that, $\mathbf m_i^T\mathbf q_k=q_{ik}$, i.e. the $i^{th}$ element of the $k^{th}$ eigenvector of $L$. Finally, we will take $b=1$ for the diffusion coefficient.

\subsection*{The Complete Graph, $\mathbf K_n$} This is a simple undirected graph in which every pair of distinct vertices is connected by a unique edge. The eigenspace of $\mathbf K_n$ is $\lambda_1=0$ with $\mathbf q_1=\frac{1}{\sqrt{n}}\mathbbm 1$, and $\lambda_k=n$ for $k>1$ with linearly independent, unitary  eigenvectors $\frac{1}{\sqrt{2}}(\mathbf e_1-\mathbf e_k)$ that are normal, but not mutually orthogonal. For this topology we therefore assume $n\tau<\frac{\pi}{2}$. While a Gram-Schmidt process could produce an orthonormal basis, this is not necessary here. From Theorem 1 we calculate the variance
\begin{equation*}\begin{split}\bar\sigma^2&=b^2\tau\sum_{k>1} (c_k^Q)^2 f(\lambda_k\tau)=\tau f(n\tau)\sum_{k>1}(c_k^Q)^2=\tau f(n\tau) ||\mathbf c||^2=\tau\frac{n-1}{n} f(n\tau)
\end{split}
\end{equation*}
The symmetry of $K_n$ is particularly convenient for an explicit calculation of the the variance of all elements of $|\bar{\mathbf y}|$ and which is of course identical for all $|\bar y|_i$. The fluctuation risks are $$\mathcal R^{(i)}_\varepsilon(|\bar{y}|)=S_{\varepsilon}(0)\sqrt{\frac{2\tau(n-1)}{n}f(n\tau)}.$$ 

\subsection*{Wheel graph $\mathbf W_{n+1}$}
This graph occurs out of adding in the circulant graph of $n$ nodes an additional central node. We denote this central node with $i=1$. The next result establishes the eigen-structure of $W_{n+1}$.
\begin{lem}\label{lem: wheelgraph}
The eigen-structure of $W_{n+1}$ is $\lambda_1=0$ with $\mathbf q_1=\frac{1}{\sqrt{n+1}}\mathbbm 1_{n+1}$, $~\lambda_k=3-2\cos\big(\frac{2\pi(k-1)}{n}\big)$, with $\mathbf q_k=\frac{1}{\sqrt{n}}(0,1,e^{i\frac{2k\pi}{n}},\dots,e^{i\frac{2k(n-1)\pi}{n}})^T$ where $i^2=-1$, $k=2,\dots,n$ and  $\lambda_{n+1}=n+1,$ with $\mathbf q_{n+1}=\frac{1}{\sqrt{n^2+n}}(n,-1,\dots,-1)^T$.
\end{lem}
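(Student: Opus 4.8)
The plan is to exploit the block structure of the Laplacian of $\mathbf{W}_{n+1}$ together with the circulant structure of its outer rim. Ordering the central (hub) node first, the Laplacian takes the form
\[
L = \begin{pmatrix} n & -\mathbf{1}_n^T \\ -\mathbf{1}_n & 3I_n - C_n \end{pmatrix},
\]
where $C_n$ is the adjacency matrix of the cycle on the $n$ rim nodes, each of which has degree $3$ (two cycle neighbors plus the hub). Since $C_n$ is circulant, its eigenvectors are the discrete Fourier modes $\mathbf{f}_j$ with $(\mathbf{f}_j)_\ell = e^{i 2\pi j \ell /n}$ and eigenvalues $2\cos(2\pi j/n)$ for $j = 0,\ldots,n-1$; consequently $3I_n - C_n$ has eigenvalues $3 - 2\cos(2\pi j/n)$ on the same modes, and $C_n\mathbf{1}_n = 2\mathbf{1}_n$.

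First I would dispatch the $n-1$ nontrivial rim modes. For each $j \neq 0$ one has $\mathbf{1}_n^T \mathbf{f}_j = 0$, so that for the lifted vector $(0\,;\,\mathbf{f}_j)^T$ (a zero in the hub coordinate, $\mathbf{f}_j$ on the rim) the off-diagonal coupling $-\mathbf{1}_n^T\mathbf{f}_j$ vanishes and the lower block reproduces $(3I_n - C_n)\mathbf{f}_j$. A one-line computation then gives $L(0\,;\,\mathbf{f}_j)^T = \big(3 - 2\cos(2\pi j/n)\big)(0\,;\,\mathbf{f}_j)^T$, which yields exactly the eigenpairs indexed by $k = j+1 = 2,\ldots,n$ in the statement.

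The remaining two eigenpairs come from the two-dimensional $L$-invariant subspace $V = \mathrm{span}\{u,v\}$ with $u = (1\,;\,\mathbf{0})^T$ and $v = (0\,;\,\mathbf{1}_n)^T$. Using $C_n\mathbf{1}_n = 2\mathbf{1}_n$ I would compute $Lu = nu - v$ and $Lv = -nu + v$, so that $L|_V$ is represented in the basis $\{u,v\}$ by $M = \begin{pmatrix} n & -n \\ -1 & 1 \end{pmatrix}$, whose trace is $n+1$ and determinant is $0$. Its eigenvalues are therefore $0$ and $n+1$, with eigenvectors $u+v = \mathbf{1}_{n+1}$ and $nu-v = (n,-1,\ldots,-1)^T$, matching $\mathbf{q}_1$ and $\mathbf{q}_{n+1}$. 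The normalization constants then follow from $\|\mathbf{1}_{n+1}\| = \sqrt{n+1}$, $\|\mathbf{f}_j\| = \sqrt{n}$, and $\|(n,-1,\ldots,-1)\| = \sqrt{n^2+n}$.

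The only subtlety to flag is that the rim eigenvectors $\mathbf{q}_k$ are complex Fourier modes (appearing in conjugate pairs with equal eigenvalues) rather than a real orthonormal basis. This is harmless for the downstream risk formulas: by Lemma \ref{prop: uniqueness} the aggregated quantities $\sum_k (c_k^Q)^2$ over each eigenspace are eigenspace-invariant, so any admissible choice of basis — real cosine/sine modes or complex exponentials — produces the same variance. I expect no genuine obstacle; the verification is essentially mechanical, and the only care required is tracking the row-sum identity $C_n\mathbf{1}_n = 2\mathbf{1}_n$, which is precisely what decouples the hub from the nontrivial rim modes and isolates the $2\times 2$ reduction carrying the eigenvalues $0$ and $n+1$.
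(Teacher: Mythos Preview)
Your proposal is correct and follows essentially the same approach as the paper: both exploit the block structure of $L$ to decouple the nontrivial circulant rim modes (via $\mathbf 1_n^T\mathbf f_j=0$) and then handle the two remaining ``symmetric'' modes. The only cosmetic difference is that you package the latter as a $2\times 2$ reduction on $\mathrm{span}\{u,v\}$, whereas the paper verifies $L\mathbf q_{n+1}=(n+1)\mathbf q_{n+1}$ directly and uses the first-row equation together with $\sum_j q_j=0$ to force the hub coordinate of the other eigenvectors to vanish.
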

\begin{proof}[Proof of Lemma \ref{lem: wheelgraph}]
Let's assign $n=1$ to the center node. The $(n+1)\times (n+1)$ laplacian matrix is $$L=\begin{bmatrix}
n & -1 & -1 & -1 & \cdots & -1 & -1 \\
-1 & 3 & -1 & 0 & \cdots & 0 & -1 \\
-1 & -1 & 3& -1& \cdots & 0 & 0 \\
-1 & 0 & -1 & 3 & \cdots & 0 & 0 \\
\vdots & \vdots & \vdots & \vdots & \vdots &\vdots & \vdots \\
-1 & -1 & 0 & 0 & \cdots & -1 & 3 
\end{bmatrix}$$
The eigenvalues of $W_{n+1}$ are calculated in \cite{Mieghem:2011:GSC:1983675}. The eigenvector for $\lambda_{n+1}$ is directly shown to satisfy the eigenvalue equation $L\mathbf q_{n+1}=(n+1)\mathbf q_{n+1}$. For $\lambda_k\neq \lambda_{n+1}$, the first row 
\begin{equation*}
n q_1-\sum_{j=2}^{n+1}q_k=\lambda_{k}q_1 \Leftrightarrow  n q_1+q_1=\lambda_k q_1 \Leftrightarrow q_1=0
\end{equation*} by virtue of $\sum_{j=1}^{n+1}q_j=0$. Since $k$ is arbitrary, all the eigenvectors, other than $\mathbf q_1$ and $\mathbf q_{n+1}$, have their first element to be zero. As for the rest elements of $\mathbf q_k$, these are calculated from the $n\times n$ submatrix of $L$, after one removes the first row and column. This sub-matrix is circulant $\{\mathbf q_k\}_{k=2}^{n}$ as in the statement of the Lemma consist a set of orthonormal eigenvectors (see also \cite{davis94}).
\end{proof}
Note that $\mathbf q_k$ are taken to be complex but without loss of generality since $(\mathbf c^T \mathbf q_k)^2=(\mathbf c^T\mathbf q_k)(\mathbf c^T\mathbf q_k)^*$. This choice simplifies greatly the calculation of the steady state variance. The stability condition for the wheel graph scheme is  $(n+1)\tau<\frac{\pi}{2}$.

The deviation of the $i^{th}$ component from the average yields the steady-state variance:
\begin{equation*}
\bar{\sigma}_i^{2}=\begin{cases}
n\tau f\big((n+1)\tau\big), & i=1\\
\frac{\tau}{n}\sum_{k>1}f(\lambda_k\tau)+\frac{\tau}{n} f\big((n+1)\tau\big), & i = 2,\dots,n+1
\end{cases}
\end{equation*}

The symmetry of $W_{n+1}$ distinguishes between the variance of the central node $i=1$ and the rest $n$ nodes that surround it. We note that for $i\neq 1$
\begin{equation*}\begin{split}
\frac{\bar{\sigma}_i^2}{\bar{\sigma}_1^{2}}&=\frac{1}{n^2}+\sum_{k=2}^{n}\frac{(n+1)}{n^2}\frac{\cos(\lambda_k\tau)}{\lambda_k(1-\sin(\lambda_k\tau))}\frac{1-\sin((n+1)\tau)}{\cos((n+1)\tau)}\\
&<\frac{1}{n^2}+\frac{n^2-1}{n^2} \frac{\cos(\lambda^*\tau)}{1-\sin(\lambda^*\tau)}\frac{1-\sin((n+1)\tau)}{\cos((n+1)\tau)}
\end{split}
\end{equation*} where $\lambda^*$ is the maximizer over $\{\lambda_k\}_{k=2}^{n}$ of $\frac{\cos(\lambda\tau)}{1-\sin(\lambda\tau)}$ so that elementary algebra yields $\frac{\bar{\sigma}_i^2}{\bar{\sigma}_1^{2}}<1$ if $$\frac{\cos(\lambda^*\tau)}{1-\sin(\lambda^*\tau)}\frac{1-\sin((n+1)\tau)}{\cos((n+1)\tau)}\leq 1.$$ The latter inequality holds true in view of $\frac{\partial}{\partial \lambda}\frac{\cos(\lambda\tau)}{1-\sin(\lambda\tau)}>0.$ Consequently the variance of the central node is larger than the variance of the rest ones.


\subsection*{The Complete Bipartite Graph, $\mathbf K_{n_1,n_2}$.}
This graph consists of two sets of nodes denoted as $[n_1]$ and $[n_2]$, where, without loss of generality, $n_1\leq n_2$, and $n=n_1+n2$. Each node of one set is connected to all other nodes of the other set. There are no links between nodes of a same set. The following Lemma describes the eigenspaces of $K_{n_1,n_2}$.
\begin{lem}\label{lem: cbipeigen} The eigenstructure of $K_{n_1,n_2}$ is
\begin{itemize}
\item[a.] $\lambda_1=0$ with multiplicity one and eigenvector $\mathbf q_1=\frac{1}{\sqrt{n}}\mathbbm 1$
\item[b.] $\lambda_{k}=n_1$ with multiplicity $n_2-1$ and eigenvectors, $$\mathbf q_k=\frac{1}{||\mathbf w_k||}\mathbf w_k \hspace{0.2in} \text{for} \hspace{0.2in}\mathbf w_k=\frac{1}{i-1}\sum_{j=n_1+1}^{n_1+k-1}\mathbf e_j-\mathbf e_{n_1+k}, $$ with $k=2,\dots,n_2.$
\item[c.] $\lambda_k=n_2$ with multiplicity $n_1-1$ and eigenvectors, 
\begin{equation*}
\mathbf q_{k}=\frac{1}{||\mathbf w_k||}\mathbf w_k \hspace{0.2in} \text{for} \hspace{0.2in}\mathbf w_k=\frac{1}{k-n_2}\sum_{j=1}^{k-n_2}\mathbf e_j-\mathbf e_{k-n_2+1},\end{equation*} with $k=n_2+1,\dots,n-1.$

\item[d.] $\lambda_{n}=n_1+n_2$  with multiplicity 1 and eigenvector $\mathbf q_n=\big[\sqrt{\frac{n_2}{n n_1}}\mathbbm 1_{n_1}^T, \sqrt{\frac{n_1}{n n_2}}\mathbbm 1_{n_2}^T\big]^T$ .
\end{itemize}
\end{lem}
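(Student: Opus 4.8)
The plan is to exploit the block structure of the Laplacian. Labeling the $n_1$ nodes of the first part before the $n_2$ nodes of the second part, Assumption~\ref{assum0} together with the definition \eqref{eq: laplacian} gives
$$L=\begin{pmatrix} n_2 I_{n_1} & -\mathbf 1_{n_1}\mathbf 1_{n_2}^T \\ -\mathbf 1_{n_2}\mathbf 1_{n_1}^T & n_1 I_{n_2}\end{pmatrix},$$
because every node in the first part has degree $n_2$, every node in the second part has degree $n_1$, each cross edge contributes a $-1$, and there are no intra-part edges. Reading the four eigenvalue families off this form is the whole strategy.

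First I would verify the two ``interior'' families directly. Take any $\mathbf x$ supported on the second part (zero on the first) with $\mathbf 1_{n_2}^T\mathbf x=0$; then the off-diagonal block annihilates it, since $-\mathbf 1_{n_1}(\mathbf 1_{n_2}^T\mathbf x)=0$, while the diagonal block scales it by $n_1$, so $L\mathbf x=n_1\mathbf x$. The space of such vectors has dimension $n_2-1$, yielding $\lambda=n_1$ with multiplicity $n_2-1$. The symmetric argument gives $\lambda=n_2$ with multiplicity $n_1-1$ for zero-sum vectors supported on the first part. The explicit contrasts $\mathbf w_k$ in parts (b) and (c) lie in precisely these zero-sum subspaces, hence are eigenvectors by construction. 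Next I would treat the two-dimensional subspace of vectors constant on each part: writing such a vector as $a$ on the first part and $b$ on the second reduces the eigenvalue equation to the $2\times2$ system $n_2(a-b)=\lambda a$ and $n_1(b-a)=\lambda b$, whose only solutions are $\lambda=0$ (forcing $a=b$, giving $\mathbf q_1=\tfrac{1}{\sqrt n}\mathbf 1$) and $\lambda=n_1+n_2=n$ (forcing $n_1a+n_2b=0$). Normalizing the latter direction produces $\mathbf q_n$, and connectivity guarantees $\lambda_1=0$ is simple.

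The main work here is bookkeeping and orthogonality, not any hard calculation. I would confirm that the multiplicities sum correctly, $1+(n_2-1)+(n_1-1)+1=n$, so no eigenvalue is missed and the four families span $\mathbb R^n$. Mutual orthogonality is then essentially automatic: the $\lambda=n_1$ and $\lambda=n_2$ families have disjoint supports and are each orthogonal to the block-constant plane, within which $\mathbf q_1\perp\mathbf q_n$. The only mildly delicate point I anticipate is checking that the listed $\mathbf w_k$ form a \emph{mutually orthogonal} (Helmert-type) basis of each zero-sum subspace, which I would settle by the standard contrast identity; however, by Lemma~\ref{prop: uniqueness} the particular orthonormal basis chosen inside each repeated eigenspace is immaterial for the variance formulas this lemma ultimately feeds.
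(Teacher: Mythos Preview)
Your proposal is correct and follows essentially the same route as the paper: both exploit the block form of $L$ for $K_{n_1,n_2}$, identify the zero-sum vectors supported on a single part as the $\lambda=n_1$ and $\lambda=n_2$ eigenspaces, and handle the block-constant plane separately to recover $\lambda=0$ and $\lambda=n$. The only cosmetic difference is that the paper explicitly runs Gram--Schmidt on the contrasts $\mathbf e_{n_1+1}-\mathbf e_{n_1+k}$ to produce the stated $\mathbf w_k$, whereas you recognize them as Helmert-type vectors and (rightly) invoke Lemma~\ref{prop: uniqueness} to sidestep that computation.
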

\begin{proof}[Proof of Lemma \ref{lem: cbipeigen}]
For the eigenvalues of $K_{n_1,n_2}$ we refer to \cite{Mieghem:2011:GSC:1983675}. For the eigenvectors we work as follows. From \cite{Mieghem:2011:GSC:1983675} (or from simple observation) the laplacian matrix of $K_{n_1,n_2}$ is of the form
\begin{equation*}L=\begin{bmatrix}
n_2 I_{n_1} & -J_{n_1, n_2} \\
-J_{n_2, n_1} & n_1 I_{n_2}
\end{bmatrix}
\end{equation*} where $J_{\alpha,\beta}$ is the $\alpha\times \beta$ matrix of ones. Now, $$L\mathbf q_n=\lambda_n \mathbf q_n $$ with the structure $\mathbf q_n=[\mathbf q_{n}^{(1)}, \mathbf q_{n}^{(2)}]^T$ where $\mathbf q_{n}^{(1)}\in \mathbb R^{n_1}$ and $\mathbf q_{n}^{(2)}\in \mathbb R^{n_2}$. This  implies that the the eigenvalue equation is equivalent to $$n_2\mathbf q_{n}^{(1)}-J_{n_1, n_2}\mathbf q_{n}^{(2)}=(n_1+n_2)\mathbf q_{n}^{(1)}$$ and $$n_1\mathbf q_{n}^{(2)}-J_{n_1,n_2}\mathbf q_{n}^{(1)}=(n_1+n_2)\mathbf q_{n}^{(2)},$$  that in turn breaks down to $\mathbf q_{n}^{(1)}=-\frac{1}{n_1}J_{n_1, n_2}\mathbf q_{n}^{(2)}$ and  $\mathbf q_{n}^{(2)}=-\frac{1}{n_2}J_{n_2, n_1}\mathbf q_{n}^{(1)}$. Either equation implies that the elements of $\mathbf q_{n}^{(1)}$ must be identical. Likewise for the elements of $\mathbf q_{n}^{(2)}$. Then if $\mathbf q_{n}^{(1)}$ can be written as  $\mathbf q_{n}^{(1)}=(q,\dots,q)^T,~q>0$ then $\mathbf q_{n}^{(2)}=-\frac{n_1}{n_2}(q,\dots,q)^T$ with the corresponding dimensions. The normality condition then implies $\mathbf q_n^{(1)}= \sqrt{\frac{n_2}{n_1 n}} \mathbbm 1_{n_1}$ and $\mathbf q_n^{(2)}= -\sqrt{\frac{n_1}{n_2 n}}\mathbbm 1_{n_2} $ 

When $\lambda_k=n_1$ for $k=2\dots,n_2$ the eigenvalue equation breaks down to 
\begin{equation*}
\begin{bmatrix}
(n_2-n_1)I_{n_1} & -J_{n_1, n_2} \\
-J_{n_2, n_1} & \mathbf 0_{n_2, n_2}
\end{bmatrix}\begin{bmatrix}
\mathbf q_{i}^{(1)} \\ \mathbf q_{i}^{(2)}
\end{bmatrix}=0
\end{equation*} so that a choice of linearly independent eigenvectors could be: $\mathbf e_{n_1+1}-\mathbf e_{n_1+k},~k=2,\dots,n_2$. The Gram-Schmidt process yields the orthonormal basis: \begin{equation*}
\begin{split}
\mathbf q_k&=\frac{1}{||\mathbf w_k||}\mathbf w_k \hspace{0.2in} \text{for} \hspace{0.2in}\mathbf w_k=\frac{1}{i-1}\sum_{j=n_1+1}^{n_1+k-1}\mathbf e_j-\mathbf e_{n_1+k},
\end{split}
\end{equation*}  with $k=2,\dots,n_2.$

When $\lambda_k=n_2$, $k=n_2+1,\dots,n-1$ a choice of linearly independent eigenvectors could be: $\mathbf e_{1}-\mathbf e_{k},~k=2,\dots,n_1$. The Gram-Schmidt process yields the orthonormal basis for the space spanned by the eigenvalue $n_2$
\begin{equation*}
\mathbf q_{k}=\frac{1}{||\mathbf w_k||}\mathbf w_k \hspace{0.2in} \text{for} \hspace{0.2in}\mathbf w_k=\frac{1}{k-n_2}\sum_{j=1}^{k-n_2}\mathbf e_j-\mathbf e_{k-n_2+1}, 
\end{equation*}  with $k=n_2+1,\dots,n-1$.
\end{proof}

Let now $\mathbf c=\mathbf m_i$ for $i\in [n]$ the output vector on the deviation of the $i^{th}$ agent from the average. We take cases.

For the eigenvalue $\lambda_k=n_1$ with multiplicity $n_2-1$:

\noindent[1.] $i=1,\dots,n_1:$ then $\mathbf m_i^T\mathbf q_k=0$ for $k=2,\dots,n_2$, hence $\sum_{k=2}^{n_2}(c^Q_k)^2=0$

\noindent[2.] $i=n_1+1,\dots,n:$ then $\mathbf c^T\mathbf q_k=\frac{w_k^{(i)}}{||\mathbf w_k||}$ and the sum $\sum_{k=2}^{n_2}(c^Q_k)^2$ yields
\begin{equation*}
\sum_{k=2}^{n_2}(c^Q_k)^2=\begin{cases}
 \sum_{j=1}^{n_1-1} \bigg(\sqrt{\frac{j}{j+1}}\frac{1}{j}\bigg)^2, i=n_1+1,n_1+2 \\
\bigg(\sqrt{\frac{l-1}{l}}\bigg)^2+ \sum_{j=l}^{n_1-1}\bigg( \sqrt{\frac{j}{j+1}}\frac{1}{j}\bigg)^2, i=n_1+l,\\  \hspace{1.8in} l=3,...,n_2-1
\end{cases}
\end{equation*}
Note that all the non-zero sums are equal to $1-\frac{1}{n_1}$ and independent of $i$. For the eigenvalue $\lambda_k=n_2$ with multiplicity $n_1-1$:

\noindent[1.]  $i=1,\dots,n_1:$ then $\mathbf c^T\mathbf q_k=\frac{w_k^{(i)}}{||\mathbf w_k||}$ and the sum $\sum_{k=n_2+1}^{n-1}(c^Q_k)^2$ yields
\begin{equation*}
\sum_{k=n_2+1}^{n-1}(c^Q_k)^2=\begin{cases}
 \sum_{j=1}^{n_2-1} \bigg(\sqrt{\frac{j}{j+1}}\frac{1}{j}\bigg)^2, \hspace{0.7in} i=1,2 \\
\bigg(\sqrt{\frac{l-1}{l}}\bigg)^2+ \sum_{j=l}^{n_2-1}\bigg( \sqrt{\frac{j}{j+1}}\frac{1}{j}\bigg)^2,  i=l,\\  \hspace{1.8in} ~l=3,...,n_1
\end{cases}
\end{equation*}

\noindent[2.] $i=n_1+1,\dots,n:$ 
then $\mathbf m_i^T\mathbf q_k=0$ for $k=n_2+1,\dots,n-1$, hence $\sum_{k=n_2+1}^{n-1}(c^Q_k)^2=0$

but note that all the non-zero sums are equal to $1-\frac{1}{n_2}$ and independent of $i$. Finally $(\mathbf m_i^T\mathbf q_n)^2=\frac{n_2}{n_1 n}$ for $i=1,\dots,n_1$ and  $(\mathbf m_i^T\mathbf q_n)^2=\frac{n_1}{n_2 n}$ for $i=n_1+1,\dots,n$. 
\begin{equation*}
\bar{\sigma}^2_i=\begin{cases}
\big(1-\frac{1}{n_2}\big)\tau f(n_2\tau)+\frac{n_2}{n_1 n}\tau f(n\tau), & i=1,\dots,n_1\\
\big(1-\frac{1}{n_1}\big)\tau f(n_1\tau)+\frac{n_1}{n_2 n}\tau f(n\tau), & i=n_1+1,\dots,n. \\
\end{cases}
\end{equation*}

Our objective now is to determine which of the two groups $G_1=\{1,\dots,n_1\}$, $G_2=\{n_1+1,\dots,n\}$ attains the largest steady state variance. Numerical explorations yield the following cases:

\noindent[1.] $n_1=1$. Then $G_1=\{1\}$ and this is star graph topology. In this case it is easy to verify that $\bar{\sigma}_1^2>\bar{\sigma}_j^2$ for $j\neq 1$. 
\noindent[2.]  $n_1>1$. Here $G_1$ seizes to be a singleton. It can be shown that there is $\tau^*$ such that 
\begin{equation*}
\bar{\sigma}_{G_1}^2(\tau)\leq \bar{\sigma}_{G_2}^{2}(\tau)~\hspace{0.1in}\text{for}~\tau\in[0,\tau^*]
\end{equation*} and 
\begin{equation*}
\bar{\sigma}_{G_1}^2(\tau)\geq \bar{\sigma}_{G_2}^{2}(\tau)~\hspace{0.1in}\text{for}~\tau\in\big(\tau^*,\frac{\pi}{2n}\big).
\end{equation*}
To prove the claim above, consider the expression $$I(x,y,\tau)=\bigg(1-\frac{1}{x}\bigg)\tau f((x+y)\tau)+\frac{x}{y(x+y)}\tau f((x+y)\tau).$$
Note that $I(x,y,0)=\big(1-\frac{1}{x}\big)\frac{1}{2x}+\frac{x}{2y(x+y)^2}$. For $x=y+k$ (for $x$ as in $n_2$ that is greater or equal to $n_1$), the expression
\begin{equation*}\begin{split}
&I(y+k,y,0)-I(y,y+k,0)=-\frac{1}{2}\frac{(2y^3+(3k-5)y^2+(k^2-5k)y-k^2)k}{(2y+k)y^2(y+k)^2}
\end{split}
\end{equation*} is strictly negative for $y>1$ and $k>1$. In addition it is easy to check that $\frac{\partial}{\partial \tau}I(x,y,\tau)>0$ for $\tau$ within the domain of definition (equivalently one can plot $\frac{\cos(z)}{1-\sin(z)}$ and observe that it is monotonically increasing for $z\in (0,\pi/2)$). Finally for $\tau\rightarrow \frac{\pi}{2n}$ we can see that $n_2\geq n_1$ implies that $I(n_2,n_1,\tau)>I(n_1,n_2,\tau)$. This concludes the proof of our claim.

\subsection*{Graph path of $n-1$ hops, $\mathbf P_n$} The path graph over $n$ nodes has $n-1$ edges (hops) and its eigenstructure is $\lambda_k=2\big(1-\cos\big(\frac{\pi (k-1)}{n}\big)\big),~k=1,\dots,n$ and $\mathbf q_1=\frac{1}{\sqrt{n}}\mathbbm 1$, $\mathbf q_k=(q_k^{(1)},\dots,q_k^{(n)})$ with $q_{k}^{(l)}=\sqrt{\frac{2}{n}}\cos\big(\frac{\pi k}{2n}(2l-1)\big)$ for $k=2,\dots,n$, consist an orthonormal basis, \cite{Mieghem:2011:GSC:1983675}. The $i^{th}$ component's deviation from the average, has a steady state variance 
\begin{equation*}
\bar{\sigma}_i^{2}=\frac{2\tau}{n}\sum_{k=2}^{n}\cos^2\bigg(\frac{\pi k}{2n}(2i-1)\bigg)f(\lambda_k\tau),
\end{equation*}
provided that $\tau<\frac{\pi}{4}$. In this case symmetry implies $\bar{\sigma}_i^2=\bar{\sigma}_{n-i+1}^2$ and $\frac{n}{2}$ or $\frac{n+1}{2}$ group of nodes with identical risk values.

\subsection*{Ring graph, $\mathbf R_n$}
The eigenstructure of the ring graph is $\lambda_1=0$ with eigenvector $\mathbf q_1=\frac{1}{\sqrt{n}}\mathbbm 1$ and $\lambda_k=2-2\cos\big(\frac{2\pi(k-1)}{n}\big)$ with eigenvector $\mathbf q_k=(1,e^{i\frac{2\pi(k-1)}{n}},e^{i\frac{2\pi(k-1)}{n}2},\dots,e^{i\frac{2\pi(k-1)}{n}(n-1)}\big)$ for $k=2,\dots,n$. The $i^{th}$ component's deviation from the average, has a steady state variance 
\begin{equation*}
\bar{\sigma}_i^{2}=\tau\sum_{k=2}^{n}f(\lambda_k\tau)
\end{equation*}
for every $i=1,\dots,n$.

\bibliographystyle{plain}    
\bibliography{bibliography} 
\vspace{-0.9cm} 

\end{document}